\documentclass[11pt]{article}

\usepackage{hyperref}
\usepackage{microtype,amssymb,latexsym,graphicx, xspace, url, xcolor, subfigure}
\usepackage{amsmath,amsthm}
\usepackage[margin=1.25in]{geometry}
\usepackage{mathpazo}
\usepackage{calrsfs}
\usepackage{cite}
\usepackage{comment}
\usepackage{enumerate}
\usepackage{multirow}

\long\def\oldtext#1{}

\def\reals{{\mathbb R}}
\def\sphere{{\mathbb S}}
\def\Lspace{{\mathbb L}}

\def\Sspace{{\mathbb S}}
\def\Xspace{{\mathbb X}}

\def\ph{\varphi}
\def\cell{\tau}
\def\ceil#1{\lceil #1 \rceil}

\def\eps{{\varepsilon}}
\def\bd{{\partial}}
\def\A{\mathcal{A}}
\def\B{\mathcal{B}}
\def\C{\mathcal{C}}
\def\E{\mathcal{E}}
\def\F{\mathcal{F}}

\def\L{\mathcal{L}}
\def\P{\mathcal{P}}
\def\R{\mathcal{R}}
\def\S{\mathcal{S}}
\def\T{\mathcal{T}}
\def\U{\mathcal{U}}

\def\EE{\mathsf{E}}

\def\QQ{\mathsf{Q}}

\def\XX{\mathsf{X}}
\def\NN{\mathsf{NN}}
\def\DDelta{\boldsymbol{\Delta}}
\def\pp{\mathbf{p}}
\def\xx{\mathbf{x}}
\def\yy{\mathbf{y}}
\def\tO{t_o}
\def\tQ{t_q}

\def\cyl{C}
\def\CC{{\mathbf C}}

\def\test{{\mathbf Q}}
\def\Z{\mathcal{Z}}
\def\gcyl{\gamma}
\def\nn{\mathbf{n}}

\def\dist{\mathop{\mathsf{d}}}

\def\VeD#1{{#1}^{\mbox{\raisebox{4pt}{\tiny$\|$}}}}

\def\x24{{4}}

\def\dist{{\rm dist}}
\def\etal{\textsl{et~al.}}
\def\M{\mathsf{M}}

\def\cigar{\Lambda}

\newtheorem{theorem}{Theorem}[section]
\newtheorem{lemma}[theorem]{Lemma}
\newtheorem{corollary}[theorem]{Corollary}

\theoremstyle{remark}

\def\subsecref#1{Section~\ref{subsec:#1}}
\def\secref#1{Section~\ref{sec:#1}}
\def\thmref#1{Theorem~\ref{thm:#1}}
\def\xthmref#1{Thm~\ref{thm:#1}}
\def\lemref#1{Lemma~\ref{lem:#1}}
\def\figref#1{Figure~\ref{fig:#1}}
\def\eqref#1{(\ref{eq:#1})}

\makeatletter

\long\def\@makecaption#1#2{
   \vskip 10pt
   \setbox\@tempboxa\hbox{{\footnotesize {\bf #1.} #2}}
   \ifdim \wd\@tempboxa >\hsize         
       {\footnotesize {\bf #1.} #2\par}
     \else                              
       \hbox to\hsize{\hfil\box\@tempboxa\hfil}
   \fi}
\makeatother

\begin{document}

\title{Point and Line Nearest-Neighbor Searching in 3-Space\thanks{%
    A preliminary partial presentation of the results in this paper appears in \cite{VD1, VD2}.
    Work by Pankaj Agarwal has been partially supported by NSF grants CCF-20-07556 and CCF-22-23870 and by a US-Israel Binational Science Foundation Grant 2022131.
    Work by Esther Ezra has been partially supported by Israel Science Foundation Grant 800/22,
    and by a US-Israel Binational Science Foundation Grant 2022131.
    Work by Micha Sharir has been partially supported by Israel Science Foundation Grant 495/23.}}
\author{Pankaj K. Agarwal\thanks{
    Department of Computer Science, Duke University, Durham, NC 27708, USA;
    {\sf pankaj@cs.duke.edu,
      https://orcid.org/0000-0002-9439-181X}}
  \and
  Esther Ezra\thanks{
    School of Computer Science, Bar Ilan University, Ramat Gan, Israel;
    {\sf ezraest@cs.biu.ac.il,
      https://orcid.org/0000-0001-8133-1335}}
  \and
  Micha Sharir\thanks{
    School of Computer Science, Tel Aviv University, Tel Aviv, Israel;
    {\sf michas@tauex.tau.ac.il,
      http://orcid.org/0000-0002-2541-3763}}
}

\maketitle

\begin{abstract}
This paper presents data structures for nearest-neighbor (NN) searching problems involving 
points, lines, segments, and triangles in $\reals^3$, achieving significantly better performance
than the previously best-known results for these problems. For example, we present a linear-size 
data structure for answering NN queries with lines or segments amid $n$ points in $\reals^3$
with $O^*(n^{1/2})$ query time (where the $O^*(\cdot)$ notation hides subpolynomial factors). 
We also present a data structure of $O^*(n^4)$ size that answers such queries in 
$O^*(1)$ time. For the converse problem, in which we seek the nearest neighbor of a query point 
amid $n$ lines, segments, or triangles in $\reals^3$, we present a linear-size data structure 
with $O^*(n^{2/3})$ query time. These results constitute a 
significant improvement over previous solutions. We obtain improved solutions for the two
extreme regimes of (near-)linear storage and of fast query time. These results also yield
trade-off bounds between the query time and the size of the data structure.
Our results rely on several combinatorial and algorithmic results on arrangements of surfaces in 
$\reals^3$ and $\reals^4$, particularly on recent results on vertical decompositions of substructures
in such arrangements established by the authors~\cite{VD1}.
\end{abstract}

\section{Introduction}
\label{sec:intro}

The general \emph{nearest-neighbor searching} problem can be formulated as follows: Given a set $\S$ 
of input objects, often referred to as {\em sites}, in $\reals^d$, for some $d\ge 2$, and a 
distance function $\rho$, the goal is to preprocess $\S$ into a data structure that can efficiently 
find the object of $\S$ that lies closest to a query object $q$ (point, line, segment, or triangle)
under the distance function $\rho$.
In its most classical form, referred to as the \emph{post-office problem}~\cite{Kn73},
both the input sites and the queries are points in ${\reals}^2$ and $\rho$ is the Euclidean metric. 
Because of its wide-ranging applications across many disciplines, nearest-neighbor searching 
has been extensively studied, for about half a century, not only in computational geometry 
but also in many fields, such as database systems, information retrieval, GIS, machine learning, 
computer vision, and natural language processing. The problem has been extended in numerous ways,
e.g., to instances
where the sites and/or the queries may come from families of more general geometric objects, 
and $\rho$ may be an arbitrary metric. Since NN searching is hard (i.e., inefficient) in high 
dimensions, much of the work has focused on computing approximate nearest neighbors for $d$ large.
We refer to a few comprehensive surveys on nearest-neighbor searching, for details on 
proximity problems of this kind and their solutions~\cite{AI-17,AM-04,MM-17,PMN15,SDI}.

In this paper, we focus on answering \emph{exact} NN queries where both the input sites and the query 
objects are points, lines, segments, or triangles in $\reals^3$. For a closed subset $X\subset\reals^3$ 
and a point $p \in \reals^3$, we define $\dist(p,X) = \min_{x\in X} \|p-x\|$, where $\|\cdot\|$
denotes the Euclidean norm, and for two closed subsets $X,Y \subset\reals^3$, we define 
\[
\dist(X,Y) = \min_{x\in X} \dist(x,Y) = \min_{x\in X,\;y\in Y} \dist(x,y) .
\]
For a set $\S$ of geometric objects and for a query object $q$, we define 
\[
\NN(q,\S) = \arg\min_{\sigma\in \S} \dist(q,\sigma),
\]
that is, the nearest neighbor of $q$ in $\S$. Our goal is to preprocess $\S$ into a data structure 
so that $\NN(q,\S)$, for a query object $q$, can be computed efficiently. This general goal
bifurcates into many specific variants, depending on the types of the query and the input objects,
as well as for the storage allocated for the structure. See below.

\subsection{Related work}

As mentioned above, NN searching, especially when the input sites and query objects are points, 
has been studied extensively. It is
beyond the scope of this paper to review all the related work on nearest-neighbor searching (see the aforementioned
surveys for details), so we focus on results relevant to this paper, involving NN-searching amid
points, and also amid lines, segments, or triangles, in three (and higher) dimensions.

Suppose $\S$ is a set of $n$ points in $\reals^d$. The set $\S$ can be preprocessed, in $O(n\log n)$ time, into 
a linear-size data structure so that an NN query for a query point under the Euclidean metric 
can be answered in $O^*(n^{1-1/\ceil{d/2}})$ time~\cite{Ag:rs} (where, as in the abstract, the $O^*(\cdot)$
notation hides subpolynomial factors). To answer a query in $O(\log n)$ time, the best known 
data structure uses $O^*(n^{\ceil{d/2}})$ space~\cite{Ag:rs}.
By combining these two data structures, for a given parameter $s \in [n, n^{\ceil{d/2}}]$, 
$\S$ can be preprocessed, in $O^*(s)$ time, into a data structure of size $O^*(s)$, 
so that an NN query for a point under the Euclidean metric can be answered in 
$O^*(n/s^{\ceil{d/2}})$ time. Known lower bounds suggest that these bounds are likely
close to optimal in the worst case~\cite{Ag:rs}. A similar bound (within a polylogarithmic factor)
can be achieved for NN searching under the $L_1$-metric using halfspace range-searching data structures, 
but no data structure with comparable performance bounds is known for answering NN queries under general 
$L_p$-metrics. 

We note that many data structures reduce an NN query to answering $O^*(1)$ decision 
queries where, given a query object $q$ and a parameter $\rho$, the query procedure 
determines whether any of the input sites lies within distance $\rho$ from $q$. The decision query is formulated 
as an instance of semi-algebraic range-emptiness query~\cite{AM93,AM94}.
	Roughly speaking, there are two general techniques for answering semi-algebraic range queries. The first one, based on
geometric cuttings, leads to a linear-size data structure with $O^*(n^{\alpha})$ query time where $\alpha=\max\{1-\tfrac{1}{d},1-\tfrac{1}{2d-4}\}$, while the second one, based on polynomial partitioning~\cite{Ag:rs}, yields a linear size data 
structure with $O^*(n^{1-1/d})$ query time. These techniques also allow space/query-time trade-offs~\cite{AAEKS}. 
Although the cutting-based technique is suboptimal for $d>4$, it can sometimes be adapted to yield better query time for range-emptiness queries (see, e.g.,~\cite{Ag:rs}), which is 
precisely what is needed for answering NN decision 
queries. Consequently, both methods have been used to answer NN decision queries under general metrics.

The computational hardness of exact NN queries in high dimensions has inspired the extensive work 
on approximate NN searching, where the goal is to return a point whose distance from the query point 
is at most $(1+\eps)$ times the distance to $\NN(q,\S)$, for a given parameter $\eps>0$. 
For a fixed dimension $d$, a set $\S$ of $n$ points in $\reals^d$ can be preprocessed, in 
$O(n\log n)$ time, into a linear-size data structure that can return an $\eps$-approximate 
NN of a query point in $O(\log n + 1/\eps^{(d-1)/2})$ time; see~\cite{AFM-18, AMNSW-98, Bern-93,HP-01} 
and references therein. These data structures, however, suffer from the ``curse of dimensionality'',
because of the exponential dependence on $d$. This can be remedied by using locality sensitive 
hashing (LSH) where, for a parameter $c>1$, $\S$ can be preprocessed into a data structure of 
size $O(n^{1+1/c^2+o(1)})$ so that a $c$-approximate NN of a query point under the Euclidean metric 
can be computed in $O(dn^{1/c^2+o(1)})$ expected time~\cite{AI-06}. The LSH-based approach has been 
used for answering NN queries under various other metrics; see~\cite{AI-17}. Recently, several 
graph-based approaches have been proposed to answer approximate NN queries, which build a proximity 
graph on the input sites and follow a greedy search method in the graph to compute an approximate 
NN~\cite{MY20,PKGZ24}. However, very little has been proved on the performance guarantees of these data 
structures~\cite{IX23,HRR26}.

There is also some work on NN searching when the input sites or query objects are not points. 
Roughly speaking, most of the linear-size data structures map the input sites to points in 
some high-dimensional parametric space, while $O(\log n)$ query-time data structures map the
query objects to points in a similar suitable parametric space. 
For instance, using this approach, the nearest neighbor of a point amid $n$ lines in 
$\reals^3$ can be computed in $O(\log n)$ time using a data structure of $O^*(n^3)$ size~\cite{MS}, 
or in $O^*(n^{3/4})$ time using a linear-size data structure~\cite{AM94}.
Agarwal~\etal~\cite{ARS} presented data structures for answering approximate nearest-neighbor queries
with points amid simplices in $\reals^d$: For a storage parameter $s$ and an error parameter $\eps>0$, 
they constructed a data structure of size $O^*(s/\eps^d)$, so that an $\eps$-approximate NN of a 
query point amid $n$ $k$-dimensional simplices can be computed in $O^*(n/s^{\frac{1}{k+1}})$ time. 
See also~\cite{Mah15,AM21,AIK09}.
However, such an approach often results in suboptimal bounds (more precisely, in
bounds that are strongly suspected not to be optimal).

\subsection{Our contributions}

We present efficient data structures and algorithms for various proximity problems involving points 
and lines (or segments, or triangles) in $\reals^3$ that yield considerably improved performance bounds than the previously best known ones. Some of our data structures answer NN queries 
directly, while others answer decision queries, as described above, and use the 
parametric search technique to turn this decision procedure into a mechanism for answering NN queries. 
Some of our data structures can also answer \emph{fixed-radius-neighbor} queries, namely, for a 
query object $q$ and a parameter $\rho>0$, report all input sites (or just a single site) that lie 
within distance $\rho$ from $q$.

Our techniques formulate NN decision queries either as semi-algebraic range-emptiness queries
over a set of points or as point-location queries in a family of semi-algebraic sets. 
We primarily employ a geometric-cutting based approach to construct data structures for answering NN 
decision queries. Although our overall framework is
similar to that of~\cite{Ma:rph,ShSh}, several technical challenges 
arise in achieving the desired performance bounds. We address these challenges by exploiting the underlying 
geometric structure of the problem and by leveraging both combinatorial and algorithmic
results on arrangements of surfaces in $\reals^3$ and $\reals^4$, together with several new technical ideas, as summarized below.
Of particular relevance are the recent results by the authors~\cite{VD1} 
on vertical decompositions in 3- and 4-dimensional arrangements---see Section~\ref{sec:prelim}.

Table~\ref{table:results} summarizes our results and contrasts them with the previously best-known ones. 
We next expand this summary and describe the main results and techniques.

\begin{center}
 \begin{table}
	 \small
		 \begin{tabular}{|c|c|c|c|c|c|c|c|}
		\hline
			 \multirow{2}{*}{Query} & \multirow{2}{*}{Sites} & \multicolumn{3}{c|}{New results}&\multicolumn{3}{c|}{Old results}\\
		\cline{3-8}
		&&$S(n)$ & $Q(n)$ & Ref.&$S(n)$&$Q(n)$&Ref.\\
                \hline
	Line & \multirow{2}{*}{Points} & \multirow{2}{*}{$O(n)$} & 
	\multirow{2}{*}{$O^*(n^{1/2})$} & \xthmref{NNLP}&
	\multirow{2}{*}{$O(n)$}&\multirow{2}{*}{$O^*(n^{2/3})$}&\multirow{2}{*}{\cite{MP}}\\
	Segment &&&&\xthmref{NNSP}&&&\\
		\hline
			 Line&\multirow{3}{*}{Points}&\multirow{3}{*}{$O^*(n^4)$} &\multirow{3}{*}{$O^*(1)$} & \xthmref{line-pt-log-qtime}&\multirow{3}{*}{$O^*(n^5)$}&\multirow{3}{*}{$O^*(1)$}&\multirow{3}{*}{\cite{AAEZ}}\\
			 Segment &&&&\xthmref{tr-pt}&&&\\
			 Triangle &&&&\xthmref{tr-pt}&&&\\
		\hline
			 Line &\multirow{3}{*}{Points}& \multirow{3}{*}{$O^*(s)$} &\multirow{2}{*}{$O^*((\tfrac{n}{s^{1/4}})^{2/3})$}&\xthmref{n4}&\multirow{3}{*}{$O^*(s)$}&\multirow{3}{*}{$O^*((\tfrac{n}{s^{1/5}})^{5/6})$}&\multirow{3}{*}{\cite{AAEKS}}\\
			 Segment &&&&\xthmref{tr-pt-tradeoff}&&&\\\cline{1-1}\cline{4-4}
			 Triangle &&&$O^*((\tfrac{n}{s^{1/4}})^{8/9})$& \xthmref{tr-pt-tradeoff}&&&\\
		\hline
			 \multirow{3}{*}{Point} & Lines &\multirow{3}{*}{$O(n)$}&\multirow{3}{*}{$O^*(n^{2/3})$} &\xthmref{NNPL}&\multirow{3}{*}{$O(n)$}&\multirow{3}{*}{$O^*(n^{3/4})$}&\multirow{3}{*}{\cite{ShSh}}\\
	&Segments &&&\xthmref{NNPT}&&&\\
	&Triangles &&&\xthmref{NNPT}&&&\\
		\hline
\multirow{2}{*}{Line}&\multirow{2}{*}{Lines} & $O^*(n^4)$ & 
			 $O^*(1)$ &\multirow{2}{*}{\xthmref{n4lines}}&$O^*(n^5)$&$O^*(1)$&\multirow{2}{*}{\cite{AAEZ}}\\
			 &&$O^*(s)$& $O^*(\tfrac{n}{s^{1/4}})$&&$O^*(s)$&$O^*((\tfrac{n}{s^{1/5}})^{15/16})$&\\
		\hline
			 Segment & Segments &\multirow{2}{*}{$O^*(s)$}&$O^*((\tfrac{n}{s^{1/5}})^{15/16}+$&\xthmref{seg2segs}&\multirow{2}{*}{$O^*(s)$}&\multirow{2}{*}{$O^*(n/s^{1/6})$}&\multirow{2}{*}{\cite{AAEKS}}\\
			 Triangle & Triangles && {\hspace*{5mm}$(\tfrac{n}{s^{1/4}})^{16/15})$} & \xthmref{tri2tris}&&&\\
		\hline
	\end{tabular}
	 \caption{Summary of new results on NN searching in $\reals^3$.}
	 \label{table:results}
 \end{table}
\end{center}

 \paragraph{Nearest line-neighbor amid points} 
 (\secref{line-nn}).
Let $P$ be a set of $n$ points in $\reals^3$. We first show that $P$ can be preprocessed, in $O(n\log n)$ 
expected time, into a linear-size data structure so that, for a query line $\ell$ and a distance $\rho>0$, 
the structure can determine in $O^*(n^{1/2})$ time whether any point of $P$ lies within distance $\rho$ of $\ell$, 
or, equivalently, whether the cylinder of radius $\rho$ with $\ell$ as its axis contains any point of $P$; 
if the answer is {\sc Yes}, the data structure also returns such a point.
Our data structure is likely to be near-optimal in view of the known lower bounds on simplex range searching in the plane~\cite{Ag:rs}.
We note that Agarwal and Ezra~\cite{AE-23} presented such a data structure for the special case where the query cylinder 
has a fixed radius, say $1$. However, their approach does not extend to cylinders of arbitrary radius.

The main challenge in developing this data structure is the construction of a 
\emph{test set}, namely, a small set of representative semi-algebraic queries (not necessarily 
cylinders) such that, if the data structure can answer those queries efficiently, then it can answer 
efficiently any cylindrical range emptiness query in $\reals^3$. We propose a test-set construction algorithm 
(in \subsecref{test-construct}) by following the approach in~\cite{ShSh} and the recent results 
on vertical decomposition of arrangements of surfaces in $\reals^3$ and $\reals^4$~\cite{VD1}; 
our data structure can also be used to report all $k$ points of $P$ lying inside a query cylinder,
in additional $O(k)$ time. 
We finally extend the data structure (in \subsecref{seg-linear}) 
to the case where the query objects are line segments (but, unfortunately, not for the case of triangles),
with the same asymptotic performance bounds.

Addressing the complementary regime of large storage and $O(\log n)$ query time, we next show 
(in \subsecref{Fast_NN1}) that $P$ can be preprocessed, in expected $O^*(n^4)$ time, into a 
data structure of $O^*(n^4)$ size, so that the nearest neighbor in $P$ of a query line can be 
reported in $O(\log n)$ time. We show that the data structure can be extended to compute the NN 
in $P$ when the query object is a segment or a triangle in $\reals^3$.
Roughly speaking, after applying some geometric transformations, we reduce the nearest-neighbor query to 
a point-location query in a sandwich region enclosed between two envelopes of trivariate functions. 
We do not know how to perform this latter task efficiently in a direct manner due to the lack 
of good bounds on the complexity of the vertical decomposition of such a sandwich region (see~\cite{KS}, 
where this is stated as a major open problem). Therefore
we use a more involved scheme that exploits the 
special geometric structure of the problem, and achieves the desired efficiency. We next extend 
this data structure (in \subsecref{TrNN}) to answer NN queries, within the same asymptotic bound, 
with segments and triangles as query objects.

Combining these two data structures, for a storage parameter
$s\in[n,n^4]$, we can construct a data structure of size $O^*(s)$, in $O^*(s)$ expected preprocessing time,
that can answer an NN query with a line or a segment in $O^*((n/s^{1/4})^{2/3})$ time, 
and for a triangle in $O^*((n/s^{1/4})^{8/9})$ time. 

\paragraph{Nearest point-neighbor amid lines, segments, and triangles} (\secref{plnn}).
Next, we consider the converse problem where the input sites are lines, segments, or triangles, 
and the query objects are points. 
We show that a set $L$ of $n$ lines in $\reals^3$ can be preprocessed, in $O^*(n)$ expected time,
into a linear-size data structure so that $\NN(L,q)$, for a query point $q\in\reals^3$, can be computed 
in $O^*(n^{2/3})$ time. As in the setup considered earlier, using
parametric-search technique we reduce an NN query to
answering a semi-algebraic range-emptiness query in 
the $4$-dimensional parametric space of lines, and we 
build a linear-size partition tree for answering this query, as in~\cite{ShSh}.
Again, the main challenge is the construction a 
small-size test set. The algorithm is similar to that in \subsecref{pt-linear}, but now uses improved results 
on the vertical decomposition of the region lying below the lower envelope of a family of the graphs of 
constant-complexity, semi-algebraic trivariate functions in $\reals^4$~\cite{VD1}.

We also note that a data structure of $O^*(n^3)$ size and $O(\log n)$ query time can be obtained using 
the recent results in~\cite{VD1}. By combining this data structure with the linear-size data structure, 
for any parameter $s\in[n,n^3]$, we can construct, in $O^*(s)$ expected time, a data structure of 
size $O^*(s)$, so that $\NN(L,q)$, for a query point $q\in\reals^3$, can be computed in $O^*(n/s^{1/3})$ time. 
We extend the data structures to answer point NN queries, within the same asymptotic bound, also for the
cases where the input sites are line segments or triangles.

\paragraph{Nearest line-neighbor amid lines} 
 (\secref{Fast_NN2}).
Next, we consider the case in which both input sites and query objects are lines. 
By adapting the data structure in~\subsecref{Fast_NN1},
we can preprocess a set $L$ of $n$ lines in $\reals^3$, in $O^*(n^4)$ expected time,
into a data structure of size $O^*(n^4)$, so that the nearest neighbor in $L$ of a query line can 
be computed in $O(\log n)$ time. In the complementary regime, 
a linear-size data structure with $O^*(n^{3/4})$ query time can be obtained by
using standard four-dimensional semi-algebraic range searching techniques~\cite{AMS}. Combining these two data
structures, we can construct a data structure of size $O^*(s)$, for any storage parameter
$s\in[n,n^4]$, so that a query can be answered in $O^*(n/s^{1/4})$ time.

\paragraph{Nearest triangle-neighbor to a query triangle} (\secref{Triangle-NN}).
Finally, we consider the problem of answering an NN query with a triangle amid a set of $n$ triangles 
in $\reals^3$ (the most general setup in our work). We break the problem into multiple cases, and 
the hardest (i.e., most expensive) case is answering an NN query with a segment amid a set of segments. 
We show that the NN decision problem for a set $\E$ of $n$ segments in $\reals^3$, which, for a given query 
segment $e$ and a distance parameter $\rho>0$, asks whether
there exists a segment of $\E$ within distance $\rho$ of $e$, can 
be formulated as a $5$-dimensional semi-algebraic range searching problem, even though the parametric 
dimension of a segment in $\reals^3$ is six. This follows 
because the \emph{reduced parametric dimension}
(in the terminology of \cite[Appendix]{AAEKS}) of both input and query segments is only five. We further 
observe that each polynomial inequality involves either only four parameters of the input segments or 
only four parameters of the query segment. Using this additional observation, we manage to obtain a slightly better 
performance than what one would get for a $5$-dimensional semi-algebraic range-searching data 
structure~\cite{AAEKS}. In particular,
for a storage parameter $s\in [n,n^5]$, we present a data structure of size and expected preprocessing
cost $O^*(s)$, so that an NN query with a triangle can be answered in 
$O^*((n/s^{1/5})^{15/16}+(n/s^{1/4})^{16/15})$ time. (The breakpoint between the two terms is
$s = n^{31/19}$, and the query time is then $O^*(n^{12/19})$.)

\paragraph{Closest line-point pair} (\secref{ann}).
We conclude this paper with an off-line problem (not listed in Table~\ref{table:results}):
Let $P$ be a set of $m$ points and $L$ a set of $n$ lines in $\reals^3$. The goal is to compute 
the closest line-point pair in $L\times P$. We describe a randomized algorithm with expected time
$O^*(m^{3/5}n^{4/5}+m+n)$ (\thmref{bcp}). The on-line data structure presented in \secref{plnn} 
yields an algorithm with a worse performance bound of $O^*(n^{6/7}m^{4/7}+m+n)$ expected time. 
Our algorithm can also be extended to compute a closest 
pair between points and line segments or pairwise-disjoint 
triangles within the same time bound. 

\section{Preliminaries}
\label{sec:prelim}

In this section, we briefly review a few concepts and results related to geometric arrangements and 
semi-algebraic range searching that will be used repeatedly throughout this paper.

\subsection{Arrangements and vertical decompositions}
\label{subsec:arr}

Let $\S$ be a family of $n$ semi-algebraic sets\footnote{%
  Roughly speaking, a semi-algebraic set in $\reals^d$ is the set of points in $\reals^d$ that 
  satisfy a Boolean formula over a set of polynomial inequalities; the complexity of a semi-algebraic 
  set is the maximum of the number of polynomials defining the set and their maximum degree.}
of constant complexity in $\reals^d$. $\S$ may consist of
$d$-dimensional regions with non-empty 
interiors or of $(d-1)$-dimensional surface patches that
are graphs of partially-defined 
semi-algebraic functions. 
 %
The \emph{arrangement} of $\S$, denoted by $\A(\S)$, is the decomposition of $\reals^d$ into 
maximal connected relatively open cells of all dimensions, so that all points within a cell 
lie in the relative interior of the same subfamily of sets of $\S$, and on the boundary of another 
such subfamily, and are disjoint from all other sets. 
See~\cite{AS00} for a detailed survey of arrangements.

\paragraph{Union of semi-algebraic sets.}
Suppose $\S$ is a collection of $d$-dimensional semi-algebraic sets in $\reals^d$.
Let $\U(\S)$ denote their union and $\C(\S)$ the (closure of the) complement of their union.
For a parameter $m \le n$, let $\psi(m)$ denote the maximum combinatorial complexity of the 
union of any subset of $\S$ of size at most $m$. It is known that 
for $d=3$, $\psi(m)=O^*(m^2)$ in many cases (e.g., when $\S$ is a set of pseudo-spheres, a set of cubes
or a set of cylinders); see~\cite{APS} for a survey of known results on the complexity of the union 
of a family of geometric objects. 
In this paper we will mostly be interested in the union of semi-algebraic sets in $\reals^3$.

\paragraph{Lower and upper envelopes.}
If $\S$ is a set of $(d-1)$-dimensional semi-algebraic sets in $\reals^d$, we assume that they are monotone 
in the $x_d$-direction, meaning that every $x_d$-parallel (`vertical') line intersects such a set at most
once, and regard each set $S_i$ as the graph of a (possibly partially defined) semi-algebraic function, 
which we also denote by $S_i$. The \emph{lower envelope} $\EE_\S$ of $\S$ is defined as 
$$
\EE_\S(x) = \min_{S\in\S} S(x),
$$ 
for $x\in\reals^{d-1}$ (where $S(x)$ is taken to be $+\infty$ when $S$ is undefined at $x$). 
The \emph{minimization diagram} of $\F$, denoted by $\M_\S$, is the projection of 
the graph of $\EE_\S$ onto the hyperplane $x_d=0$, i.e., a subdivision of $\reals^{d-1}$ (namely, 
of that hyperplane) into maximal connected cells so that the same function appears on the lower 
envelope for all points in any given cell.  The upper envelope and the maximization diagram of 
$\S$ are defined analogously. It is known that the complexity of the lower and upper 
envelopes and of the minimization/maximization diagrams is $O^*(n^{d-1})$; see~\cite{SA} for details. 

Let $\XX$ be a family of semi-algebraic sets of constant complexity in $\reals^d$. For each 
object $X_i \in \XX$, we can define a $d$-variate \emph{distance function} $f_i(x) = \dist(x,X_i)$; 
set $\F = \{f_1, \ldots, f_n\}$. Then answering an NN query $\NN(q,\XX)$, for a point $q\in \reals^d$, 
can be formulated as finding the function of $\F$ that appears on the lower envelope at $q$. 
The minimization diagram of $\F$ is the \emph{Voronoi diagram} of $\XX$ (it coincides with 
the standard Voronoi diagram when $\XX$ is a set of points).

\paragraph{Vertical decomposition.}
Vertical decomposition is a popular (and perhaps the only general-purpose) technique for decomposing 
a cell of $\A(\S)$ into constant-complexity subcells, each of which is homeomorphic to a ball of 
the appropriate dimension. These cells, referred to as ``vertical pseudo-prisms'' (prisms for short), 
are obtained by recursion on the dimension. See \cite{VD1,CEGS-91,SA} for details. Let $\Pi$ be a 
substructure of $\A(\S)$, represented as a collection of cells of $\A(\S)$ that satisfy certain 
properties (e.g., lying in the complement of the union of $\S$ if $\S$ is a set of $d$-dimensional 
semi-algebraic sets). The vertical decomposition of $\Pi$, denoted by $\VeD{\Pi}$,
is the union of the vertical decompositions of the cells of $\Pi$. 
It is known that the complexity of $\VeD{\A}(\S)$, the vertical decomposition of the entire 
arrangement $\A(\S)$, is $O^*(n^d)$ for $d \le 4$~\cite{Koltun-04a}, which is near optimal, but 
for higher values of $d$, the best known bound is $O^*(n^{2d-4})$~\cite{CEGS-91,Koltun-04a}. 
We will be mostly interested in computing the vertical decompositions of substructures of $\A(\S)$, for 
$d=3,4$, such as $\C(\S)$ in $\reals^3$ and $\M_\S, \EE_\S$ in $\reals^4$. 
In particular, we will be using the following results from~\cite{VD1}.

\begin{lemma}[\cite{VD1}]
  \label{lem:union}
(i) Let $\S$ be a collection of $n$ constant-complexity semi-algebraic regions in $\reals^3$, and let 
$\psi(m)$ be  the maximum overall complexity of the cells in $\C(\R)$, over the subsets $\R\subseteq \S$ 
of size at most $m$. Then the complexity of $\VeD{\C}(\S)$ is $O^*(n^2+\psi(n))$, and it can be 
computed in $O^*(n^2+\psi(n))$ expected time.

\smallskip
\noindent
(ii) Let $\S$ be a collection of the graphs of $n$ constant-complexity, (possibly) partially defined 
semi-algebraic trivariate functions in $\reals^3$. The complexity of the vertical decomposition of 
$\M_\S$ or of the region lying below $\EE_\S$ is $O^*(n^3)$, and the vertical decomposition
can be computed in $O^*(n^3)$ expected time.
\end{lemma}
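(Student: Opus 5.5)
This lemma is imported from \cite{VD1}, so the plan reconstructs its proof rather than deriving anything new. The common tool for both parts is a charging scheme for vertical decompositions in the style of \cite{CEGS-91,Koltun-04a}. A prism of the vertical decomposition is determined by a constant number of surfaces. It is created by a constant number of vertical ``walls'' erected from edges of the substructure, and these walls are then decomposed further. The count of prisms therefore reduces to counting certain critical configurations, namely pairs (or tuples) of boundary features that can see each other along vertical segments lying entirely inside the substructure.

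\emph{Part (i).} Work in $\reals^3$ with $\C(\S)$, the complement of the union. The vertical decomposition has two stages.

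First, from every edge of $\C(\S)$ erect a vertical wall, namely the union of maximal $z$-vertical segments through the edge that stay inside $\C(\S)$. Each wall is $2$-dimensional, and its complexity is the number of surface intersections it crosses before leaving $\C(\S)$. The edges contribute $O(\psi(n))$ in total.

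Second, decompose each wall further using $y$-vertical segments. The extra prisms are charged to pairs (edge $e$, edge or surface feature $e'$) such that $e$ and $e'$ see each other along a vertical segment $\sigma\subset\C(\S)$ whose $xy$-projections cross.

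To bound these pairs, use a Clarkson--Shor style argument. Consider a random sample $\R\subseteq\S$ of size $r$ and shrink $\sigma$. Each such configuration becomes a vertex of a \emph{lower-dimensional} structure: the arrangement restricted to the vertical plane containing $\sigma$, or a vertex defined by $O(1)$ surfaces with at most $k$ surfaces crossing $\sigma$. The number of configurations with no crossing surface is $O(\psi(r)+r^2)$. The $r^2$ term comes from the projected-crossing pairs, which are counted as pairs of $xy$-projections of boundary curves of the union, each curve lying on the $2$-dimensional ``silhouette'' arrangement. The key sub-step is to show that the number of these pairs with an empty vertical segment is $O^*(n^2)$, not $O(n^3)$. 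I would try to prove this by fixing the surface $S$ containing the lower endpoint and charging to features of a $2$-dimensional arrangement on $S$, using $\psi$ monotonicity.

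Lifting from level $0$ to level $\le k$ with Clarkson--Shor yields $O^*(n^2+\psi(n))$. For the algorithm, randomized incremental construction with conflict lists gives the same expected time.

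\emph{Part (ii).} For the minimization diagram $\M_\S$ in $\reals^3$, use the same scheme one dimension lower. Walls are built over the edges of $\M_\S$, of which there are $O^*(n^3)$ by \cite{SA}. The critical configurations are pairs of edges of $\M_\S$ whose $xy$-projections cross with an empty vertical segment between them. Each such pair is defined by $O(1)$ functions and occurs as a vertex of a trivariate envelope-type structure, so Clarkson--Shor again gives $O^*(n^3)$.

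For the region below $\EE_\S$ in $\reals^4$, the first-stage decomposition along $x_4$ produces prisms over the cells of $\M_\S$. These cells are then decomposed recursively in $\reals^3$, which is exactly the previous case. The only additional features are the projected intersections of boundary patches, and these are handled by the same charging argument.

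\textbf{Main obstacle.} The hardest step is the bound on empty vertical visibility pairs: $O^*(n^2)$ in part (i) and $O^*(n^3)$ in part (ii), as opposed to the naive $n^{d}$-type counts. This is where the specific geometry of the substructure must be used. My first attempt would be a charging argument to vertices of a $2$-dimensional, respectively $3$-dimensional, arrangement, combined with a random-sampling bootstrap on the number of crossing surfaces.
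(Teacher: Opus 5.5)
The paper imports this lemma from \cite{VD1} without proof, so your sketch has to carry the argument itself, and at the decisive step it does not. The starting reduction is standard and fine. Up to the complexity of $\C(\S)$ itself, the size of $\VeD{\C}(\S)$ is proportional to the number of pairs of edges of $\C(\S)$ joined by a $z$-vertical segment inside $\C(\S)$ over a crossing of their $xy$-projections. Likewise, the region below $\EE_\S$ reduces to $\M_\S$ after the first, $x_4$-directed, stage. But bounding these visibility pairs is the entire content of the lemma. Your sentence ``the number of configurations with no crossing surface is $O(\psi(r)+r^2)$'' is precisely statement (i) for a sample, since the level-$0$ visibility pairs are what generate the extra prisms. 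So the argument is circular at that point, as your closing paragraph essentially concedes. The routine argument gives a near-linear bound per wall (its ceiling is a lower envelope of $n$ constant-complexity partial functions along the edge). That yields only $O^*(n\,\psi(n))$ overall, i.e., $O^*(n^3)$ when $\psi(n)=O^*(n^2)$. For $\M_\S$ it yields only $O^*(n^4)$, which already follows from \cite{Koltun-04a}.

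Clarkson--Shor does not close this gap. It converts a bound on level-$0$ configurations of random samples into a bound on configurations of level at most $k$, not conversely. So ``lifting from level $0$ to level $\le k$'' contributes nothing to the level-$0$ count you need. An argument of this type needs two things:
\begin{enumerate}[(a)]
\item an explicit charging map, e.g., move the visibility segment continuously, keeping its endpoints on their defining surfaces, until a combinatorial event occurs, and show that every event is either a vertex of $\C(\R)$ for some $\R\subseteq\S$, a member of a family of size $O^*(n^2)$, or a configuration crossed by few surfaces;
\item a recurrence, obtained by applying Clarkson--Shor to the crossed configurations, whose solution is $O^*(n^2+\psi(n))$.
\end{enumerate}
You specify neither the events nor the recurrence.

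Your concrete idea, charging to the $2$-dimensional arrangement on the surface $S$ containing the lower endpoint, gives only $n\cdot O(n^2)=O(n^3)$. It would improve only if the charged features had near-linear average size per surface (e.g., features of $\bd\C(\S)\cap S$), and you do not say how a pair whose upper endpoint lies on other surfaces would be charged to such features.

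Part (ii) has the same hole. A visible pair inside a cell of $f_1$ typically involves five functions, e.g., edges on $\{f_1=f_2=f_3\}$ and $\{f_1=f_4=f_5\}$. You assert that such pairs are vertices of some ``trivariate envelope-type structure'' of size $O^*(n^3)$, but you never identify that structure. This bound is the second main theorem of \cite{VD1}, not a routine Clarkson--Shor consequence.
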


\paragraph{Geometric cuttings.}
As above, let $\S$ be a collection of $n$ semi-algebraic sets of constant complexity in $\reals^d$, and
let $\Pi$ be a substructure of $\A(\S)$.
For a parameter $r>1$, a \emph{$(1/r)$-cutting} of $\Pi$ (with respect to $\S$) is a set 
$\Xi$ of pseudo-prisms with pairwise-disjoint relative interiors that cover $\Pi$ ($\Pi$ may
be properly contained in the union of $\Xi$), such that the
relative interior of each pseudo-prism $\tau \in \Xi$ is crossed by (intersected by, but not 
contained in) at most $n/r$ sets of $\S$. In the standard terminology, the subset of sets of $\S$ 
that cross a prism $\tau$ is called the \emph{conflict list} of $\tau$, and it is denoted
by $\S_\tau$.
We will need the following results from~\cite{VD1}:

\begin{lemma}[~\cite{VD1}]
  \label{lem:cuttings}
Let $\S$ be a collection of $n$ constant-complexity $3$-dimensional semi-algebraic sets in 
$\reals^3$, such that the complexity of the union of any subset of $\S$ of size $m$ is at most 
$\psi(m)$, for some superlinear function $\psi$. For any parameter $r \in [1,n]$, a $(1/r)$-cutting 
of $\C(\S)$, of size $O^*(r^2 + \psi(r))$, along with the conflict lists of its cells, can be 
computed in expected time $O^*(nr + n\psi(r)/r)$.
\end{lemma}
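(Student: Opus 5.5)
We prove Lemma~\ref{lem:cuttings} with the standard random-sampling framework of Clarkson--Shor, applied to the vertical decomposition of the complement of the union and fed by the bounds of Lemma~\ref{lem:union}(i). We draw a random sample $\R\subseteq\S$ of size $r$, compute $\VeD{\C}(\R)$ with Lemma~\ref{lem:union}(i), and use the resulting prisms as a first candidate cutting. Every prism $\tau$ of $\VeD{\C}(\R)$ lies in $\C(\R)$, i.e., avoids the interiors of all sets of $\R$, and it is defined by a constant number $b$ of sets of $\R$. Hence, by the $\eps$-net argument, with high probability each such prism is crossed by $O((n/r)\log r)$ sets of $\S$. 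The prisms cover $\C(\R)\supseteq\C(\S)$, so they cover the target substructure. Their number is $O^*(r^2+\psi(r))$, using Lemma~\ref{lem:union}(i) for a set of size $r$.

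To remove the logarithmic factor we use the Chazelle--Friedman refinement. For each prism $\tau$ with conflict list of size $t_\tau n/r$ where $t_\tau>1$, we take a random sample of size $O(t_\tau\log t_\tau)$ from $\S_\tau$. We then compute the vertical decomposition of the complement of its union, clipped to $\tau$. The complexity of this piece is $O^*(t_\tau^2+\psi(t_\tau))$, using superlinearity and the bound for subsets. By the Clarkson--Shor exponential-decay lemma, $\EE\bigl[\sum_\tau t_\tau^{c}\bigr]=O(\text{expected number of prisms})$ for any constant $c$. The total size therefore stays $O^*(r^2+\psi(r))$. The only requirement is that the expected number of prisms defined by a sample of size $r$ that have conflict size $\approx tn/r$ obeys the usual bound. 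This holds because each prism is defined by $O(1)$ sets and exists in $\VeD{\C}(\R)$ iff its defining sets are in $\R$ and no conflicting set is. Since $\C$ is a substructure (complement of the union), the "existence" criterion is local, and the framework applies. One subtlety is that a prism must avoid only sets crossing its interior; sets containing $\tau$ entirely cannot occur, since then $\tau\not\subseteq\C(\R)$.

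For the running time, we compute $\VeD{\C}(\R)$ in $O^*(r^2+\psi(r))$ expected time. We obtain conflict lists by the standard approach: either insert the sets of $\S\setminus\R$ and walk through the prisms each one crosses, or build the decomposition incrementally with conflict graph maintenance. Either way, the cost is the total conflict size, which is $\sum_\tau|\S_\tau|=O^*((r^2+\psi(r))\cdot n/r)=O^*(nr+n\psi(r)/r)$ in expectation by Clarkson--Shor. The refinement step costs the same asymptotically, with one extra term per prism. Here we again use Lemma~\ref{lem:union}(i) on samples of size $O(t_\tau\log t_\tau)$, and we build their conflict lists inside $S_\tau$ by brute force in $O(|S_\tau|\cdot\text{size})$ time, which is summed by the exponential-decay bound.

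The main obstacle is the conflict-list computation. Finding, for each set of $\S$, the prisms of $\VeD{\C}(\R)$ it crosses is not immediate, because the decomposition of the complement of a union in $\reals^3$ is not a full arrangement and point location/walking must respect the prism structure. We would first handle this with the randomized incremental construction from~\cite{VD1} underlying Lemma~\ref{lem:union}(i). We would maintain conflict lists through the insertion order, giving the expected bound $O^*(\sum_{i\le r}(n/i)\cdot(i^2+\psi(i))/i)$, which evaluates to $O^*(nr+n\psi(r)/r)$ using superlinearity of $\psi$.
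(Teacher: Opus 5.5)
The paper does not prove this lemma. It imports it from \cite{VD1}, so there is no in-paper argument to compare against, and your proposal has to stand on its own.

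\textbf{The combinatorial half is sound.} You sample $r$ sets, take the prisms of $\VeD{\C}(\R)$ as the candidate cutting, and observe that membership of a prism in $\VeD{\C}(\R)$ is decided locally: its defining sets lie in $\R$ and no set of $\R$ meets its interior. That observation is exactly what makes the Clarkson--Shor machinery apply here, unlike the single-cell case. With exponential decay and the Chazelle--Friedman refinement, this gives $O^*(r^2+\psi(r))$ cells. One small slip: superlinearity does not bound $\psi(t\log t)$ by $O^*(\psi(t))$; it gives the opposite inequality. You do not need it, because the crude bound $\psi(m)=O(m^3)$ makes each refined piece $O(t_\tau^{3+\eps})$, and exponential decay absorbs any fixed power of $t_\tau$.

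\textbf{The genuine gap is the step that carries the time bound:} computing the conflict lists of $\VeD{\C}(\R)$ in $O^*\bigl((n/r)(r^2+\psi(r))\bigr)$ time. Walking fails, as you note. The randomized incremental construction you fall back on is assumed rather than established: \lemref{union}(i) supplies only a construction time for the decomposition, not an incremental algorithm with conflict maintenance.

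To get the bound $\sum_{i\le r}(n/i^2)T(i)$, you need that inserting a set $S$ lets you rebuild the decomposition inside the destroyed zone and redistribute conflict lists in time proportional to the conflict sizes of the destroyed and created prisms. For 3D vertical decompositions this is not automatic. When $S$ covers the edge that generated a wall, the new region above $\partial S$ spans both sides of the vanished wall. Second-stage $y$-segments from far-away vertices (for example, x-extreme points of small sets hanging from a common ceiling) then run across a single destroyed prism. So one destroyed prism can be overlapped by many new prisms, and conversely. Scanning destroyed conflict lists against overlapping new prisms is therefore not bounded by the Clarkson--Shor sum without a further argument, which you do not give.

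\textbf{A simpler way to close the gap} is to avoid incremental maintenance and build the cutting hierarchically.
\begin{enumerate}
\item Fix a large constant $r_0$.
\item Refine each current prism $\tau$ using a sample of $O(r_0\log r_0)$ sets from $\S_\tau$, decomposing $\C(\text{sample})\cap\tau$. You can treat the closure of the complement of $\tau$ as one extra set; this changes the union complexity by only $O(r_0^2)$.
\item Compute the new conflict lists by brute force within $\S_\tau$, and iterate $\log_{r_0}r$ times.
\end{enumerate}
Only constant-size decompositions are ever needed. \lemref{union}(i) bounds each refinement by $c_\eps r_0^{\eps}(r_0^2+\psi(r_0))$ pieces. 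Multiplying over the levels gives $O^*(r^2+\psi(r))$ cells when $\psi$ is, up to subpolynomial factors, a power function, which is the setting in which these $O^*$ bounds are meant. The work at each level is (number of cells) $\times$ (conflict size). This grows geometrically and is dominated by the last level, so the total is $O^*(nr+n\psi(r)/r)$.
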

\paragraph{Point-location and Point-enclosure queries.}
We also need algorithms that preprocess a collection $\S$ as above into a data structure so that 
point-location queries (does a query point lie in a specific substructure of $\A(\S)$) and 
point-enclosure queries (report all sets of $\S$ that contain a query point) can be 
answered quickly. In particular, we will need the 
following  results from~\cite{VD1}:

\begin{lemma}[~\cite{VD1}]
  \label{lem:enc}
  (i) Let $\S$ be a set of $n$ semi-algebraic regions of constant complexity in $\reals^3$.
  $\S$ can be preprocessed, in $O^*(n^2+\psi(n))$ randomized expected time (where $\psi(n)$
is as above), into a data structure of size $O^*(n^2+\psi(n))$, so that, for a query point 
$q\in\reals^3$, determining whether $q\in\U(\S)$ can be done in $O(\log n)$ time.
If the answer is {\sc Yes}, the query procedure also 
returns a set of $\S$ containing $q$. Furthermore, 
all $k$ sets of $\S$ containing $q$ can be 
reported in additional $O(k)$ expected time.

  \smallskip \noindent
  (ii) Let $\F$ be a set of $n$ semi-algebraic trivariate functions of constant complexity. 
  $\F$ can be preprocessed, in $O^*(n^3)$ randomized expected time, into a data structure of 
  size $O^*(n^3)$, so that for a point $q\in \reals^3$, $\EE_\F(q)$ can be computed in 
  $O(\log n)$ time. Furthermore, for a point $q\in\reals^4$, all $k$ functions of $\F$ whose 
  graphs lie below $q$ can be reported in additional $O(k)$ expected time.
\end{lemma}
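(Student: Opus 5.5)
The plan is to derive both parts from the vertical-decomposition bounds of \lemref{union}, using a standard random-sampling recursion in the style of Clarkson and Shor, with the sample size $r$ a sufficiently large constant; the cutting in that recursion is the one guaranteed by \lemref{cuttings}. The main obstacle is keeping the space bounded while still getting query time proportional to the output in the reporting versions; I address it in the last paragraph.

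\emph{Part (i), emptiness query.} Draw a random sample $\R\subseteq\S$ of constant size $r$. Compute $\VeD{\C}(\R)$, which has $O^*(r^2+\psi(r))$ prisms by \lemref{union}(i). Every prism $\tau$ gets the conflict list $\S_\tau$ of sets of $\S$ crossing it. With high probability, $|\S_\tau| = O((n/r)\log r)$ for all $\tau$; otherwise resample. Recurse on each pair $(\tau,\S_\tau)$.

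A query with $q$ proceeds level by level:
\begin{itemize}
\item If $q$ lies in the interior of some set of $\R$, we stop and return that set; this is checked in $O(1)$ time since $r$ is constant.
\item Otherwise $q\in\C(\R)$, so we find the prism $\tau$ containing $q$ by brute force in $O(1)$ time and recurse into $\S_\tau$.
\end{itemize}
A set of $\S$ that contains $\tau$ entirely cannot exist, since $\tau\subseteq\C(\R)$ and only sets of $\S\setminus\R$ could do so; such sets are caught by storing, at each prism, one set fully containing it, if any.

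The depth of the recursion is $O(\log n)$, so the query time is $O(\log n)$. The storage satisfies
\[
S(n)\le c\,(r^2+\psi(r))\,r^{\eps}\,S(c'(n/r)\log r)+O(n).
\]
Choosing $r$ large as a function of $\eps$, this solves to $O^*(n^2+\psi(n))$, using that $\psi$ is superlinear and at most polynomial, so $\psi(n/r)\lesssim \psi(n)/r^{1+\delta}$. The preprocessing time satisfies the same recurrence, with the cost of computing each decomposition taken from \lemref{union}(i).

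\emph{Part (ii).} The same scheme applies in $\reals^4$ to the region below $\EE_\R$, now with $O^*(r^3)$ prisms by \lemref{union}(ii). For the evaluation query at $q\in\reals^3$, let $\tau$ be the prism containing the point $(q,\EE_\R(q))$, or the prism just below it. Then $\EE_\F(q)$ is the minimum of $\EE_\R(q)$ and $\EE_{\F_\tau}(q)$, since only functions in the conflict list can dip below $\EE_\R$ over $\tau$. This again gives $O(\log n)$ query time and $O^*(n^3)$ space.

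\emph{Reporting versions.} Plain recursion would visit too many prisms, so I would use shallow cuttings instead. For each $j$, build a structure for the $(\le 2^j)$-level: the points covered by at most $2^j$ sets, respectively lying above at most $2^j$ graphs. By the Clarkson--Shor argument combined with \lemref{union}, its vertical decomposition has complexity $O^*\big(2^{3j}\,(n/2^j)^3\big)$ in case (ii), and the analogous bound with $\psi$ in case (i). Each prism stores its $O(2^j)$ conflict list.

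To answer a reporting query, first find $k$ approximately: step through $j=0,1,\ldots$, locating $q$ in the shallow cutting of level $2^j$ with the point-location structure above, until $q$ is covered. Then report by scanning the conflict list of the prism containing $q$. This costs $O(\log n + k)$ expected time. Summing the geometric series over $j$ keeps the total space at $O^*(n^2+\psi(n))$ and $O^*(n^3)$, respectively. I expect the delicate point to be verifying the Clarkson--Shor bound for substructures of these arrangements, that is, bounding the complexity of the shallow levels; I would handle it by applying \lemref{union} to random samples of size $n/k$.
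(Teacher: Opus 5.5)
The paper does not prove this lemma; it quotes it from \cite{VD1}. Your argument for the emptiness and evaluation halves is sound: a hierarchy of cuttings with constant $r$, built from $\VeD{\C}(\R)$, resp.\ from the decomposition of the region below $\EE_\R$, of depth $O(\log n)$, with the storage recurrence solving to $O^*(n^2+\psi(n))$, resp.\ $O^*(n^3)$. (Sets of $\S\setminus\R$ \emph{can} contain a prism, contrary to the first half of your sentence. Storing one such set suffices for emptiness, but reporting needs all of them.)

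The gap is in the reporting part. Your level-$j$ structure decomposes the $(\le 2^j)$-level of the whole family, and $j$ must go up to about $\log n$, since $k$ can be $\Theta(n)$. For $2^j\approx n$ that level is the entire arrangement. Its complexity can be $\Theta(n^4)$ in case (ii) (hyperplanes) and $\Theta(n^3)$ in case (i) (halfspaces, for which $\psi(n)=O(n)$). So the claimed totals $O^*(n^3)$, resp.\ $O^*(n^2+\psi(n))$, cannot hold. Your own accounting shows the same thing: $O^*(2^{3j}(n/2^j)^3)=O^*(n^3)$ prisms per level, each carrying an $O(2^j)$ list, sums to $O^*(n^4)$, because the series increases rather than decreases.

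What the scheme needs is a \emph{shallow cutting}: $O^*((n/k)^3)$ (resp.\ $O^*((n/k)^2+\psi(n/k))$) prisms covering the $(\le k)$-level, each meeting $O(k)$ elements, together with a point-location structure for them. With that, the level-$j$ cost is $O^*(n^3/4^j)$, which does sum correctly (similarly in case (i)). Applying \lemref{union} to a sample of size $n/k$ does not produce such a cutting. It decomposes only the sample's $0$-level, which contains a given point of level $\le k$ only with constant probability. One must therefore decompose a $(\le\Theta(\log n))$-level of a sample, and \lemref{union} does not cover that. Separately, stepping through $j=0,1,\ldots$ with an $O(\log n)$ location per step costs $O(\log n\log k)$, which is not $O(\log n+k)$ (e.g., for $k=\log n$).

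A simple repair stays inside your constant-$r$ hierarchy. At each node $v$, keep the prisms of $\VeD{\A}(\R_v)$ whose level in $\R_v$ is at most $t=c\log r$. The level-$0$ prisms of $\VeD{\A}(Y)$ are exactly those of the decomposition of the region below $\EE_Y$, resp.\ of $\C(Y)$. Hence Clarkson--Shor, applied against \lemref{union} on subsamples of size $r/t$, shows there are $O^*(r^3)$, resp.\ $O^*(r^2+\psi(r))$, kept prisms. Store with each kept prism the elements lying entirely below it (resp.\ containing it), and recurse on the elements crossing it.

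A query at $v$ reports that stored list and descends. If instead $q$ has level $>t$ in $\R_v$, the query scans all $n_v$ elements. Fix $q$, and let $k_v\le k$ be its level in the subproblem at $v$. A Chernoff bound shows the scan happens with probability at most $(ek_vr/(tn_v))^t$. Since $n_v$ shrinks geometrically along the path, the expected scanning cost is $O(kr)=O(k)$; this is where ``expected'' comes from. The storage and preprocessing recurrences keep their form.
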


In view of the above discussion on lower envelopes and NN searching, an immediate corollary 
of \lemref{enc}(ii) is the following:
\begin{corollary}
\label{cor:NN}
Let $\S$ be a family of $n$ semi-algebraic sets of constant complexity in $\reals^3$. Then
$\S$ can be preprocessed, in $O^*(n^3)$ expected time, so that $\NN(\S,q)$, for a query point 
	$q\in\reals^3$, can be computed in $O(\log n)$ time. Furthermore, the $k$ nearest neighbors of $q$ can be reported in an additional $O(k)$ time.
\end{corollary}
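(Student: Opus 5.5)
We reduce the corollary directly to \lemref{enc}(ii). For each $S_i\in\S$ define the trivariate function $f_i(x)=\dist(x,S_i)$ for $x\in\reals^3$, and let $\F=\{f_1,\ldots,f_n\}$. By the discussion preceding \lemref{enc}, $\NN(\S,q)$ is the index of the function attaining the lower envelope $\EE_\F$ at $q$. So it suffices to check that each $f_i$ satisfies the hypotheses of \lemref{enc}(ii), namely that it is a semi-algebraic trivariate function of constant complexity.

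The main step, which is mild, is this semi-algebraicity check. The graph of $f_i$ is the set of points $(x,t)\in\reals^4$ satisfying
\[
t\ge 0,\quad \exists y\in S_i:\ \|x-y\|^2=t^2,\quad \forall z\in S_i:\ \|x-z\|^2\ge t^2 .
\]
Since $S_i$ has constant complexity and the formula has a constant number of quantified variables, quantifier elimination (Tarski--Seidenberg, with the standard bounds on the number and degree of the output polynomials) shows that this graph is a semi-algebraic set whose complexity depends only on the complexity of $S_i$. Hence each $f_i$ is a semi-algebraic trivariate function of constant complexity. It is totally defined, since $S_i$ is closed and nonempty, so the minimum is attained.

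We then apply \lemref{enc}(ii) to $\F$. This gives, in $O^*(n^3)$ expected preprocessing time, a structure of size $O^*(n^3)$ that computes $\EE_\F(q)$ in $O(\log n)$ time. The point-location structure underlying \lemref{enc}(ii) locates $q$ in a cell of the vertical decomposition of $\M_\F$ (from \lemref{union}(ii)), and the function attaining the envelope is constant on such a cell. We store that function with each cell, so the query returns the site realizing $\EE_\F(q)$, which is $\NN(\S,q)$.

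For the $k$ nearest neighbors, we first compute $\rho=\EE_\F(q)$. We would then use the reporting part of \lemref{enc}(ii): it lists all functions whose graphs lie below a query point $(q,t)\in\reals^4$, in $O(\log n)+O(k')$ time, where $k'$ is the output size. The remaining obstacle is choosing $t$ so that exactly (or $O(k)$) sites are reported. We would first try to run the reporting procedure incrementally in nondecreasing order of $f_i(q)$, stopping after $k$ items, which works if the reporting structure (a cutting-based hierarchy of envelopes) enumerates functions level by level along the vertical line through $q$. Otherwise we would instead interpret the claim as reporting all sites within a given distance, in time proportional to their number, which follows immediately.
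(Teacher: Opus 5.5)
For the main claim you follow the paper exactly. Its whole proof is the remark that $\NN(\S,q)$ is the function of $\F$ attaining the lower envelope $\EE_\F$ at $q$, for $f_i(x)=\dist(x,S_i)$, followed by an appeal to \lemref{enc}(ii). Your quantifier-elimination check that the $f_i$ are constant-complexity semi-algebraic functions just fills in that one line. So does recovering the minimizer, either by storing it per cell or, as a black box, by reporting the functions on or below $(q,\EE_\F(q))$.

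The gap is in the last sentence of the corollary. The reporting clause of \lemref{enc}(ii) takes the point $(q,t)\in\reals^4$ as \emph{input}. To report the $k$ nearest sites in $O(k)$ additional time, you must produce a threshold $t\ge d_k(q)$, the $k$-th smallest of the values $f_i(q)$, with only $O(k)$ functions lying below $(q,t)$. Your proposal never produces such a $t$. Route (a) presupposes that the reporting structure enumerates functions in increasing order of $f_i(q)$, which the lemma does not provide. Route (b) proves a different, fixed-radius statement. To be fair, the paper's ``immediate corollary'' also just points at this reporting clause, which literally gives your version (b). So the difficulty you hit is real, and the paper glosses over it.

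One way to get the statement as written from the black boxes alone is random sampling:
\begin{enumerate}[(i)]
\item For $j=0,1,\ldots,\lceil\log_2 n\rceil$, build the NN structure of the first part on an independent random sample $R_j\subseteq\S$ with inclusion probability $2^{-j}$. The total expected cost is still $O^*(n^3)$.
\item Given $q$ and $k$ (if $k\ge n/2$, report everything and select), start at the smallest $j$ with $2^j\ge 2k$. Let $t_j$ be the distance from $q$ to $\NN(R_j,q)$, with $t_j=\infty$ if $R_j=\emptyset$.
\item Report all sites with $f_i(q)\le t_j$ via the reporting clause at $(q,t_j)$. If at least $k$ are reported, select the $k$ smallest in linear time; otherwise increment $j$.
\end{enumerate}
Assume distinct distances. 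Then the number reported at level $j$ is dominated by the rank of $\NN(R_j,q)$ in the distance order of $\S$. This is a geometric variable with mean at most $2^j$, and it is below $k$ with probability at most $k2^{-j}\le 1/2$. That bound halves at each later level, and the samples are independent. Summing over levels gives an expected $O(1)$ number of $O(\log n)$-time NN queries plus $O(k)$ expected reporting and selection, i.e., $O(\log n)+O(k)$ expected query time.
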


\subsection{Semi-algebraic range searching}
\label{subsec:RS}

Let $\Pi\colon \reals^d\times \reals^k \rightarrow \{0,1\}$ be a \emph{semi-algebraic predicate} defined as the 
Boolean combination of a set of polynomial inequalities. 
Without loss of generality, we can assume that $\Pi$ is of the form
\begin{equation}
\label{eq:semi-pred}
\Pi (\xx, \yy) = \bigvee_{i=1}^r \bigwedge_{j=1}^{k_i} (g_{ij} (\xx, \yy) \ge 0),
\quad\text{for } \xx\in \reals^d,\;\yy\in\reals^k ,
\end{equation}
where each $g_{ij} \in \reals[\xx,\yy]$ is a polynomial over the joint space $\reals^d\times\reals^k$. 
For a fixed $\yy\in\reals^k$, let $\gamma(\yy) = \{ \xx\in\reals^d \mid \Pi(\xx,\yy) = 1\}$ be the 
semi-algebraic set defined by $\yy$. Set $\Gamma = \{ \gamma(\yy) \mid \yy\in\reals^k \}$. 
We refer to $\Gamma$ as a family of semi-algebraic ranges (e.g., rectangles, balls, simplices 
in $\reals^d$). (In some cases, we may define $\Gamma$ with respect to a subspace of $\reals^k$.) 
Let $\tO$ (resp., $\tQ$), referred to as the \emph{reduced parametric dimension} of the input 
(resp., query) objects, be the maximum number of coordinates of $\xx$ (resp., $\yy$) that appear 
in any polynomial inequality $g_{ij}$ in \eqref{semi-pred}. 

Let $P$ be a set of $n$ points in $\reals^d$. We wish to preprocess $P$ into a data structure so 
that for a query range $\gamma \in \Gamma$, we can quickly retrieve $\gamma \cap P$, where we may 
wish to report all the points of $\gamma\cap P$, count the number of such points, or simply determine 
whether $\gamma \cap P = \emptyset$. This problem is referred to as  \emph{$\Pi$-range searching}, 
or more generally as semi-algebraic range searching; see~\cite{Ag:rs}. 
Currently, the best-known data structure, based on the so-called polynomial-partitioning 
technique~\cite{Guth}, builds a \emph{partition tree} on $P$, each node of which stores 
a family of \emph{canonical subsets} of $P$, so that, for a query range $\gamma\in \Gamma$, 
$\gamma\cap P$ is returned as the union of a small number 
of pairwise-disjoint canonical 
subsets. By storing relevant information with each canonical subset, the various variants
of range searching mentioned above can be answered efficiently. The following lemma summarizes the result:
\begin{lemma}[Agarwal et al.~\protect{\cite[Appendix]{AAEKS}}]
	\label{lem:Pi-tradeoff}
Let $P$ be a set of $n$ points in $\reals^d$,
and let $\Pi\colon \reals^d\times\reals^k \rightarrow \{0,1\}$ be a semi-algebraic predicate of constant 
complexity, where $k\ge 1$ is a constant.
Let $\Gamma$ be the family of semi-algebraic 
query ranges induced by $\Pi$, each defined by some $\yy\in\reals^k$, and 
let $\tO, \tQ$ be the respective reduced parametric dimensions of $P$ and $\Gamma$
with respect to $\Pi$. For a storage parameter $s\in[n,n^{\tQ}]$,
$P$ can be preprocessed, in $O^*(s)$ expected time, into a data structure of size $O^*(s)$,
so that a $\Pi$-query on $P$ with an object $\gamma\in\Gamma$ can be answered 
in $O^*\biggl((n/s^{1/\tQ})^{\tfrac{1-1/\tO}{1-1/\tQ}} \biggr)$ time.
\end{lemma}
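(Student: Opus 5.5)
The statement to be proved is \lemref{Pi-tradeoff}. The plan is to build a hybrid partition tree: use a polynomial-partitioning tree in the primal space (points of $P$) near the root, and switch to a cutting-based structure in the dual space (query parameters) near the leaves. The first step is to exploit the reduced parametric dimensions. Each polynomial $g_{ij}$ depends on at most $\tO$ coordinates of $\xx$ and at most $\tQ$ coordinates of $\yy$. Since there are only $O(1)$ polynomials, I would first decompose $\Pi$ so that every polynomial depends on a fixed subset of at most $\tO$ coordinates of $\xx$. The natural device is to lift $\xx$ to $\tO$-dimensional projections, or equivalently to linearize with respect to the relevant variables, so that each query condition becomes a constant-complexity surface in a $\tO$-dimensional space and, dually, in a $\tQ$-dimensional space.

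For the linear-size structure, I would apply Guth's polynomial partitioning in the $\tO$-dimensional primal space recursively, as in the standard Agarwal--Matoušek--Sharir analysis. A query range boundary, being the zero set of polynomials in at most $\tO$ variables, crosses only $O(r^{1-1/\tO})$ of the $r$ cells at each level. This gives query time $O^*(n^{1-1/\tO})$ with $O(n)$ storage. For the large-storage structure, I would map each query $\yy$ to a point of $\reals^{\tQ}$ and each input point to a constant-complexity surface in that space. Building a $(1/r)$-cutting of the arrangement of these $n$ surfaces, of complexity $O^*(r^{\tQ})$ in the relevant dimensions, and recursing on the conflict lists gives $O(\log n)$ query time with $O^*(n^{\tQ})$ storage. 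The dimensions here are at most $4$ for vertical decompositions; in higher dimensions I would use polynomial partitioning in the dual space as well, which yields $O^*(n^{\tQ})$ storage with polylogarithmic query time.

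The combination is the standard one. Run the primal partition tree until its nodes hold $m$ points, then store at each such node the dual structure of size $O^*(m^{\tQ})$. There are $O(n/m)$ subproblems, so the total storage is $(n/m)\cdot m^{\tQ}=s$. A query visits $O^*((n/m)^{1-1/\tO})$ leaves, each answered in $O^*(1)$ time. Solving $m^{\tQ-1}=s/n$ gives $m=(s/n)^{1/(\tQ-1)}$. Substituting, $n/m=n^{\tQ/(\tQ-1)}/s^{1/(\tQ-1)}=(n/s^{1/\tQ})^{\tQ/(\tQ-1)}$, so the query time is $(n/s^{1/\tQ})^{(1-1/\tO)/(1-1/\tQ)}$, as claimed. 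Expected preprocessing is $O^*(s)$, using randomized construction of the partitions and cuttings.

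I expect the main obstacle to be making the reduced parametric dimension actually govern both phases simultaneously. Different polynomials may involve different coordinate subsets, so the primal partition must handle several $\tO$-dimensional projections at once. Crossing numbers must also be bounded uniformly across a disjunction of conjunctions. The first thing I would try is to handle each $g_{ij}$ through its own multilevel structure: each level filters the canonical subsets by one polynomial constraint, in its own $\tO$- or $\tQ$-dimensional projection. This costs only extra subpolynomial factors, which are absorbed in the $O^*(\cdot)$ notation.
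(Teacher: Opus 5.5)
The paper gives no proof of this lemma and imports it from \cite[Appendix]{AAEKS}; your outline is the standard argument behind it, and your balancing computation yields exactly the stated exponent. That outline is a polynomial-partitioning tree in the projected primal space, switched at nodes of size $m=(s/n)^{1/(\tQ-1)}$ to dual point-location structures of size $O^*(m^{\tQ})$, with the reduced dimensions realized by a multilevel tree having one level per atomic inequality $g_{ij}\ge 0$ (the mechanism behind the per-level sum (A11) the paper invokes later). The one thing to drop is the claim that projecting onto the relevant coordinates is ``equivalent to linearizing'': linearization generally raises the dimension to the number of monomials and loses the $\tO$, $\tQ$ exponents, so the projection-plus-multilevel device of your last paragraph is what does the work.
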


Let $P$ and $\Gamma$ be as above. As already mentioned, using the parametric-search 
technique~\cite{AM93}, an NN query $\NN(P,\gamma)$, for a range $\gamma\in\Gamma$, can be 
reduced to answering $O^*(1)$ semi-algebraic range-emptiness queries with query objects of the form 
$\B(\gamma, \rho) = \{ \xx \in \reals^d \mid \dist(\xx,\gamma) \le \rho \}$ for some varying
parameter $\rho>0$. We therefore focus on designing (more) efficient data structures for 
answering such range-emptiness queries (see, e.g., Sections~\ref{sec:line-nn} and~\ref{sec:plnn}),
using the ideas in~\cite{Ma:rph,ShSh}. We briefly describe the approach taken in
Sharir and Shaul~\cite{ShSh}.

\paragraph{NN searching and test-sets.}
Roughly speaking, the work in~\cite{ShSh} builds a linear-size partition tree recursively, using 
hierarchical clustering of the input points. At each node $v$ of the tree, we have a subset 
$P_v \subseteq P$ containing $n_v$ points. For a parameter $r>1$, the technique constructs an 
\emph{elementary-cell partition} $\Pi=\{(P_1,\tau_1), \ldots, (P_r,\tau_r)\}$ where 
$n_v/r \le |P_i| \le 2n_v/r$ for each $i$, $P_1, \ldots, P_r$ is a partition of $P_v$, 
and each $\tau_i$ is an elementary cell containing $P_i$ 
(these cells are not required 
to be pairwise disjoint). The main property of $\Pi$ is that any range in $\Gamma$ 
crosses only few elementary cells of $\Pi$. A major ingredient to accomplish this, also used
in~\cite{Ma:rph}, is to construct a so-called \emph{test set} of a small number of semi-algebraic 
sets (not necessarily ranges of $\Gamma$), which represent all ranges of $\Gamma$ well.
Formally, for a parameter $r>1$, we call a semi-algebraic set $\gamma \subset \reals^d$
\emph{$(n/r)$-shallow} (with respect to $P$) if $|P \cap \gamma| \le n/r$, and a finite 
collection $\Phi$ of semi-algebraic sets of constant complexity is called a \emph{test set} 
for $(n/r)$-shallow ranges in $\Gamma$ if it satisfies the following properties: 
  \begin{description}
\item[(P1)]
  Every set in $\Phi$ is $(n/r)$-shallow with respect to $P$. 
\item[(P2)]
  The complement of the union of any $m$ sets of $\Phi$ can be decomposed 
  into at most $\zeta(m)$ ``elementary cells'' (semi-algebraic sets of constant complexity, as above) 
  for any $m\ge 1$, where $\zeta(m)$ is a suitable monotone increasing super-linear function of $m$. 
\item[(P3)]
  Any $(n/r)$-shallow set $\gamma\in \Gamma$ can be covered by the union 
  of at most $\delta$ ranges of $\Phi$, where $\delta$ is a constant (independent of $r$).
\end{description}
The following lemma summarizes the main result of Sharir and Shaul~\cite{ShSh}:
\begin{lemma}[Sharir and Shaul~\cite{ShSh}]
	\label{lem:test-set}
Let $P$ be a set of $n$ points in $\reals^d$, and let $\Gamma$ be a family of semi-algebraic ranges 
of constant complexity. Suppose that a test set of size $r^{O(1)}$ for $(n/r)$-shallow ranges of 
$\Gamma$ can be constructed in $O^*(nr^{O(1)})$ time. Then a linear-size partition tree on $P$ can 
be constructed, in $O^*(n)$ expected time, that supports $\Gamma$-emptiness queries in 
$O^*(n/\zeta^{-1}(n))$ time; all $k$ points lying in a query $\Gamma$-range can be reported 
in an additional $O(k)$ time. 
\end{lemma}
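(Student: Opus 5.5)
The plan is to follow Matou\v{s}ek's shallow-partition framework as adapted by Sharir and Shaul~\cite{ShSh}: build a partition tree recursively, and at each node $v$ with point set $P_v$ of size $n_v$ construct an elementary-cell partition $\Pi_v=\{(P_1,\tau_1),\ldots,(P_r,\tau_r)\}$ with $|P_i|\approx n_v/r$. The key property is that every $(n_v/r)$-shallow range $\gamma\in\Gamma$ crosses only $O^*(r/\zeta^{-1}(r))$ cells. Ranges that are not shallow are handled separately: if $|\gamma\cap P_v|>n_v/r$, then for an emptiness query the range is non-empty and we can stop early, and in the reporting version the cost is charged to the output size $k$. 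Taking $r$ to be a large constant (or $n^\eps$) and recursing gives query time $Q(n)\le O^*(r/\zeta^{-1}(r))\,Q(2n/r)+O^*(1)$, which solves to $O^*(n/\zeta^{-1}(n))$ because $\zeta$ is superlinear and $\zeta^{-1}$ is (roughly) multiplicative. Storage is linear since the point sets at each level form a partition.

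To construct $\Pi_v$, I would use the iterative reweighting argument. Let $\Phi$ be the given test set for $(n_v/r)$-shallow ranges, of size $r^{O(1)}$. Give every $\phi\in\Phi$ an initial weight of $1$. Repeat for $i=1,\ldots,r/2$. Take a $(1/t)$-cutting of the complement of the union of the heavy (weighted) test ranges; by (P2) it has size about $\zeta(t)$. Since there are still at least $n_v/2$ unpartitioned points, choose $t$ with $\zeta(t)\approx r$, so that some elementary cell $\tau_i$ contains at least $n_v/r$ of them. Set $P_i$ to be $n_v/r$ of these points, and double the weight of every $\phi\in\Phi$ that crosses $\tau_i$.

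Points covered by the union of shallow ranges have to be treated as well. By (P1) each range in the sample contains few points, so the union of the sampled ranges covers only a small fraction of the points, which is bounded via a standard $\eps$-net argument. Alternatively, I would follow~\cite{ShSh} and cut the whole arrangement while counting only the crossings of the cells lying outside the union. The standard weight argument then shows that each $\phi\in\Phi$ crosses at most $O^*(r/t)=O^*(r/\zeta^{-1}(r))$ cells. By (P3), a shallow $\gamma$ is covered by $\delta=O(1)$ test ranges, and any cell crossed by $\gamma$ must either be crossed by one of them or be contained in their union. Cells of the second kind are excluded by the construction, because cells were taken in the complement of the union, up to the sampling slack just described. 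Hence $\gamma$ crosses $O(\delta)$ times as many cells, which is the required crossing bound.

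For the running-time analysis, computing $\Phi$ costs $O^*(n_v r^{O(1)})$ and each cutting costs polynomial time in $r$. With $r=n^\eps$ or a constant, the total preprocessing over all levels of the tree is $O^*(n)$ expected time. Reporting works by traversing the tree and outputting whole canonical subsets that lie fully inside the query range.

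I expect the main obstacle to be the interaction between shallowness and the cells. Test ranges are shallow only with respect to the full set $P_v$, and the complement-of-union cutting must still contain enough of the remaining points. Handling this requires the shallow-cutting variant: ensure at each step that the remaining points outside the union of a sample of the weighted test set number at least half of them, using (P1) together with a random-sampling bound on the union's point content. Ensuring this is the delicate step.
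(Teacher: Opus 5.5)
Your outline follows the right framework. The paper gives no proof and only cites~\cite{ShSh}, whose argument is Matou\v{s}ek's shallow-partition construction~\cite{Ma:rph}: reweighting over the test set, complement-of-union cuttings of a weighted sample, and a recursion for the query time. However, the step that transfers the crossing bound from test ranges to an arbitrary $(n_v/r)$-shallow $\gamma\in\Gamma$ fails as written. You assert that cells contained in the union of the $\delta$ ranges $\phi_1,\ldots,\phi_\delta\in\Phi$ covering $\gamma$ are ``excluded by the construction.'' They are not. The cell $\tau_i$ chosen in round $i$ avoids only the union of the random sample $R_i\subseteq\Phi$ drawn in that round. By contrast, $\phi_1,\ldots,\phi_\delta$ are whatever members of $\Phi$ property (P3) supplies for $\gamma$, and typically they lie in none of the samples $R_i$. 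So $\tau_i\subseteq\phi_j$ can certainly happen, and no ``sampling slack'' rules it out.

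The missing idea is a counting argument from (P1). Suppose $\gamma$ crosses $\tau_i$ but no $\phi_j$ does. Then $\tau_i$ meets $\gamma\subseteq\phi_1\cup\cdots\cup\phi_\delta$, so $\tau_i\subseteq\phi_j$ for some $j$, and hence $P_i\subseteq\phi_j$. The classes are pairwise disjoint with $|P_i|\ge n_v/r$, and $|\phi_j\cap P_v|\le n_v/r$, so each $\phi_j$ contains at most one cell of the partition. Therefore $\gamma$ crosses at most $\delta\cdot O^*(r/\zeta^{-1}(r))+\delta$ cells, which is the bound you need.

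Two smaller points. The step you call delicate is just a union bound. If the sample has size $|R_i|\le r/4$ (automatic once $\zeta(|R_i|)\le r/4$, since $\zeta$ is superlinear), its union contains at most $|R_i|\cdot n_v/r\le n_v/4$ points by (P1). So the at most $r/4$ cells hold at least $n_v/4$ of the at least $n_v/2$ remaining points, and one cell holds $n_v/r$ of them; no $\eps$-net or random-sampling bound on the union is needed. You also leave half of $P_v$ unassigned after your $r/2$ rounds. Either continue with a cutting parameter that shrinks with the number of remaining points, as Matou\v{s}ek does, or recurse on the remainder and check that the crossing bounds still sum to $O^*(r/\zeta^{-1}(r))$.
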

We will present in Sections~\ref{sec:line-nn} and~\ref{sec:plnn} new algorithms for constructing 
test sets for the relevant ranges that arise in the analysis, and we will use \lemref{union} 
to prove (P2) for the test sets that we construct.

\section{Line, Segment, and Triangle NN Queries amid Points in $\reals^3$}
\label{sec:Fast_NN}
\label{sec:line-nn}

In this section we consider the setting in which the input sites are a set $P$ of $n$ points in $\reals^3$,
and the query objects are lines in $\reals^3$. We will later extend this data structure to handle segments and 
triangles as query objects. We first describe a linear-size data structure 
(\subsecref{pt-linear}) and then a fast-query-time data structure (\subsecref{Fast_NN1}).
Combining the two structures will yield a trade-off between the storage and query time.

\subsection{ Line NN queries: Linear storage}
\label{subsec:pt-linear}

As discussed above, using parametric search~\cite{Ag:rs,AM93}, an NN
query on $P$ can be reduced to answering $O^*(1)$ \emph{cylinder-emptiness} 
queries on $P$. That is, one aims to preprocess $P$ into a data structure of near-linear size that 
can efficiently determine whether a query cylinder $\cyl$ (of arbitrary radius) contains a point 
of $P$, and if the answer is {\sc Yes} than return such a point. 
For simplicity, we assume that the axis of the query cylinder $\cyl$ is not parallel to the $yz$-plane. 
When the axis is parallel to the $yz$-plane, one fewer degree of freedom is needed to specify $\cyl$,
and the solution becomes simpler and more efficient; for the sake of brevity, we omit the straightforward 
details of adapting the forthcoming analysis to this case. 

\paragraph{Data structure.}
We construct a two-level partition tree $\Psi$ on $P$ by following the same overall approach as in~\cite{AE-23}, 
as follows.
For a point $p\in P$, let $p^*$ be its $xy$-projection, and let $P^* = \{ p^* \mid p\in P\}$.
Without loss of generality, we assume that no two points in $P$ project to the same point.
Our top-level tree is a two-dimensional partition tree on $P^*$ (with respect to lines in
the $xy$-plane), based on simplicial partitions~\cite{m-ept-92}, and each of its nodes stores 
a second-level partition tree of a different type; see below.

Concretely, let $S$ be a set of $n$ points in $\reals^2$,  and let $r > 0$ be a parameter.
A \emph{simplicial $(1/r)$-partition} (a special case of elementary-cell partition reviewed earlier) for $P$ 
is a collection $\Pi = \{(S_1, \Delta_1), \ldots, (S_m, \Delta_m)\}$, where $m \le r$ is an integer, such that
(i) $\{S_1, \ldots, S_m\}$ is a partition of $S$ (into pairwise-disjoint subsets)
satisfying $n/r \le |S_i| \le 2n/r$, for each $i$, and
(ii) each $\Delta_i$ is a (possibly degenerate) triangle, referred to as a \emph{cell}, that contains $S_i$.
The \emph{crossing number} of $\Pi$ with respect to a line $\ell$ in $\reals^2$ is the
number of cells crossed by $\ell$. The crossing number
of $\Pi$ is defined as the maximum of the crossing numbers over all lines $\ell$.
Matou\v{s}ek~\cite{m-ept-92} described an algorithm for constructing a simplicial partition whose crossing number is 
$O(\sqrt{r})$. If $r=O(1)$, the running time of his algorithm is $O(n)$; see also~\cite{Chan-12}.

We choose $r$ to be a sufficiently large constant. By constructing simplicial partitions recursively 
and stopping the recursion as soon as the number of points becomes smaller than some sufficiently 
large constant threshold $n_0$, we construct a two-dimensional partition tree on $P^*$, 
which is the primary tree $\Psi$ of our structure. Each node $v\in \Psi$ is associated with a triangle
$\Delta_v$ and a subset $P_v^* \subseteq P^*\cap \Delta_v$. If $v$ is the root, then $\Delta_v=\reals^2$ and $P_v^*=P^*$. Let 
$P_v = \{ p\in P \mid p^* \in P_v^*\}$ be the subset of $P$ corresponding to $P_v^*$. Set $n_v = |P_v|$. 
For each node $v\in \Psi$, we build a second-level partition tree (now in $\reals^3$) $\Sigma_v$ on $P_v$ 
for answering range queries with cylinders whose $xy$-projections contain $\Delta_v$ (we refer to such
cylinders as being \emph{wide} at $v$), using the algorithm 
described below. By \lemref{cylinder_range_search}, given below, $P_v$ can be preprocessed, 
in $O (n_v\log {n_v})$ time, into a linear-size data structure, so that an emptiness 
query for a cylinder whose $xy$-projection contains $\Delta_v$ can be answered in  $O^*({n_v}^{1/2})$ time.

\begin{figure}[htb]
  \centering
  \includegraphics[scale=0.5]{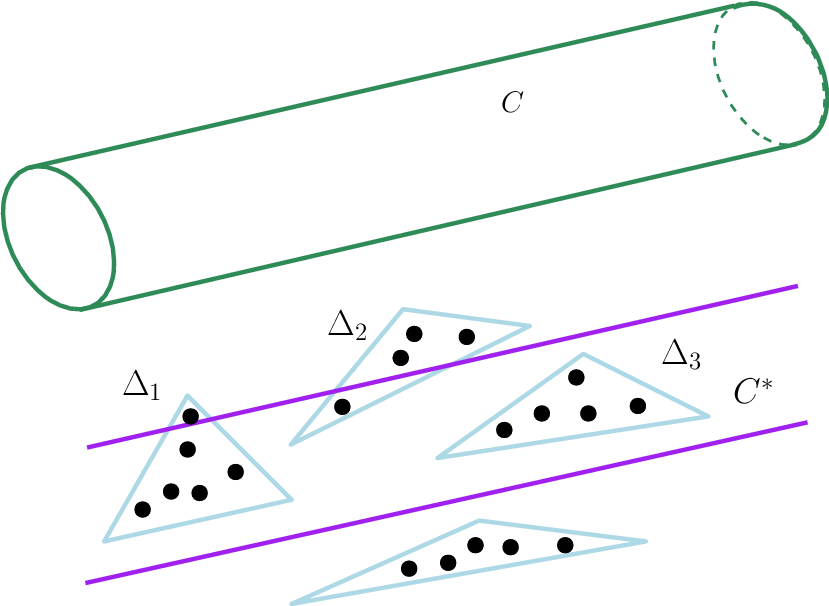}
  \caption{A query cylinder and its $xy$-projection. The query is answered recursively for $\triangle_1$ and $\triangle_2$ and using the secondary data structure for $\triangle_3$.}
  \label{fig:line-query}
\end{figure}

\paragraph{Query procedure.} 
Let $\cyl$ be a query cylinder whose axis is not parallel to the $yz$-plane, 
and let $\cyl^*$ be its $xy$-projection, which is a strip bounded by two parallel lines.
We visit $\Psi$ recursively in a top-down manner, starting from its root. Suppose we are at a node $v\in \Psi$. If 
$\cyl^* \cap \Delta_v = \emptyset$, then we simply return to the parent recursive step. 
If $v$ is a leaf, then we check all points of $P_v$ for containment in $\cyl$. If any one of 
them lies in $\cyl$, we return {\sc yes} and report such a point;
otherwise we return \textsc{no}. So assume that $v$ is an internal node and 
$\Delta_v \cap \cyl^* \ne \emptyset$. If $\Delta_v\subset\cyl $, then we 
use the secondary data structure $\Sigma_v$ stored at $v$ to test whether 
$P_v \cap \cyl \ne \emptyset$, as described below.
If $\bd\cyl^* \cap \Delta_v\ne\emptyset$ (i.e., one of the two lines bounding $\cyl^*$ intersects $\Delta_v$), 
then we recursively visit the children of $v$. 
See \figref{line-query}.
If any of the children returns a point of $P_v\cap \cyl$, we return that point (with a
{\sc yes} output), and otherwise we return \textsc{no}.
For an emptiness query, the procedure can terminate as soon as a point of $P$ inside $\cyl$ is found.

The query procedure can be adapted to report all points of $\cyl \cap P$: for a leaf node $v$, 
we report all points of $P_v\cap C$. For an internal node $v$, if $\Delta_v\subset\cyl^*$, the 
secondary structure returns all points of $P_v \cap \cyl$; and if $\bd\cyl^*\cap \Delta_v\ne\emptyset$, 
the recursive processing of $v$ will report all points of $P_v\cap \cyl$.

\paragraph*{Analysis.}
The height of $\Psi$ is $O(\log n)$, and each of its internal nodes $v$ stores a linear-size 
(linear in $n_v$) secondary structure, while each leaf uses $O(1)$ space. This 
implies that the overall size of $\Psi$ is $O(n \log n)$.  A similar argument shows that the 
preprocessing time of constructing $\Psi$ is $O(n\log^2 n)$.

Concerning the query time, we present the analysis for emptiness queries. 
Let $Q(n)$ denote the maximum runtime of an emptiness query in $\Psi$ for
a set of at most $n$ points. For $n \le n_0$, $Q(n)= O(n) = O(1)$. 
If $\cyl^* \subset \Delta_v$ then we use the secondary data structure 
stored at $v$, and answer an emptiness query in $O^*(\sqrt{n_v})$ time, as will be shown in
\lemref{cylinder_range_search} below. Since the crossing number of the simplicial partition is 
$O(\sqrt{r})$, the procedure recursively visits $O(\sqrt{r})$ children of $v$.
We therefore obtain the following recurrence.
\[
Q(n) \le  \left \{ \begin{array}{ll} O(1) & n \le n_0 \\[1mm]
c_1\sqrt{r}Q(2n/r) + O^*(n^{1/2}) & n > n_0,
\end{array}
\right .
\]
where $c_1>0$ is an absolute constant. Choosing $r$ to be a sufficiently large constant, 
the solution becomes $Q(n) = O^*(n^{1/2})$, with a slightly larger $O^*(\cdot)$ factor.
For the reporting query, by \lemref{cylinder_range_search} below, the overall query time 
is $O^*(n^{1/2})+O(k)$.

The $\log n$ factor can be removed from the space and preprocessing time bounds,
using a standard technique (see, e.g.,~\cite{AE99}) of storing the second-level structure 
only at levels of the primary structure that are $\eps\cdot \log{n}$ apart, for a sufficiently 
small constant $\eps > 0$.  This adds a factor of $O(n^\eps)$ to the query time, which is 
subsumed anyway by our $O^*(\cdot)$ notation. 

Returning to the original problem of finding a nearest neighbor for a line query amid points, 
we conclude:

\begin{theorem}
  \label{thm:NNLP}
  A set $P$ of $n$ points in $\reals^3$ can be preprocessed, in $O(n\log n)$ expected time, 
  into a data structure of size $O(n)$, so that (i) for any query line $\ell$ in $\reals^3$, 
  the point of $P$ nearest to $\ell$ can be computed in $O^*(n^{1/2})$ time; (ii) for a query 
  line $\ell$ and a distance parameter $\rho$, all $k$ points of $P$ within distance $\rho$ 
  from $\ell$ can be reported in $O^*(n^{1/2})+O(k)$ time.
\end{theorem}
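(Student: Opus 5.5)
The plan is to derive \thmref{NNLP} from the two-level partition tree $\Psi$ described above, taking the bound for the secondary structures $\Sigma_v$ as given by \lemref{cylinder_range_search}. That lemma supplies linear size, $O(n_v\log n_v)$ preprocessing, $O^*(n_v^{1/2})$ emptiness time, and reporting in additional $O(k)$ time for cylinders that are wide at $v$. Part (ii) follows almost directly: for a query line $\ell$ and radius $\rho$, let $\cyl$ be the cylinder of radius $\rho$ with axis $\ell$. A point lies within distance $\rho$ of $\ell$ exactly when it lies in $\cyl$, so we run the reporting version of the query procedure on $\cyl$.

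For correctness, the nodes whose triangles $\Delta_v$ are fully contained in $\cyl^*$ and are handed to $\Sigma_v$, together with the leaves that are checked by brute force, have pairwise disjoint subsets $P_v$. They cover $P\cap\cyl$, because a point of $P$ whose projection lies in $\cyl^*$ is either in a node fully inside the strip or in a node crossed by $\bd\cyl^*$, and in the latter case we recurse. For the query time, the recurrence $Q(n)\le c_1\sqrt r\,Q(2n/r)+O^*(n^{1/2})$ with constant $r$ solves to $O^*(n^{1/2})$. The per-level cost is geometric-like: at depth $i$ we visit $(c_1\sqrt r)^i$ nodes, each of size $(2/r)^i n$, so the secondary cost at depth $i$ is $(c_1\sqrt r)^i\bigl((2/r)^in\bigr)^{1/2}=(c_1\sqrt2)^i n^{1/2}$. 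Summed over $O(\log n)$ levels this gives $n^{1/2}\cdot n^{O(\log(c_1\sqrt2)/\log r)}$, which is $O(n^{1/2+\eps})$ for $r$ a large enough constant. Each reported point appears in exactly one canonical subset, so the reporting term is $O(k)$.

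The storage is $O(n\log n)$ naively. We reduce it to $O(n)$, and the preprocessing to $O(n\log n)$ expected, by attaching secondary structures only at every $(\eps\log n)$-th level as described above. At a node without a secondary structure whose triangle lies inside $\cyl^*$, we descend to its descendants at the next level that does carry them. This multiplies the query cost by at most $r^{\eps\log n}=n^{O(\eps)}$, which is absorbed in $O^*(\cdot)$. Cylinders whose axis is parallel to the $yz$-plane are handled by the simpler variant, for example after a generic rotation of coordinates, or by a separate structure of the same kind.

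For part (i), the main step is turning the decision and reporting structure into an NN query via parametric search, as cited from~\cite{Ag:rs,AM93}. We simulate the emptiness query generically at the unknown optimum $\rho^*=\dist(\ell,P)$. The only $\rho$-dependent comparisons are the following:
\begin{enumerate}[(a)]
\item whether the lines bounding $\cyl^*$, which are at offset $\rho$ from the projection of $\ell$ and scaled by the axis slope, cross or contain a triangle $\Delta_v$;
\item the predicates inside $\Sigma_v$;
\item point-in-cylinder tests at the leaves.
\end{enumerate}
Each is a sign of a constant-degree polynomial in $\rho$, whose roots are critical values. We resolve them by binary search using the decision procedure itself. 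Batching them with Megiddo's technique, or using Chan's randomized optimization technique since the decision problem decomposes into canonical subproblems, gives $O^*(n^{1/2})$ total time.

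The main obstacle is that this parametric search must also run through the secondary structures. I would rely on the fact that the query in \lemref{cylinder_range_search} is built from constant-complexity semi-algebraic tests on test-set ranges and elementary cells, all algebraic in $\rho$, so the generic simulation stays within $O^*(\cdot)$ overhead. Once $\rho^*$ is found, a final reporting or emptiness query at $\rho^*$, perturbed slightly, returns the nearest point.
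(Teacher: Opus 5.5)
Your proposal is correct and follows the paper's argument: the primary simplicial-partition tree on $P^*$ with wide-cylinder secondary structures from \lemref{cylinder_range_search}, the recurrence $Q(n)\le c_1\sqrt r\,Q(2n/r)+O^*(n^{1/2})$, storing secondary structures only every $\eps\log n$ levels to get linear size, and parametric search to turn emptiness queries into NN queries. Your version spells out the recurrence solution and the parametric-search simulation in more detail than the paper does; one small correction is that the $xy$-projection of a radius-$\rho$ cylinder is a strip of half-width exactly $\rho$ about the projected axis, with no dependence on the axis slope, which does not affect your argument.
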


\paragraph{Range queries for wide cylinders.}
\label{sec:wide}

To complete the analysis, we now consider the case of ``wide'' cylinders, with respect to a cell 
$\Delta$ in the $xy$-plane, namely cylinders $\cyl$ satisfying $\cyl^*\supseteq \Delta$.

Let $P$ be a set of $n$ points in $\reals^3$, and let $\Delta$ be a triangle in $\reals^2$ ($xy$-plane) 
that contains $P^*$, the $xy$-projection of $P$. We describe a data structure that, for a query cylinder 
$\cyl$ whose $xy$-projection $\cyl^*$ contains $\Delta$, detects whether $\cyl\cap P\ne\emptyset$ 
(or reports all points of $\cyl \cap P$). Since we assume here that $\Delta\subset\cyl^*$,  
the width of $\cyl^*$ is at least that of $\Delta$. By splitting $P$ into $O(1)$ subsets and 
querying each subset with $\cyl$ separately, we can assume, without loss of generality, 
that the width of $\Delta$ is at most $w_0$, for some sufficiently small but fixed constant $w_0 < 1$, 
whose value will be set below (see \lemref{good_directions}), and that the radius of the 
query cylinder is at least $1$. 
We refer to cylinders of radius at least $1$ as \emph{wide} cylinders.
We may thus assume, without loss of generality, that $P$ lies in a vertical 
slab of width $w_0 < 1$ bounded by two vertical planes $H$ and $H'$, and that
the normals of $H$ and $H'$ are $(\pm 1,0,0)$. 

By \lemref{test-set}, to answer a range query amid $P$ with a wide cylinder, 
it suffices to describe an algorithm for constructing a test set for wide cylinders. 
In the remainder of the subsection we describe an algorithm that, for a given parameter $r>1$,
constructs a test set of size $O^*(r^6)$ in time $nr^{O(1)}$, with $\zeta(m)=O^*(m^2)$ 
(cf.\ \lemref{generalized_cylinders}) and $\delta=1$. Plugging these bounds in 
\lemref{test-set}, we obtain the following:

\begin{lemma}
  \label{lem:cylinder_range_search}
  Let $P$ be a set of $n$ points in ${\reals}^3$, and let $\Delta$ be a triangle in $\reals^2$ 
  containing the $xy$-projection of $P$. Then $P$ can be preprocessed, in $O(n\log n)$ expected 
  time, into a data structure of size $O(n)$, so that, for a cylinder $\cyl$ whose $xy$-projection 
  contains $\Delta$, an emptiness query with $\cyl$ can be answered in $O^*(n^{1/2})$ time, and all 
  $k$ points of $\cyl\cap P$ can be reported in $O^*(n^{1/2}) + O(k)$ time.
\end{lemma}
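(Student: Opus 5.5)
The plan is to reduce the lemma to the setting already prepared in the text: narrow slab, radius at least $1$. We then apply \lemref{test-set} with a test set whose union complexity is $\zeta(m)=O^*(m^2)$. Since $\zeta^{-1}(n)=n^{1/2-o(1)}$, the query bound $O^*(n/\zeta^{-1}(n))$ becomes $O^*(n^{1/2})$. The reporting bound, the linear storage and the near-linear expected preprocessing all come directly from \lemref{test-set}.

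\emph{Normalization, using $\Delta\subseteq\cyl^*$.} For a cylinder $\cyl$ of radius $\rho$ whose axis is not vertical, $\cyl^*$ is a strip of width exactly $2\rho$. The hypothesis $\Delta\subseteq\cyl^*$ therefore gives $2\rho\ge \mathrm{width}(\Delta)$. We scale so that $\mathrm{width}(\Delta)=2$, which forces $\rho\ge 1$ for every admissible query. We then rotate about the $z$-axis so that the direction realizing the width of $\Delta$ is the $x$-axis. Next we cut $\Delta$ by lines $x=\mathrm{const}$ into $O(1/w_0)=O(1)$ pieces of width at most $w_0$, and split $P$ accordingly. We build a separate structure for each piece and query each one; this costs only a constant factor.

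The hypothesis is used a second time here. Every piece $\Delta'$ still satisfies $\Delta'\subseteq\cyl^*$. Hence every vertical line through a point of $\Delta'$ meets $\cyl$ in a nonempty $z$-interval $[\lambda_\cyl(x,y),\mu_\cyl(x,y)]$. Over the slab between $H$ and $H'$, the range $\cyl$ is thus the sandwich between the graphs of two bivariate semi-algebraic functions. These graphs are defined on all of $\Delta'$ and do not develop vertical tangencies there. This is the structural fact that makes wide cylinders tractable.

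\emph{Test set.} For a parameter $r$, we take a random sample $R\subseteq P$ of size $\Theta(r\log r)$. The candidate ranges are canonical ``generalized cylinders'': sandwich regions of the above form, each determined by $O(1)$ points of $R$ through a fixed combinatorial rule, for example touching-point constraints on the upper and lower boundary. A cylinder in $\reals^3$ has five degrees of freedom, so this gives $r^{O(1)}$ candidates; I expect $O^*(r^6)$. We keep only the candidates that contain at most $n/r$ points of $P$, checked by brute force in $nr^{O(1)}$ time. This gives (P1).

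For (P2), the plan is to show that the union of any $m$ of these generalized cylinders, restricted to the slab, has complexity $O^*(m^2)$; this is what the text refers to as \lemref{generalized_cylinders}. The vertical decomposition of the complement of the union then has $O^*(m^2)$ cells by \lemref{union}(i), so $\zeta(m)=O^*(m^2)$. For (P3) with $\delta=1$, take any $(n/r)$-shallow wide cylinder $\cyl$. We continuously deform it, enlarging it and sliding its boundary, until it becomes pinned by sample points of $R$. The resulting canonical range contains $\cyl$ and, by the $\eps$-net property of $R$, remains $O(n/r)$-shallow; adjusting constants turns this into $n/r$.

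\emph{Main obstacle.} I expect the hardest step to be the deformation argument for (P3), together with the union bound for (P2). The deformation must keep the range inside the class of sandwich regions over the slab, so that no vertical tangency or silhouette enters the slab. This is exactly where the assumptions $\rho\ge1$ and width $<w_0$ are needed: a suitable choice of ``good directions'' (the \lemref{good_directions} mentioned in the text) keeps the silhouette curves of $\cyl$ outside the slab throughout the deformation. My first attempt would be to parametrize the two boundary functions $\lambda_\cyl,\mu_\cyl$ on the slab. I would then show that, once $w_0$ is small enough, each pair of such boundaries meets in a curve that is $x$-monotone within the slab; this would yield the quadratic union bound by a standard charging argument.
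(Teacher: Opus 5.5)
Your overall framework matches the paper's. You normalize to a slab of width at most $w_0$ with query radius at least $1$, and your normalization via the width of $\Delta$ is fine. You then invoke \lemref{test-set} with $\zeta(m)=O^*(m^2)$ and $\delta=1$. However, the two ingredients that carry all the weight are not established, and the mechanisms you sketch would not work. These are the test set with property (P3) and the union bound (P2).

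\emph{Test set and (P3).} The ``enlarge and pin'' deformation cannot stay within cylinders. The containment $\cyl\subseteq\cyl'$ forces the two axes to be parallel, so enlargements can only change the axis position and the radius, which are three of the five parameters. The axis direction remains a continuum, so no finite family of canonical cylinders arises. If you instead allow unspecified ``sandwich regions pinned by sample points,'' you lose control of what you need later: constant complexity and bounded VC-dimension for the net argument, and wideness of the canonical ranges, on which (P2) depends.

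The missing idea is to define the canonical ranges in the $5$-dimensional parameter space of wide cylinders:
\begin{enumerate}[(i)]
\item take a sample $N$ of $O(r\log r)$ points and, for each $p\in N$, the region $K_p$ of cylinders containing $p$;
\item take the vertical decomposition of the $(\le k)$-level of their arrangement, with $k=\lceil c\log r\rceil$ (this is where the size $O^*(r^6)$ comes from);
\item set $\gamma_\tau=\bigcup\{\cyl \mid \pp_\cyl\in\tau\}$ for each cell $\tau$.
\end{enumerate}
Then (P3) with $\delta=1$ is immediate: a shallow $\cyl$ contains at most $c\log r$ points of $N$, so $\pp_\cyl$ lies in some cell $\tau$ and $\cyl\subseteq\gamma_\tau$. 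Constant complexity follows by quantifier elimination. Shallowness follows because $q\in\gamma_\tau$ forces $K_q$ to meet $\tau$. Moreover, $\gamma_\tau$ is a union of cylinders of radius at least $1$, so every boundary point of $\gamma_\tau$ has an interior tangent ball of radius at least $1$. This is the property (P2) needs.

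\emph{Union bound (P2).} Your vertical-sandwich description of $\cyl$ over $\Delta'$ is not the relevant structure. Vertical lines are parallel to $H$ and $H'$ and never leave the slab, so this description yields no envelope structure. Likewise, $x$-monotone pairwise intersection curves plus a ``standard charging argument'' cannot give $O^*(m^2)$. For example, take very thin layers around the planes $z=x+c_i$, $z=y+d_j$, $z=-x-y+e_k$, with $m$ planes of each kind and spacing $\Theta(w_0/m)$. These are sandwich regions whose boundaries meet pairwise in $x$-monotone lines, yet their union has $\Theta(m^3)$ complexity inside a slab of width $w_0$. So any valid argument must use wideness in a specific, transverse way.

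That is the argument of \lemref{good_directions}. Take $w_0=2\sin^2(\pi/\kappa)$, and use the interior tangent ball of radius at least $1$. Call a direction good if it makes an angle less than $\pi/2-\pi/\kappa$ with the slab normal and an angle at least $\pi/\kappa$ with every tangent line of the patch. A line in a good direction that enters $\gamma_\tau$ then leaves the slab before exiting $\gamma_\tau$. Consequently, for one of a constant number of canonical directions $\rho$, every vertex of $\C(\QQ)\cap\sigma$ lies on the boundary of the region between two envelopes: the $\rho$-upper envelope of the $\rho$-upper patches and the $\rho$-lower envelope of the $\rho$-lower patches. Such sandwich regions have complexity $O^*(m^2)$, and \lemref{union}(i) then yields $\zeta(m)=O^*(m^2)$.
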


\subsection{Test-set construction}
\label{subsec:test-construct}
As above, let $P\subset\reals^3$ be a  set of $n$ points lying in a vertical slab $\sigma$ of width 
$w_0=2\sin^2 (\pi/\kappa)$ for some sufficiently large constant $\kappa$ (whose value will be fixed later), bounded by two 
vertical parallel planes $H$ and $H'$ whose normals are $(\pm 1,0,0)$. We now describe an algorithm for constructing 
a test set for answer range queries with wide cylinders, i.e., cylinders of radius at least $1$. 

Let $\CC$ be the family of cylinders of radius at least $1$ whose axes are not parallel to the $yz$-plane. 
(Handling the degenerate case where the axis is parallel to that plane is easier, and omitted here.)
A cylinder $\cyl \in\CC$ can be represented by a point 
$\pp_{\cyl} = (y_{C,1},z_{C,1},y_{C,2},z_{C,2},\rho_C)\in\reals^5$, where $\rho_C\ge 1$ is the 
radius of $\cyl$, and $(0,y_{C,1},z_{C,1})$ and $(1, y_{C,2},z_{C,2})$ are
the intersection points of the axis of $\cyl$ with the respective planes $x=0$ and $x=1$.
We thus identify $\cyl$ with the point $\pp_\cyl$ in the halfspace\footnote{%
  This is the only place where we need the assumption of the axis of the query cylinder not 
  being parallel to the $yz$-plane, thus allowing us to use real, rather than projective, parametric 5-space.}
$\rho\ge 1$ in $\reals^5$.

For a cylinder $C_\yy$ represented by a point $\yy = (y_1,z_1,y_2,z_2,\rho)$ in $\reals^5$,
with $\rho\ge 1$, and for a point $\xx\in\reals^3$, we can define the predicate $\Pi(\xx,\yy)$ 
that is $1$ if and only if $\xx\in \cyl_\yy$ (or equivalently, $\xx$ is within distance $\rho$ 
from the axis of $\cyl_\yy$) by an $8$-variate polynomial inequality
\begin{equation} \label{eq:cyl-pred}
\Pi(\xx,\yy): 
\|\xx-\pp_1\|^2-\left\langle \xx-\pp_1, \frac{\pp_2-\pp_1}{\|\pp_2-\pp_1\|}\right\rangle^2 \le \rho^2,
\end{equation}
where $\pp_1=(0,y_1,z_1)$ and $\pp_2=(1,y_2,z_2)$.
For a point $q \in \reals^3$, we can define the region 
\begin{equation}
K_q := \{ \yy=(y_1,z_1,y_2,z_2,\rho) \in\reals^5 \mid \rho\ge 1, \Pi(q,\yy) =1\} ,
\end{equation}
to be the set of all points representing wide cylinders 
that contain $q$. $K_q$ is clearly a semi-algebraic set of constant complexity.

\paragraph{Algorithm.}
Let $r \ge 1$ be a parameter. We construct a test set $\test$ for $(n/r)$-shallow ranges in $\CC$, as follows.
We choose a random sample $N$ of $O(r\log{r})$ points of $P$, with an appropriate constant of proportionality. 
We then form the set of regions $N^{*} = \{ K_p \mid p \in N\}$ in $\reals^5$, and construct their 
arrangement $\A(N^{*})$. 
Let $k = \ceil{c\log{r}}$ be an integer parameter, where $c > 0$ is an appropriate constant of 
proportionality. Let $\A_{\le k}(N^*)$ be the set of all points of $\A(N^{*})$ at \emph{level} 
at most $k$, that is, these points represent all wide cylinders that contain at most $k$ points of $N$. 
We compute the vertical decomposition $\VeD{\A}_{\le k}(N^*)$ of $\A_{\le k}(N^*)$, whose size is 
$O^*(r^6)$~\cite{Koltun-04a}. (Here we simply construct the vertical decomposition of the
entire arrangement and extract from it the relevant prisms for $\A_{\le k}(N^*)$.)
By a standard random-sampling argument~\cite{HW87}, a sufficiently large choice of $c$ 
implies that, with probability at least $1/2$, each cell of $\VeD{\A}_{\le k}(N^*)$ 
is crossed by at most $n/r$ surfaces of $N^*$.  If this is not the case, we discard $N$ 
and choose another random subset, until we find, in expected $O(1)$ trials, one with the desired property.

Let $\tau$ be a cell in $\VeD{\A}_{\le k}(N^*)$. We associate with $\tau$ a 
\emph{generalized cylinder} $\gcyl_{\tau} = \bigcup \{\cyl \in \CC \mid \pp_\cyl \in \tau\}$,
the volume swept by a continuous motion of a wide cylinder, as its representative point varies over $\tau$.
See Figure~\ref{fig:generalized_cylinders} for an illustration.\footnote{This image was found on the web, but we were unable to trace its source.} 
We set $\test = \{ \gcyl_{\tau} \mid \tau \in \VeD{\A}_{\le k}(N^*) \}$. We state a few key properties 
of $\test$, which are similar to those stated in~\cite{ShSh} (proofs of some of them are missing in~\cite{ShSh}).

\begin{figure}[htb]
  \centering
  \includegraphics[width=4.0in]{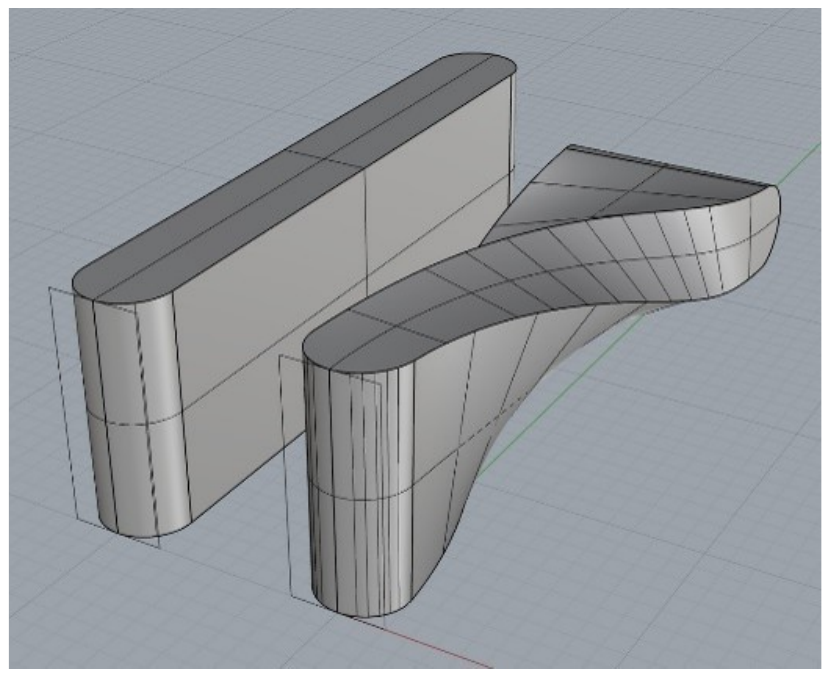}
  \caption{An illustration of generalized cylinders.}
  \label{fig:generalized_cylinders}
\end{figure}

\begin{lemma} \label{lem:test-cyl}
The set $\test$ has the following properties:
\begin{enumerate}[(i)]
\item Each generalized cylinder $\gcyl_{\tau} \in \test$ is a semi-algebraic set of constant complexity, 
and it is $(n/r)$-shallow with respect to $P$ with probability at least $1/2$ (assuming that the constant 
$r$ is chosen sufficiently large).
\item For any $(n/r)$-shallow wide cylinder $\cyl\in\CC$, there is a generalized cylinder 
$\gcyl_\tau\in\test$ that contains $\cyl$.
\end{enumerate} 
\end{lemma}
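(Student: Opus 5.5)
Part (i) splits into semi-algebraicity and shallowness; part (ii) uses the structure of the level-$\le k$ region.

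\emph{Semi-algebraicity.} Each prism $\tau\in\VeD{\A}_{\le k}(N^*)$ is a semi-algebraic set of constant complexity in $\reals^5$: it is defined by a constant number of polynomials of bounded degree coming from at most a constant number of the surfaces $\partial K_p$, via the recursive vertical decomposition. We write
\[
\gcyl_\tau=\{\xx\in\reals^3 \mid \exists\, \yy\in\tau:\ \Pi(\xx,\yy)=1\},
\]
where $\Pi$ is the predicate \eqref{cyl-pred} with the denominator cleared, which is legitimate because $\|\pp_2-\pp_1\|\ge 1$. This is the projection onto $\reals^3$ of a constant-complexity semi-algebraic subset of $\reals^8$. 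By quantifier elimination (Tarski--Seidenberg with effective bounds) it is semi-algebraic, and its complexity depends only on the complexity of $\tau$ and of $\Pi$, hence is $O(1)$.

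\emph{Shallowness.} A point $p\in P$ lies in $\gcyl_\tau$ if and only if some $\yy\in\tau$ satisfies $p\in\cyl_\yy$, that is, if and only if $K_p\cap\tau\neq\emptyset$. We split these points into two kinds:
\begin{itemize}
\item points with $\tau\subseteq K_p$;
\item points whose surface $\partial K_p$ crosses $\tau$.
\end{itemize}
The second kind is controlled by the conflict-list size. The standard $\eps$-net / Clarkson--Shor argument \cite{HW87} applies to the range space of prisms defined by $O(1)$ surfaces, with the sample size $|N|=O(r\log r)$. It guarantees, with probability at least $1/2$, that every prism of $\VeD{\A}(N^*)$ is crossed by at most $n/r$ of the sets $K_p$ with $p\in P$, not merely of $N^*$; this is the property the algorithm enforces. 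The first kind is controlled by the level. Since $\tau$ lies in $\A_{\le k}(N^*)$, any point $\yy\in\tau$ lies in at most $k$ sets $K_p$ with $p\in N$. If more than $n/r$ points of $P$ had $K_p\supseteq\tau$, the cylinder $\cyl_\yy$ would contain more than $n/r$ points of $P$, hence at least $|N|/r\cdot\Theta(1)>k$ sample points. The $\eps$-approximation / relative-approximation property of $N$ for cylinder ranges, which have bounded VC-dimension, gives this once $c$ is chosen large, and it contradicts the level bound. Summing the two kinds gives $|P\cap\gcyl_\tau|=O(n/r)$. The constant is absorbed by rescaling $r$, or by running the argument with $r$ replaced by $2r$.

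\emph{Part (ii).} Let $\cyl\in\CC$ be $(n/r)$-shallow. Then $|\cyl\cap N|\le k$ with high probability, again by the relative-approximation property of $N$ with $k=\ceil{c\log r}$. So $\pp_\cyl$ lies in $\A_{\le k}(N^*)$ and therefore in some prism $\tau$ of its vertical decomposition, which covers it. By the definition of $\gcyl_\tau$ we get $\cyl\subseteq\gcyl_\tau$, so $\delta=1$.

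The main obstacle is the shallowness argument. One must make precise that a single random sample simultaneously serves as an $\eps$-net for the crossing bound and as a relative approximation for the level condition. Here I would invoke the relative $(p,\eps)$-approximation theorem for ranges of bounded VC-dimension, with $p=1/r$, to calibrate $k$ against $|N|$.
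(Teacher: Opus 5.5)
Your proposal is correct at the paper's level of rigor and follows essentially the same route. Semi-algebraicity comes from quantifier elimination and part (ii) from the level bound $|C\cap N|\le k$. For shallowness, you split the points of $P\cap\gcyl_\tau$ into those with $\tau\subseteq K_p$ (controlled by a relative approximation for cylinder ranges) and those whose $\partial K_p$ crosses $\tau$ (controlled by an $\eps$-net), whereas the paper applies a single shallow-net argument to all $q$ with $K_q$ containing or crossing $\tau$.

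The one inaccurate remark is that the constant can be absorbed by rescaling $r$. Rescaling also shifts the threshold in (ii). Moreover, a $\gcyl_\tau$ covering a cylinder with $n/r$ points of $P$ generally also picks up points whose surfaces cross $\tau$. So what you, and the paper, actually establish is that each $\gcyl_\tau$ is $O(n/r)$-shallow, which is all that Lemma~\ref{lem:test-set} needs.
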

\begin{proof}
(i) It is well known, by the construction of vertical decompositions, that every cell $\tau$ is a 
semi-algebraic set defined as a conjunction of at most $2\cdot 5 = 10$ polynomial inequalities, 
each of constant degree. Let $\Phi_\tau: \reals^5\rightarrow\{0,1\}$ be the resulting predicate, i.e.,
$\Phi_\tau(\yy)=1$ if $\yy \in \tau$. By applying \eqref{cyl-pred}, the generalized cylinder 
$\gcyl_\tau$ can be expressed as
\[ 
\gcyl_\tau = \{ \xx\in\reals^3 \mid \exists \yy \text{ such that } \Pi(\xx,\yy) \wedge \Phi_\tau(\yy) \} .
\]
Using a quantifier-elimination procedure~\cite{BPR}, $\gcyl_\tau$ can be expressed as a quantifier-free 
formula, simply consisting of Boolean operations on polynomial inequalities, implying that
$\gcyl_\tau$ is a semi-algebraic set. Since $\Pi$ and $\Phi_\tau$ are constant-complexity semi-algebraic 
predicates, so is the formula produced by the quantifier-elimination procedure (albeit with larger
constants). Hence, $\gcyl_\tau$ is a semi-algebraic set of constant complexity.

Next, suppose that $\gcyl_\tau$ contains a point $q \in P$. Let $\cyl$ be a cylinder with 
$p_\cyl \in \tau$ such that $q\in \cyl$. Then, by definition, $p_\cyl \in K_q$. 
Since $p_\cyl\in\tau$ and $\tau$ is contained in a cell of $\A(N^*)$, we conclude that 
if $\tau \subset K_q$ then $q\in N$. But the level of $\tau$ in $\A(N^*)$ is at most 
$k=\ceil{c\log r}$, so at most $k$ regions of $N^*$ contain $\tau$. Following the same 
argument as in~\cite{ShSh} (based on the notion of so-called shallow nets), one can now 
argue that, with a suitable choice of constants, with probability at least $1-1/r^{O(1)}$, 
there are at most $n/r$ points $q$ of $P$ such that $K_q$ contains or crosses $\tau$. Hence,
$\gcyl$ is $(n/r)$-shallow with probability at least $1-1/r^{O(1)}$. Taking the probability
union bound over all cells of $\VeD{\A}_{\le k}(N^*)$, we conclude that all generalized 
cylinders in $\test$ are $(n/r)$-shallow with probability at least $1/2$.

(ii) Since $N$ is a random sample of size $O(r\log r)$ points, it follows from~\cite{ShSh}
that, with a suitable choice of the constants, for all $(n/r)$-shallow cylinders $\cyl$, 
we have $|\cyl\cap N | \le c\log r$. Hence, the level of the point 
$\pp_C$ in $\A(N^*)$ is at most $c\log r$, implying that there 
exists a cell $\tau \in \VeD{\A}_{\le k}(N^*)$ that contains 
$\pp_C$. By construction $\gcyl_\tau$ contains $C$, as claimed.
\end{proof}

\paragraph{Union complexity of the test-set.}
We next prove that, for any subset $\QQ\subseteq\test$ of $m$ generalized cylinders,
the complexity of $\C(\QQ)\cap\sigma$, where $\sigma$ is the vertical slab containing $P$, is $O^*(m^2)$. 
By \lemref{union}, we can then conclude that $\C(\QQ)\cap\sigma$ can be decomposed 
into $O^*(m^2)$ elementary cells.  It is known~\cite{SA,VD1} that the complexity of 
$\C(\QQ)\cap\sigma$ is $O(m^2+\chi)$, where $\chi$ is the number of vertices on 
$\bd\C(\QQ)\cap \sigma$, so it suffices to show that $\chi = O^*(m^2)$. Here is 
a high-level overview of the proof: we partition the boundary of each generalized 
cylinder $\gamma$ into $O(1)$ \emph{surface patches} and choose a set $\Z$ of $O(1)$ 
canonical directions. For each direction $u \in \Z$, we choose two subsets $\P^-_u, \P^+_u$ 
of surface patches over all generalized cylinders, and consider the boundary of the 
\emph{sandwich region} $\Sigma_u$ enclosed between the lower envelope of $\P^-_u$
and the upper envelope of $\P_u^+$ 
in direction $u$ (i.e., we rotate the coordinate frame so that $u$ is the $z$-axis). 
We argue that each vertex of $\C(\QQ)\cap\sigma$ appears on the boundary of at least one $\Sigma_u$.
The result in~\cite{ASS-96} implies that the overall complexity of these sandwich 
regions, and thus of $\C(\QQ)\cap\sigma$, is only $O^*(m^2)$.  We now give the proof in detail.

Let $\kappa$ be a sufficiently large constant. For each generalized cylinder 
$\gcyl \in \test$, we partition its boundary into \emph{canonical surface patches},
such that each patch is smooth and the outer normals of the points on a single patch 
do not vary by more than $2\pi/\kappa$ (as points on the unit sphere). 
Since $\gcyl$ is a semi-algebraic set of constant complexity, $\bd{\gcyl}$ 
can be partitioned into $O(1)$ canonical surface patches, where the 
constant of proportionality depends on $\kappa$ and on the complexity of $\gcyl$.\footnote{%
  Since $\gcyl$ is a semi-algebraic set of constant complexity, its Gauss map 
  $N_\gcyl: \gcyl\rightarrow\sphere^2$ is also a semi-algebraic set of constant complexity
  (with a suitable parameterization of $\sphere^2$). We can thus decompose $N_\gcyl$ into
  $O(1)$ pieces, each of which is monotone in the radial direction, i.e., any ray emanating 
  from the origin intersects a patch in at most one point. We can then partition each piece 
  into $O(\kappa^2)$ pieces, so that each piece lies in a spherical cap of radius $\pi/\kappa$. 
  Each patch of $\gcyl$ corresponds to each resulting piece. }
(As a concrete example demonstrating this decomposition, if $\gcyl$ is just a standard cylinder, 
the canonical surface patches are portions of the boundary of $\gcyl$, each of which is bounded by a 
pair of generator lines, parallel to the axis of $\gcyl$, with an angular span of $2\pi/\kappa$ of their normals. 
)
Let $\P$ be the set of canonical patches created over all generalized cylinders in $\QQ$,
clipped within the slab $\sigma$.

A direction $\rho$ in the unit sphere $\sphere^{2}$ is {\it good} for a canonical patch $\tau$ 
if the following two conditions hold:
\begin{enumerate}[(i)] 
\item The angle between $\rho$ and the (outer) normal $\nn$ of either of the planes $H$, $H'$ 
bounding $\sigma$ is smaller than $\pi/2 - \pi/\kappa$.  
(That is, $\rho$ is not ``nearly parallel'' to $H$, $H'$.)
\item Each line tangent to (the relative interior of) $\tau$ at any point $p$ forms an angle 
of at least $\pi/\kappa$ with $\rho$.  (That is, $\rho$ is not ``too tangent'' to $\tau$.)
\end{enumerate}

We say that a direction $\rho$ is \emph{good for a vertex} $v$ of $\A(\QQ)$
if it is good for each of the at most three patches of $\QQ$ containing $v$.
(When $\rho$ is good, $-\rho$ is also considered to be good, and we treat both directions together.)

Following the analysis in~\cite{AS-00, Ezra-11}, we obtain the following lemma, whose proof, 
omitted here, can be found in the original papers.
\begin{lemma}
\label{lem:sample-dirs}
There exists a set $\Z$ of $O(1/\kappa^2)$ directions in $\sphere^2$ so that every vertex
of $\A(\QQ)$ has at least one good direction in $\Z$, assuming that $\kappa$ is sufficiently
large.
\end{lemma}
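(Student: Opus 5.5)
The plan is to take $\Z$ to be a fine, essentially uniform net of directions on $\sphere^2$ and then argue by measure: for each vertex, the bad directions form a small set, so some net direction must be good. Fix a vertex $v$ of $\A(\QQ)$ lying on at most three canonical patches $\tau_1,\tau_2,\tau_3$. For each $\tau_i$ we need the set $B_i\subset\sphere^2$ of directions that violate condition (ii). The key fact is that the outer normals of $\tau_i$ all lie in a spherical cap of radius $\pi/\kappa$ around some normal $\nn_i$. A line tangent to $\tau_i$ at $p$ is orthogonal to the normal $\nn(p)$. So if $\rho$ forms an angle less than $\pi/\kappa$ with such a line, then the angle between $\rho$ and the great circle $\nn(p)^\perp$ is less than $\pi/\kappa$. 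Consequently $B_i$ is contained in the band of angular half-width $2\pi/\kappa$ around the great circle $\nn_i^\perp$, and this band has area $O(1/\kappa)$ on $\sphere^2$.

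Condition (i) is handled the same way. Its bad set $B_0$ consists of directions within $\pi/\kappa$ of the great circle orthogonal to $(1,0,0)$, which is also a band of area $O(1/\kappa)$. The union $B_0\cup B_1\cup B_2\cup B_3$ therefore has total area $O(1/\kappa)$. For $\kappa$ large this is much less than $4\pi$, so the good directions for $v$ form a set whose complement is covered by four narrow bands.

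We still have to show that this good set contains a point of a fixed finite $\Z$ that is independent of $v$. Each band has width $\Theta(1/\kappa)$, and the complement of four bands of that width contains a spherical cap of radius $c/\kappa$ for some constant $c>0$. To see this, four great circles cut the sphere into $O(1)$ faces, and the largest face has inradius bounded below by an absolute constant. That bound is where care is needed: two or more of the circles could nearly coincide, but then they just merge into a thicker band of width still $O(1/\kappa)$, and the argument goes through. Shrinking the chosen face by $O(1/\kappa)$ on each side leaves a cap of radius $\Omega(1)$, which certainly exceeds $c/\kappa$. We take $\Z$ to be a $(c/\kappa)$-net of $\sphere^2$, symmetric under $\rho\mapsto-\rho$. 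Every cap of radius $c/\kappa$ contains a point of $\Z$, so every vertex has a good direction in $\Z$.

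The size of such a net is $O(\kappa^2)$, which for a fixed constant $\kappa$ is $O(1)$. I read the statement's ``$O(1/\kappa^2)$'' as expressing this, i.e., the count is governed by the net spacing $\Theta(1/\kappa)$. The main obstacle is the geometric step bounding $B_i$ by a thin band around $\nn_i^\perp$, in particular checking that tangency at non-generic points of the patch (near its boundary) is controlled by the normal-cone condition. This holds because the patches are smooth with normals confined to a cap of radius $\pi/\kappa$, so every tangent line at every point lies within angle $\pi/\kappa$ of the plane $\nn_i^\perp$.
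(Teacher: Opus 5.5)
Your proof is correct. It is essentially the standard argument from the works the paper cites for this lemma (the paper omits its own proof): take a uniform $(c/\kappa)$-net of directions, and note that the bad directions for a vertex lie in $O(1)$ bands of angular half-width $O(1/\kappa)$ around great circles. You are also right to read the bound as $O(\kappa^2)$, since $O(1/\kappa^2)$ is a typo.

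The one step worth tightening is the existence of a cap of radius $c/\kappa$ free of bad directions. Your face/inradius argument is vaguer than it needs to be. Instead, observe directly that the $(c/\kappa)$-neighborhoods of the four bands are themselves bands of half-width $O(1/\kappa)$, with total area $O(1/\kappa)<4\pi$ for large $\kappa$. Hence some direction $\rho_0$ lies outside all of them, and the cap of radius $c/\kappa$ around $\rho_0$, which contains a net point, consists only of good directions.
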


\begin{figure}[htb]
  \centering
	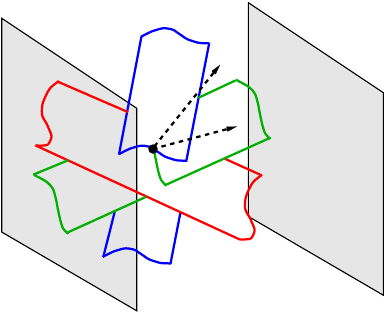
  	\includegraphics[width=2.3in]{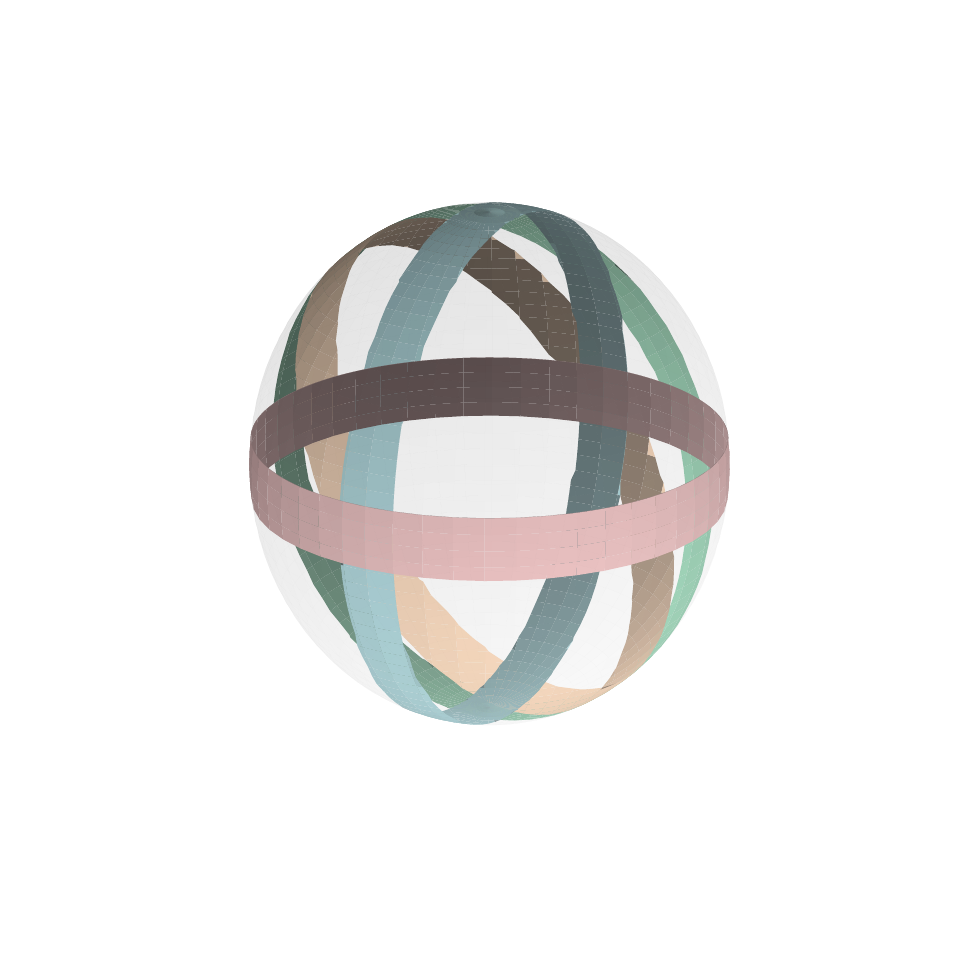}

	\caption{(left) Good direction of a vertex; (right) set of bad directions for a vertex.}
	\label{fig:good-dir}
\end{figure}

We now prove the main property of $\C(\QQ)$.

\begin{lemma} [Good directions; see~\cite{AS-00, Ezra-11}]
  \label{lem:good_directions}
Let $\sigma$ be a vertical slab of width at most $w_0 = 2\sin^2(\pi/\kappa)$, where $\kappa$ is a 
sufficiently large constant.  Let $\tau$ be a surface patch in $\P$ that intersects $\sigma$, let 
$\gcyl$ be the generalized cylinder whose boundary contains $\tau$, and let
$\rho \in \Z$ be a good direction for $\tau$.
For any point $u \in \tau \cap \sigma$, if a line $\ell$ in direction $\rho$ enters 
$\gcyl$ at $u$, then $\ell$ intersects $\bd\sigma$ before exiting $\gcyl$.
\end{lemma}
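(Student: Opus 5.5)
We argue by contradiction, using only that $\gcyl$ is a union of cylinders of radius at least $1$. Suppose the line $\ell$ in direction $\rho$ enters $\gcyl$ at $u\in\tau\cap\sigma$ and exits $\gcyl$ at a point $u'$ before reaching $\bd\sigma$. Then the segment $uu'$ lies in $\sigma$, and its length is at most $w_0/\cos\theta$, where $\theta$ is the angle between $\rho$ and the normal $\nn=(\pm1,0,0)$. By condition (i) of goodness, $\theta<\pi/2-\pi/\kappa$, so $|uu'|\le w_0/\sin(\pi/\kappa)=2\sin(\pi/\kappa)$. The plan is to show that any chord of $\gcyl$ through the boundary point $u$, in a direction making angle at least $\pi/\kappa$ with the tangent plane at $u$, has length greater than $2\sin(\pi/\kappa)$.

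\textbf{Local geometry at $u$.} Since $u\in\bd\gcyl$ and $\gcyl$ is a union of closed cylinders parametrized over the closure of $\tau$ (a compact family), some cylinder $\cyl\subseteq\gcyl$ with radius $R\ge 1$ has $u\in\bd\cyl$. Because $\cyl\subseteq\gcyl$, $\cyl$ cannot protrude beyond $\bd\gcyl$ near $u$. So at a smooth point $u$ of the patch $\tau$, the tangent plane of $\tau$ at $u$ is also tangent to $\bd\cyl$, and the outer normals agree. This gives interior tangency, as in an envelope argument.

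\textbf{Chord bound.} The line $\ell$ enters $\gcyl$ at $u$ and is transversal to the tangent plane there, by condition (ii). Hence it points into $\cyl$ (inward normal component positive) and traverses a chord of $\cyl$ starting at $u$.

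Project onto the plane orthogonal to the axis of $\cyl$. The cross-section is a disk of radius $R$, and the chord has length $2R\cos\phi/\cos\psi\ge 2R\cos\phi$. Here $\phi$ is the angle between the projected direction and the inward normal, and $\psi$ is the angle between $\rho$ and that orthogonal plane.

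Elementary geometry then shows that the chord length is at least $2R\sin\alpha$, where $\alpha$ is the angle between $\rho$ and the tangent plane at $u$. The minimum chord through a boundary point at angle $\alpha$ to the tangent plane is the one achieved in the circular cross-section. Since $\alpha\ge\pi/\kappa$ by condition (ii) and $R\ge1$, the chord has length at least $2\sin(\pi/\kappa)$.

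Every point of this chord lies in $\cyl\subseteq\gcyl$, so $\ell$ cannot exit $\gcyl$ earlier. Comparing with $|uu'|\le 2\sin(\pi/\kappa)$ yields the contradiction; the boundary case of equality is excluded by the strict inequality in (i). The constant $w_0=2\sin^2(\pi/\kappa)$ was chosen exactly so that these two bounds match.

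\textbf{Main obstacle.} The delicate step is the tangency claim: showing that the supporting cylinder $\cyl$ through $u$ shares the tangent plane of $\tau$ at $u$, and that $\ell$ actually enters $\cyl$ rather than merely touching it. I would handle this by noting that the relative interior of $\tau$ is smooth and that $\cyl\subseteq\gcyl$ lies on one side of $\bd\gcyl$ locally. Any line crossing the common tangent plane transversally inward therefore enters the interior of $\cyl$. Points where the parametrization degenerates (e.g., on $\bd\tau$ in parameter space) are handled by a limiting or compactness argument.
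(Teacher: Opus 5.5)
Your argument is correct and is essentially the paper's proof. Both use the same three ingredients: a wide cylinder $C\subseteq\gcyl$ of radius at least $1$ through $u$ that shares the tangent plane of $\tau$; a chord of length at least $2\sin(\pi/\kappa)$, from condition (ii); and a comparison with the slab width $2\sin^2(\pi/\kappa)$, from condition (i). The differences are cosmetic: the paper bounds the chord using the ball of radius $r$ centered at the axis point nearest $u$ (which lies in $C$ and is tangent at $u$) rather than your cross-sectional computation for the cylinder, and it argues directly rather than by contradiction.
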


\begin{proof}
It follows by the definition of a generalized cylinder that $u$ lies on the boundary of a wide cylinder
$\cyl$ fully contained in $\gcyl$. Let $r\ge 1$ 
be the radius of $\cyl$, and let $x$ be the point on the axis of $\cyl$ closest to $u$, i.e., $\|x-u\| = r$.
Let $B$ be the ball of radius $r$ centered at $x$; we thus have $B \subset \cyl \subseteq \gcyl$ and $u \in \bd{B}$.
Let $h_u$ be the tangent plane to $B$ at $u$. Using techniques from elementary differential 
geometry~\cite{Bar-2010}, it can be verified that $h_u$ is also the tangent plane of $\gcyl$ 
at $u$ (which holds since $u$ is a point in the relative interior of $\tau$).

Let $u'$ be the other intersection point of $\bd{B}$ and $\ell$, 
and let $u''$ be the other intersection point of $\bd{\gcyl}$ and $\ell$.
Since $B \subset \cyl \subseteq \gcyl$ it follows that $u'$ must be contained in $\gcyl$, so,
as we walk along $\ell$ from $u$ into $\gamma$, $u''$ must lie further than $u'$.
We now observe that $|uu'|$ is minimized when $uu'$ 
forms an angle of $\pi/\kappa$ with $h_u$, since by property (ii) of good directions, 
any such angle is at least $\pi/\kappa$ with respect to the tangent plane to $\tau$ (which is 
identical to $h_u$ as just noted).  It thus follows that $|uu'| \ge 2\sin{(\pi/\kappa)}$. 
Indeed, assume, without loss of generality, that $h_u$ is parallel to the $xy$-plane and 
that $u$ is the north pole of $B$. By the above condition, the vector $uu'$ must lie in a 
vertical cone (its axis is the vertical ray from $u$ in the negative $z$-direction) whose apex 
angle is $\pi/2 - \pi/\kappa$. Furthermore  $r\ge 1$, so the lower bound on $|uu'|$ follows.

Finally, $uu'$ forms an angle of at most $\pi/2 - \pi/\kappa$ with the normal $\nn$
(to either of the two bounding planes of $\sigma$), so the length of the projection of $uu'$ 
on $\nn$ is at least $2|uu'| \sin{(\pi/\kappa)} \ge 2\sin^2{(\pi/\kappa)}$. Since the width 
of $\sigma$ is at most $2\sin^2(\pi/\kappa)$, the lemma follows.
\end{proof}

\begin{figure}[htb]
  \centering
  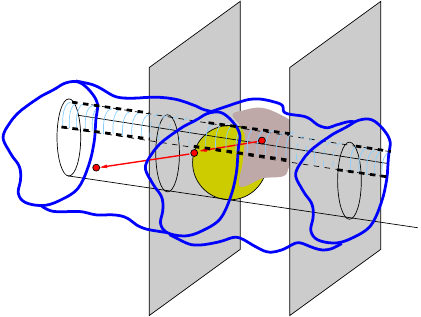
	\caption{Proof of Lemma~\ref{lem:good_directions}.}
  \label{fig:walk_ray}
\end{figure}

For a direction $\rho\in\Z$, let $\P_{\rho}$ be the set of canonical patches 
for which $\rho$ is a good direction.
A (clipped) patch $\tau$ in $\P_\rho$ is called \emph{$\rho$-upper} 
(resp., \emph{$\rho$-lower}) if, for any point $u \in \tau$, 
the point $u + \alpha \rho$ lies in the exterior (resp., interior) 
of the generalized cylinder $\gcyl$ associated with $\tau$, for any sufficiently small
parameter $\alpha > 0$. Let $\P_\rho^+$ (resp., $\P_\rho^{-}$) be the 
set of the $\rho$-upper (resp., $\rho$-lower) patches of $\P_\rho$.
The \emph{$\rho$-upper envelope} $\EE_\rho^+$ of $\P_\rho^{+}$ is the set of points 
$u$ on the patches of $\P_\rho^{+}$, such that a ray from $u$ in the
$(+\rho)$-direction does not meet any other patch in $\P_\rho^{+}$.
The \emph{$\rho$-lower envelope} $\EE_\rho^-$ of $\P_\rho^{-}$ is defined analogously.

\begin{lemma}
	\label{lem:sandwich-vertex}
Let $v$ be a vertex of $\C(\QQ)\cap\sigma$.
Then there exists a direction $\rho\in\Z$ such that $v$ is 
a vertex of the sandwich region enclosed between $\EE_\rho^+$ and $\EE_\rho^-$.
\end{lemma}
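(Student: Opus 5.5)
Fix a vertex $v$ of $\C(\QQ)\cap\sigma$. Since $v$ lies on $\bd\C(\QQ)$, it lies on the boundaries of (generically) three generalized cylinders of $\QQ$, so on at most three canonical patches, and it is not in the interior of any member of $\QQ$. By \lemref{sample-dirs} there is a direction $\rho\in\Z$ that is good for $v$, i.e., good for each of the patches through $v$. Each such patch is therefore in $\P_\rho$. Since $\rho$ is not tangent to any of them (condition (ii)), each is either $\rho$-upper or $\rho$-lower. So $v$ is a vertex of $\A(\P_\rho^+\cup\P_\rho^-)$. It remains to show that $v$ lies on $\EE_\rho^+$ or $\EE_\rho^-$ for every patch through it, and that no patch separates $v$ from the sandwich region.

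Consider the line $\ell$ through $v$ in direction $\rho$. I would show that the open ray from $v$ in direction $+\rho$ meets no $\rho$-upper patch $\tau'$ (clipped to $\sigma$), and symmetrically for $-\rho$ and $\rho$-lower patches. Suppose the $+\rho$ ray from $v$ met a $\rho$-upper patch $\tau'$ of some $\gcyl'\in\QQ$ at a point $w\in\sigma$. At $w$ the line $\ell$ exits $\gcyl'$ going in direction $+\rho$, so walking from $w$ in direction $-\rho$, the line $\ell$ enters $\gcyl'$ at $w$.

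Now apply \lemref{good_directions} with direction $-\rho$, which is good whenever $\rho$ is. It says that $\ell$ stays inside $\gcyl'$ until it leaves the slab $\sigma$. Condition (i) guarantees that $\ell$ crosses $\sigma$ transversally, so the segment of $\ell$ from $w$ back toward $v$ remains inside $\sigma$ until it reaches $\bd\sigma$. Since $v\in\sigma$ lies on this segment between $w$ and $\bd\sigma$, we get that $v$ lies in $\gcyl'$. If $\gcyl'$ is not one of the cylinders whose boundary contains $v$, then $v$ lies in the interior of $\gcyl'$. That contradicts $v\in\C(\QQ)$. The symmetric argument handles $\rho$-lower patches and the $-\rho$ ray.

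Hence each upper patch through $v$ contributes to $\EE_\rho^+$ at $v$, and each lower patch through $v$ contributes to $\EE_\rho^-$. Moreover, $v$ is met by three surfaces of these two envelopes, so it is a vertex of the sandwich region's boundary. This region is the set of points lying below $\EE_\rho^+$'s complement and above $\EE_\rho^-$, in the $\rho$-frame.

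The main obstacle is degeneracy in applying \lemref{good_directions}: $w$ might lie on a patch boundary, or $\gcyl'$ might be one of the cylinders through $v$, so that $v\in\bd\gcyl'$ rather than in its interior. For the first issue I would assume general position, or use the closedness of the patches. For the second, I would argue that $\ell$ being strictly inside $\gcyl'$ on the open segment from $w$ up to $\bd\sigma$ puts $v$ in the interior unless $v=w$. In that case $\tau'$ passes through $v$ and is itself the patch attaining the envelope, so there is no contradiction and nothing to prove.
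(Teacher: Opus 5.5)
Your proposal is correct and is essentially the paper's own proof. You take a good direction $\rho$ from \lemref{sample-dirs}; if the $+\rho$ ray from $v$ met an upper patch $\tau'$ at a point $w\in\sigma$ (or the $-\rho$ ray met a lower patch), you apply \lemref{good_directions} to the reversed ray entering $\gcyl'$ at $w$, which places $v$ in the interior of $\gcyl'$ and contradicts $v\in\C(\QQ)$. One small slip, which plays no role in the argument: your parenthetical description of the sandwich region has the orientation garbled, since the region lies above $\EE_\rho^+$ and below $\EE_\rho^-$.
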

\begin{proof}
Suppose $v$ is incident to three patches $\tau_1,\tau_2, \tau_3$ (the case where $v$ is
formed by only two patches is much simpler). By \lemref{sample-dirs}, there is a direction 
$\rho\in\Z$ that is good for $v$, i.e., good for all three patches, so all of them belong 
to $\P_\rho$. Some (perhaps none, or all) of these patches belong to $\P_\rho^+$ and some
to $\P_\rho^-$. Fix one of these patches $\tau_i$, and suppose that $\tau_i\in\P_\rho^+$.
Then $v$ lies on the $\rho$-upper envelope $\EE_\rho^+$. Indeed, if not, the ray $v+\alpha\rho$,
for $\alpha>$, intersects another upper patch $\tau'$ at a point $v'$ before exiting the slab 
$\sigma$. Since $\tau'$ is an upper patch, the oppositely directed ray $v'-\alpha\rho$ enters 
the generalized cylinder $\gcyl'$ containing $\tau'$, but the ray exits 
$\gcyl'$ before leaving $\sigma$ because $v \in \C(\QQ)$ and 
$vv'\subset\sigma$, which contradicts \lemref{good_directions}. 
Hence, we conclude that $v$ lies on $\EE_\rho^+$. Similarly if $\tau_i$ is a lower patch, 
$v$ lies on the lower envelope $\EE_\rho^-$ of $\P_\rho^-$. Since this holds for each $i$,
we conclude that $v$ is a vertex of the sandwich region enclosed between $\EE_\rho^+$ 
and $\EE_\rho^-$. (It is either a vertex of $\EE_\rho^+$, or a vertex of $\EE_\rho^-$, 
or an intersection of an edge of $\EE_\rho^+$ with $\EE_\rho^-$, 
or an intersection of an edge of $\EE_\rho^-$ with $\EE_\rho^+$.) 
\end{proof}

The number of vertices of the sandwich region is $O^*(m^2)$~\cite{ASS-96}. 
Hence, for any direction $\rho\in\Z$, there are $O^*(m^2)$ vertices 
on $\C(\QQ)\cap\sigma$ for which $\rho$ is a good direction.
Summing over all good directions of $\Z$, we obtain the following:
\begin{lemma}
  \label{lem:generalized_cylinders}
  Let $\sigma$ be a vertical slab of width $w_0$, and let $\QQ \subseteq \test$ be any subset of $m$ 
  generalized cylinders. Then the complexity of $\C(\QQ)\cap \sigma$, the complement of the union of $\QQ$ clipped within $\sigma$, is $O^*(m^2)$.
\end{lemma}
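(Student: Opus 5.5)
The argument assembles the ingredients just established, so the plan is a charging argument over the directions of $\Z$. First, by the cited fact~\cite{SA,VD1}, the complexity of $\C(\QQ)\cap\sigma$ is $O(m^2+\chi)$, where $\chi$ is the number of vertices of $\bd\C(\QQ)\cap\sigma$. Faces and edges contribute only $O(m^2)$ beyond vertex-charged terms, since each generalized cylinder has constant complexity and any two boundaries meet in $O(1)$ curves. The extra vertices created by clipping to $\bd\sigma$ lie on two planes, and each is formed by two boundaries and a plane, so they number $O(m^2)$. It therefore suffices to bound $\chi$.

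For each vertex $v$ of $\C(\QQ)\cap\sigma$ lying on three (or two) boundary patches, \lemref{sample-dirs} supplies a direction $\rho\in\Z$ that is good for $v$. \lemref{sandwich-vertex} then shows that $v$ is a vertex of the sandwich region $\Sigma_\rho$ between $\EE^+_\rho$ and $\EE^-_\rho$. I charge $v$ to such a $\rho$, so $\chi\le\sum_{\rho\in\Z}|V(\Sigma_\rho)|$.

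Next I bound each term. Rotate coordinates so that $\rho$ is vertical. Each patch of $\P_\rho$ is $\rho$-monotone, because by condition (ii) of goodness no tangent line is parallel to $\rho$. Hence each such patch, after possibly splitting it into $O(1)$ pieces, is the graph of a partially defined bivariate semi-algebraic function of constant complexity. There are $O(m)$ such functions, since each of the $m$ generalized cylinders contributes $O(1)$ canonical patches, the constant depending on $\kappa$. The result of~\cite{ASS-96} bounds the complexity of the sandwich region between the lower envelope of one family and the upper envelope of another by $O^*(m^2)$. Since $|\Z|=O(1)$ (as $\kappa$ is a fixed constant), summing gives $\chi=O^*(m^2)$, and hence the claimed bound.

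The main obstacle is justifying the application of~\cite{ASS-96}. One must ensure that the clipped patches are genuinely graphs of partial functions in direction $\rho$ with boundaries of constant complexity, which follows from goodness, semi-algebraicity, and the clipping to $\sigma$ adding only $O(1)$ boundary pieces. One must also ensure that the degenerate vertex types (vertices on only two patches, or on $\bd\sigma$) are either covered by the same charging or counted separately within $O(m^2)$.
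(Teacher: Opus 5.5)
Your proposal is correct and follows essentially the same route as the paper: reduce to bounding $\chi$ via the $O(m^2+\chi)$ bound, charge each vertex of $\C(\QQ)\cap\sigma$ to a good direction $\rho\in\Z$ using \lemref{sample-dirs} and \lemref{sandwich-vertex}, bound the vertices of each sandwich region by $O^*(m^2)$ via \cite{ASS-96}, and sum over the $O(1)$ directions. The extra details you add are correct and fill in points the paper leaves implicit: splitting patches into $O(1)$ graphs in direction $\rho$ (needed because goodness alone gives only local monotonicity), and counting separately the $O(m^2)$ vertices that lie on $\bd\sigma$ or on only two patches.
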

Combining this lemma with \lemref{union}, we obtain the following:
\begin{corollary}
  \label{cor:generalized_cylinders}
  Let $\sigma$ be a vertical slab of width $w_0$, and let $\QQ \subseteq \test$ be any subset of $m$ 
  generalized cylinders. Then $\C(\QQ)\cap \sigma$ can be partitioned into $O^*(m^2)$ elementary cells.
\end{corollary}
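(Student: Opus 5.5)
The corollary follows by feeding \lemref{generalized_cylinders} into \lemref{union}(i), with the slab $\sigma$ handled explicitly. Concretely, I would apply \lemref{union}(i) not to $\QQ$ alone but to the enlarged family
\[
\QQ' = \QQ \cup \{H^-, H'^+\},
\]
where $H^-$ and $H'^+$ are the two closed halfspaces lying outside $\sigma$, bounded by $H$ and $H'$. Then $\C(\QQ')$ is exactly $\C(\QQ)\cap\sigma$, up to closure.

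Every member of $\QQ'$ is a constant-complexity semi-algebraic region in $\reals^3$. For the generalized cylinders this is \lemref{test-cyl}(i); for the two halfspaces it is immediate. So the hypotheses of \lemref{union}(i) hold for $\QQ'$, and it gives a vertical decomposition of $\C(\QQ')$ of complexity $O^*(m^2+\psi(m+2))$. Here $\psi(j)$ is the maximum complexity of $\C(\R)$ over subsets $\R\subseteq\QQ'$ with $|\R|\le j$. The prisms of this decomposition are the elementary cells required by the corollary.

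It remains to show $\psi(j)=O^*(j^2)$. Take any $\R\subseteq\QQ'$. Let $\R_0=\R\cap\QQ$, and note that $\R_0$ is itself a subset of $\test$.
\begin{itemize}
\item If $\R$ contains both halfspaces, then $\C(\R)=\C(\R_0)\cap\sigma$, which has complexity $O^*(|\R_0|^2)$ by \lemref{generalized_cylinders}.
\item If $\R$ omits one or both halfspaces, then the complement is clipped only by a halfspace, or not clipped at all. \lemref{generalized_cylinders} is stated only for $\C(\QQ)\cap\sigma$, so this case needs separate care.
\end{itemize}

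This second case is the main obstacle. I would handle it in one of two ways.

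\emph{Restrict the relevant substructure.} In the decomposition we only need the cells of $\C(\QQ')$, and every one of them lies inside $\sigma$. The proof of \lemref{union}(i) in \cite{VD1} uses $\psi$ only through the complexity of complement-of-union cells that arise in the random-sampling and charging arguments. If those arguments can be run after intersecting everything with $\sigma$, then the bound needed is only for $\C(\R_0)\cap\sigma$, and \lemref{generalized_cylinders} supplies it.

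\emph{Extend the lemma directly.} Alternatively, I would check that the vertex-counting proof of \lemref{generalized_cylinders} applies verbatim to $\C(\R_0)\cap\sigma$. That proof uses \lemref{sandwich-vertex}, the good-direction property of \lemref{good_directions}, and the sandwich bound of \cite{ASS-96}. The extra vertices created where $\bd\C(\R_0)$ meets $H$ or $H'$ are intersections of the two planes with pairs of patches. Their number is bounded by the complexity of a planar cross-section of the union of the $m$ regions, which is at most $O^*(m^2)$ since each plane meets each $\gcyl$ in a constant-complexity planar region.

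Either route gives complexity $O^*(m^2)$ for every cell family that \lemref{union}(i) actually invokes. It then yields the $O^*(m^2)$ elementary cells claimed.
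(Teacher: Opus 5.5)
Your main line is the paper's own argument. The paper obtains the corollary in one sentence, by feeding \lemref{generalized_cylinders} into \lemref{union}(i), and it tacitly applies the latter relative to the slab $\sigma$. The subtlety you isolate is real, and the paper does not address it. With $\QQ'=\QQ\cup\{H^-,H'^+\}$, the function $\psi$ in \lemref{union}(i) ranges over \emph{all} subfamilies of $\QQ'$, including those that omit a halfspace. For these no $O^*(j^2)$ bound is available: the Remark following \lemref{cyl-test-set} points out that unclipped complements of generalized cylinders can have $\Omega(m^3)$ complexity.

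Your second route does not close this case, so your closing claim that ``either route'' works is wrong. What it bounds is $\C(\R_0)\cap\sigma$, now counting the vertices on $H$ and $H'$ explicitly. That is your first bullet, already covered by \lemref{generalized_cylinders}, not the problematic case. For a complement clipped by at most one of the two halfspaces, the vertex-counting argument breaks down. \lemref{good_directions} only guarantees that the chord of $\gcyl$ from $u$ in direction $\rho$ has a component of length at least $w_0$ along $\nn$. That forces the line out of a slab of width $w_0$, but not out of a halfspace, since the line may move away from the single bounding plane. The contradiction used in \lemref{sandwich-vertex} then disappears.

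Your first route is the correct fix, and it is exactly what the paper takes for granted. As you wrote it, though, it is only a conditional claim about the internals of \cite{VD1}. You can make \lemref{union}(i) apply literally instead. Replace each $\gcyl\in\QQ$ by $\gcyl\cup H^-\cup H'^+$, which is still a constant-complexity semi-algebraic region. Every nonempty subfamily of the modified regions then has complement $\C(\R_0)\cap\sigma$ (up to closure), where $\R_0\subseteq\test$ is the corresponding subfamily of original cylinders. Hence $\psi(j)=O^*(j^2)$ by \lemref{generalized_cylinders}. The complement of the whole modified family is $\C(\QQ)\cap\sigma$, so \lemref{union}(i) yields the $O^*(m^2)$ elementary cells. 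This holds modulo the usual general-position conventions, since the modified regions all share boundary portions on $H$ and $H'$.
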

Putting everything together, we obtain the main result of the test-set construction:

\begin{lemma}
  \label{lem:cyl-test-set}
  Let $P \subset \reals^3$ be a set of $n$ points in $\reals^3$ lying in a vertical slab of width 
  at most $w_0$, and let $r\ge 1$ be a parameter. A test set $\test$ of size $O^*(r^6)$ for 
  all cylinders of radius at least $1$ that are $(n/r)$-shallow with respect to $P$, with 
  $\zeta(m) = O^*(m^2)$ and $\delta=1$, can be computed in $nr^{O(1)}$ expected time.
\end{lemma}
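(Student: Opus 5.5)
The plan is to assemble \lemref{cyl-test-set} from the construction in \subsecref{test-construct} and the lemmas already established for it. We verify, in order, the size of $\test$, properties (P1)--(P3) with the claimed parameters, and the expected running time.

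\emph{Size.} The test set $\test$ has one generalized cylinder per prism of $\VeD{\A}_{\le k}(N^*)$. We bound the number of prisms by the size of the vertical decomposition of the whole arrangement $\A(N^*)$ of $|N|=O(r\log r)$ constant-complexity regions in $\reals^5$. The bound $O^*(n^{2d-4})$~\cite{CEGS-91,Koltun-04a} with $d=5$ gives $O^*((r\log r)^6)=O^*(r^6)$.

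\emph{(P1) and (P3).} Property (P1) is \lemref{test-cyl}(i): with probability at least $1/2$, every $\gcyl_\tau$ is $(n/r)$-shallow. Property (P3) with $\delta=1$ is \lemref{test-cyl}(ii): every $(n/r)$-shallow wide cylinder $\cyl$ has $\pp_\cyl$ at level at most $k$, hence $\pp_\cyl$ lies in some prism $\tau$, and so $\cyl\subseteq\gcyl_\tau$. Both events have constant probability and can be checked, so we resample $N$ until both hold. This takes $O(1)$ expected trials.

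\emph{(P2).} Property (P2) with $\zeta(m)=O^*(m^2)$ is \corref{generalized_cylinders}. We only need the decomposition inside the slab $\sigma$, because all of $P$ lies in $\sigma$; the partition-tree machinery of \lemref{test-set} therefore only uses cells meeting $\sigma$. We intersect each elementary cell with $\sigma$, which preserves constant complexity.

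\emph{Running time.} Computing $\A(N^*)$ and its vertical decomposition takes $r^{O(1)}$ time. Quantifier elimination~\cite{BPR} produces each $\gcyl_\tau$ in $O(1)$ time. The shallowness of each $\gcyl_\tau$ is checked by testing all $n$ points, which costs $O(n)$ per test range, hence $O(nr^6)$ in total. Together with the $O(1)$ expected resamplings, this gives $nr^{O(1)}$ expected time.

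The only delicate point is that the shallowness guarantee in \lemref{test-cyl}(i) is probabilistic, so it must be verified explicitly. Brute-force counting of $|\gcyl_\tau\cap P|$ does this within the $nr^{O(1)}$ budget.
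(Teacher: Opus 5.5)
Your assembly matches the way the paper puts this lemma together: the $O^*(r^6)$ size from the $O^*(n^{2d-4})$ vertical-decomposition bound in $\reals^5$, (P1) and (P3) from \lemref{test-cyl}, (P2) from Corollary~\ref{cor:generalized_cylinders} restricted to $\sigma$, and the $nr^{O(1)}$ time bound. Your direct brute-force check of (P1), testing all $n$ points against each $\gcyl_\tau$, is fine.

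The step that does not work as written is the claim that the (P3) event ``can be checked,'' so that you can resample $N$ until it holds. The event you use is that \emph{every} $(n/r)$-shallow wide cylinder has level at most $k$ in $\A(N^*)$. Equivalently, no point of the region at level $>k$ has depth at most $n/r$ in the arrangement of all $n$ regions $K_q$, $q\in P$. This is a statement about a continuum of cylinders. Certifying it needs minimum-depth information about the full arrangement of the $K_q$'s in $\reals^5$ (or, prism by prism, about the arrangement of each conflict list), which is far beyond an $nr^{O(1)}$ budget. The paper does not verify (P3). It resamples only on the checkable crossing condition and takes (P3) from the high-probability shallow-net bound $|\cyl\cap N|\le c\log r$ of~\cite{ShSh}, so (P3) is guaranteed only with high probability.

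You can either argue the same way, or make (P3) deterministic by changing which prisms you keep, mirroring the rule the paper uses for the test set of \lemref{btest}. Keep every prism $\tau$ of $\VeD{\A}(N^*)$ for which at most $n/r$ points $q\in P$ satisfy $\tau\cap\{\rho\ge 1\}\subseteq K_q$. This is an $O(1)$ test per pair, hence $O^*(nr^6)$ time in total.

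\begin{itemize}
\item If $\cyl$ is $(n/r)$-shallow and $\pp_\cyl\in\tau$, then every such $q$ lies in $\cyl$. So $\tau$ is kept and $\cyl\subseteq\gcyl_\tau$, which gives (P3) deterministically.
\item Conversely, $q\in\gcyl_\tau$ iff $K_q$ meets $\tau$. Hence $|\gcyl_\tau\cap P|\le 2n/r$ as soon as every prism is crossed by at most $n/r$ of the $K_q$. That condition is checkable and holds with constant probability when $|N|=O(r\log r)$.
\end{itemize}

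The factor $2$ in (P1) is harmless in \lemref{test-set}; the paper accepts the same slack in \lemref{btest}. The size bound and (P2) are unaffected, since the union argument applies to any family of generalized cylinders of wide cylinders.
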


\medskip
\noindent{\bf Remark:}
The proof of \lemref{generalized_cylinders} is a variant of the analysis in~\cite{Ezra-11}, 
where standard cylinders were considered. However, a major difference between the two settings, of 
standard cylinders and generalized cylinders, is that whereas the overall complexity of the union of 
standard cylinders is $O^*(m^2)$~\cite{AS-00, Ezra-11}, this is not necessarily the case for generalized 
cylinders. Indeed, we can form a set of $m$ generalized cylinders by taking the Minkowski sums of planes 
in $\reals^3$ with some sufficiently small ball (it is easily checked that such a sum can be a generalized 
cylinder). Given such a collection, we can easily form a 
three-dimensional grid-like construction, which results in $\Omega(m^3)$ union complexity. The 
crucial property, however, is the fact that the generalized cylinders are \emph{wide} in 
$\sigma$, in which case \lemref{generalized_cylinders} implies that the complexity 
of the union is relatively small (i.e., nearly quadratic) if we remain inside $\sigma$.

\subsection{Segment NN queries: Linear-size data structure}
\label{subsec:seg-linear}

We now extend the data structure described in \subsecref{pt-linear} for answering fixed-radius 
neighbor queries with respect to segments, namely, for a query segment $e$ in $\reals^3$ and a 
distance parameter $\rho>0$, the problem is to determine whether there is a point of $P$ within 
distance $\rho$ from $e$. If so, we want to return such a point. Combining this data structure,
whose details follow, with parametric search, we can also answer NN queries with segments.
For a segment $e$ and a real value $t>0$, let $\cigar(e,t) := e \oplus B_t = \{ x+y \mid x\in e, y \in B_t\}$, 
where $B_t$ is the ball of radius $t$ centered at the origin. Borrowing the terminology 
of~\cite{AS-00}, we refer to $\cigar(e,t)$ as a \emph{cigar}.
We build a data structure for answering cigar-emptiness queries.

\paragraph{Data structure.}
As for the case of lines, we build a two-level data structure, where the top level is a 
linear-size partition tree $T$ on $P^*$, the $xy$-projection of $P$, for 2D halfplane range 
searching. The only difference is that at each node $v$ of the top-level tree, we now build 
a secondary data structure for answering emptiness queries with respect to wide cigars. 
More precisely, this is for the setup where $P_v$ lies in a vertical slab $\sigma$ of width 
at most $w_0$, and the radius of a query cigar is at least $1$. This requires adapting the 
test-set construction to cigars, as described below.

\begin{figure}
  \begin{center}
    \includegraphics[scale=0.5]{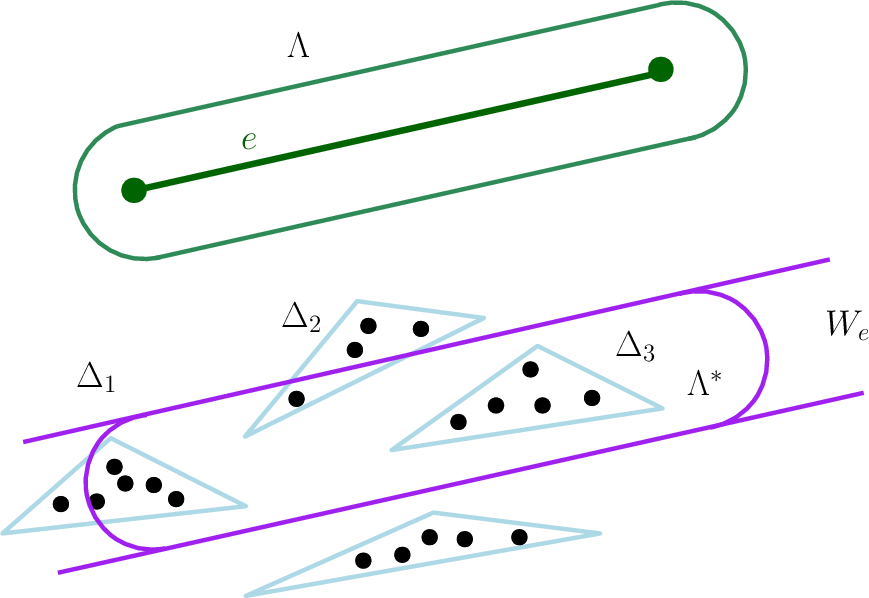}
    \caption{A query cigar and its $xy$-projection.}
    \label{fig:cigar-query}
  \end{center}
\end{figure}

\paragraph{Query procedure.}
Let $\cigar := \cigar(e, \rho)$ be a query cigar. Let $\cigar^*$ 
(resp., $e^*$) be the $xy$-projection of $\cigar$ 
(resp., $e$), let $\ell_e$ be the line containing $e$, and 
let $W_e\subset\reals^2$ be the strip of width $2\rho$ containing 
$\cigar^*$; see \figref{cigar-query}. 

We visit $T$ recursively in a top-down manner. Suppose we are at a node $v$. Let $\triangle_v$ be the triangle associated with $v$.
If $\cigar^* \cap \triangle_v = \emptyset$,
we return to the parent of $v$, so assume that $\cigar^*\cap\triangle_v\ne\emptyset$. 
If $v$ is a leaf, we test whether $\cigar_v\cap P_v\ne\emptyset$ in a brute-force manner and 
return a point in the intersection if the answer is {\sc Yes}. If $v$ is an internal node, we proceed as follows. 
If $\triangle_v \subset W_e$, we query the secondary structure with $\cigar$ to test whether 
$P_v\cap\cigar\ne\emptyset$. Finally, if $\triangle_v$ intersects a boundary line of $W_e$, we 
recursively visit the children of $v$. If any one of them returns a point of 
$P_v \cap \cigar$, we return that point; otherwise we return no.

Assuming that the secondary data structure answers a query correctly, the correctness of the 
overall query procedure follows immediately. A suitable extension of the analysis in 
\subsecref{pt-linear} implies that the overall query time is $O^*(\sqrt{n})$.

\paragraph{Test-set construction.}
Recall that we query the secondary structure with $\cigar$ only if $\triangle_v \subset W_e$, 
implying that the width of $\triangle_v$ is at most $2\rho$. We can therefore assume that we 
build the data structure for a point set $P$ that lies in a vertical slab of width at most 
$w_0$, for a sufficiently small constant $w_0$, and it is queried with a cigar of radius at 
least $1$. The test-set construction follows the same approach as in \subsecref{test-construct}. 
A cigar $\cigar :=\cigar(e,\rho)$ is represented using $7$ real parameters 
$(a, b, \rho) \in \reals^3\times\reals^3\times\reals_{\ge 0}$, where $a,b$ are the 
endpoints of the segment $e$, and $\rho$ is the radius of $\cigar$. For a point $p \in P$, 
we define $K_p \subset \reals^{7}$ to be the set of points that correspond to cigars
that contain $p$ and have radii that are at least $1$. With these definitions at 
hand, the construction of the test set $\test$, a set of \emph{generalized cigars}, 
is exactly the same as in \subsecref{test-construct}, except that we work in a seven-dimensional 
parametric space. \lemref{good_directions} still holds because for any generalized cigar $\gamma$ 
and for any point $u \in \bd\gamma\cap\sigma$, the ball of radius $1$ tangent to $\gamma$ at $u$ 
lies inside $\gamma$, and thus the proof follows verbatim, as can be verified. Hence, the complexity 
of $\C(\QQ)\cap\sigma$, for any subset $\QQ\subset\test$ of $m$ generalized cigars, is $O^*(m^2)$.
(The size of the test set is now $O^*(r^{10})$, which is the best known bound on the vertical
decomposition in $\reals^7$.)

Omitting all the fairly easy further details, we conclude that $P$, a point set lying in 
a vertical slab of width $w_0$, can be preprocessed into a linear-size data structure so that, 
for a cigar $\cigar$ of radius at least $1$, it can detect, in $O^*(\sqrt{n})$ time, whether 
$\cigar\cap P\ne\emptyset$. Putting everything together, we obtain the following main result
of this subsection:

\begin{theorem}
  \label{thm:NNSP}
  A set $P$ of $n$ points in $\reals^3$ can be preprocessed, in $O(n\log n)$ expected time, 
  into a data structure of size $O(n)$, so that (i) for any query segment $e$ in $\reals^3$, 
  the point of $P$ nearest to $e$ can be computed in $O^*(n^{1/2})$ time; 
  (ii) for a query segment $e$ and a distance parameter $\rho$, all points of $P$ 
  within distance $\rho$ from $e$ can be reported in $O^*(\sqrt{n})+O(k)$ time.
\end{theorem}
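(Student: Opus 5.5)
The plan follows the proof of \thmref{NNLP} step by step, with cylinders replaced by cigars throughout. We first build the two-level structure. The primary tree $T$ is Matou\v{s}ek's simplicial-partition tree on $P^*$ with a large constant $r$, so every line in the $xy$-plane crosses $O(\sqrt r)$ cells. At each node $v$ we store a secondary structure on $P_v$ for emptiness queries with wide cigars.

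The $xy$-projection $\cigar^*$ of $\cigar(e,\rho)$ is contained in the strip $W_e$ of width $2\rho$. A node $v$ is handed to its secondary structure only when $\triangle_v\subset W_e$, which forces the width of $\triangle_v$ to be at most $2\rho$. We split $P_v$ into $O(1)$ slabs and rescale, so that $P_v$ lies in a slab $\sigma$ of width at most $w_0$ and the cigar radius is at least $1$. Children are recursed into only when $\triangle_v$ meets $\bd W_e$, which gives $O(\sqrt r)$ recursive calls per node. Exactly as before we get the recurrence $Q(n)\le c_1\sqrt r\,Q(2n/r)+O^*(n^{1/2})$, whose solution is $O^*(n^{1/2})$. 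The secondary structures are stored only every $\eps\log n$ levels, which keeps the storage linear and the preprocessing at $O(n\log n)$.

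For the secondary structure we apply \lemref{test-set}. Its hypotheses need a test set of size $r^{O(1)}$, computable in $nr^{O(1)}$ time, with $\zeta(m)=O^*(m^2)$ and $\delta=1$. We parameterize cigars by $(a,b,\rho)\in\reals^7$ and let $K_p$ be the set of wide cigars containing $p$. We take a sample $N$ of size $O(r\log r)$ and decompose $\A_{\le k}(N^*)$ vertically, with $k=\ceil{c\log r}$. The test set consists of the generalized cigars $\gcyl_\tau=\bigcup\{\cigar\mid \pp_\cigar\in\tau\}$, and there are $O^*(r^{10})$ of them. Properties (P1) and (P3) follow verbatim from the proof of \lemref{test-cyl}: quantifier elimination gives constant complexity, and the shallow-net arguments of \cite{ShSh} give shallowness and the covering property with $\delta=1$.

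The main obstacle is (P2), namely an $O^*(m^2)$ bound on the complexity of $\C(\QQ)\cap\sigma$ for any $m$ generalized cigars. We will rerun the good-directions argument, which needs the analogue of \lemref{good_directions}. Take a point $u$ in the relative interior of a patch of $\bd\gcyl_\tau$ with $u\in\sigma$. Then $u$ lies on the boundary of some cigar $e\oplus B_t\subseteq\gcyl_\tau$ with $t\ge1$. Let $x$ be the nearest point of $e$ to $u$; the ball $B(x,t)$ lies inside the cigar, hence inside $\gcyl_\tau$, and touches $\bd\gcyl_\tau$ at $u$. Consequently the two surfaces share the tangent plane at $u$. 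The chord computation then goes through unchanged: a good direction gives a chord of length at least $2\sin(\pi/\kappa)$, whose projection onto the slab normal is at least $2\sin^2(\pi/\kappa)\ge w_0$. With this in hand, \lemref{sample-dirs}, \lemref{sandwich-vertex} and the $O^*(m^2)$ sandwich bound of \cite{ASS-96} give \lemref{generalized_cylinders} for generalized cigars, and \lemref{union}(i) then supplies the decomposition into elementary cells.

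This establishes the secondary structure with $O^*(\sqrt n)$ emptiness time and an $O(k)$ reporting overhead, which proves part (ii). Part (i) follows by parametric search~\cite{AM93}, which turns the emptiness procedure for $\cigar(e,\rho)$ into a nearest-neighbor query with only an $O^*(1)$ overhead.
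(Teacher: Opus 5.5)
Your proposal is correct and follows essentially the same route as the paper. It uses the same two-level structure: a simplicial-partition tree on $P^*$, with wide-cigar secondary structures invoked when $\triangle_v\subset W_e$. It builds the same test set of generalized cigars from $\VeD{\A}_{\le k}(N^*)$ in the $7$-dimensional parameter space, and makes the same key observation that a ball of radius at least $1$ inside a cigar touches $\bd\gcyl_\tau$ at $u$, so \lemref{good_directions}, and with it the $O^*(m^2)$ bound on $\C(\QQ)\cap\sigma$, carries over verbatim, with parametric search then yielding part (i).
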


\smallskip
\paragraph{Remark.} 
A natural question is whether the nearest-neighbor of a query triangle in $P$ can also be 
computed in $O^*(\sqrt{n})$ time using a linear-size data structure. Unfortunately, it is 
unlikely that such a data structure exists, because computing the NN of a query plane is 
as hard as answering slab-emptiness queries in $\reals^3$. The known lower bounds for simplex 
range searching also hold for slabs~\cite{Ag:rs}. A linear-size data structure with $O^*(n^{2/3})$ query 
time follows easily from the known results on semi-algebraic range searching in $\reals^3$~\cite{MP,AMS}.

\subsection{Line NN queries: Fast query time}
\label{subsec:Fast_NN1}

We now describe a data structure that handles the complementary regime of fast query time.
That is, it answers an NN query with a line among points in $\reals^3$, in $O^*(1)$ time, 
using larger storage, by working in the $4$-dimensional parametric space $\Lspace$ 
of (the query) lines, where a line $\ell$ 
in $\reals^3$, given by the equations $y=ax+c$ and $z=bx+d$, is represented as the point 
$\ell^* = (a,b,c,d)\in \Lspace$ (we will actually slightly modify this representation---see below).\footnote{%
  For convenience (and with no loss of generality if one assumes general position of the lines of $\L$), 
  we ignore the fact that this space is actually projective, and that some lines may fail
  to have this representation.}
As we will show, the structure requires $O^*(n^4)$ storage, and can be constructed in $O^*(n^4)$
expected time.

We begin by describing the distance function between a point and a line in $\reals^3$, and the 
four-dimensional Voronoi diagram induced by the points of $P$ in $\Lspace$.

\paragraph{Distance function and Voronoi diagram.}
Let $\ell^* = (a,b,c,d) \in \Lspace$.
For $a, b\in\reals$ fixed, the (unnormalized) direction of $\ell$, $(1,a,b)$,
is fixed. Let $\pi_\ell$ be the plane that is orthogonal to $\ell$ (i.e., with normal direction $(1,a,b)$) 
and passes through the origin; note that $\pi_\ell$ depends on the parameters $a, b$ but not on $c,d$, 
so we will also use $\pi(a,b)$ to denote $\pi_\ell$.
Redefine the representation of $\ell$ so that $(c,d)$ is actually the intersection point
of $\ell$ with $\pi_\ell$, in a suitable canonical coordinate frame within $\pi(a,b)$ (we omit here the 
easy details of specifying this frame, noting that it does depend (only) on $(a,b)$). Write $u=(1,a,b)$.

For a point $p\in P$, let $p^\downarrow(a,b)$ denote its projection onto $\pi(a,b)$. 
It is well known, and easy to show, that 
\[
p^\downarrow (a,b) = p - \frac{p\cdot u}{|u|^2} u .
\]

Write $p^\downarrow (a,b) = (x_p(a,b),y_p(a,b))$ (in the coordinate frame within $\pi(a,b)$).
The distance between $p$ and $\ell$, denoted by $\dist(p,\ell)$, is then the distance between 
$p^\downarrow (a,b)$ and $(c,d)$. That is,
\begin{align}
	\label{eq:dist}
\dist^2(p,\ell) & = (x_p(a,b) - c)^2 + (y_p(a,b) - d)^2 \nonumber \\
& = (x_p^2(a,b) + y_p^2(a,b)) - 2cx_p(a,b) - 2dy_p(a,b) + (c^2+d^2) .
\end{align}

For a query line $\ell$, our goal is to compute the point $p\in P$ that is closest to $\ell$, 
which is equivalent to computing $\arg\min_{p\in P} \dist^2(p,\ell)$, i.e., the point that 
minimizes (\ref{eq:dist}). Since $c^2+d^2$ is common to all points $p$, we can drop it, and 
seek the point $\arg\min_{p \in P} f_p(a,b,c,d)$, where
\begin{equation} \label{eq:fp}
f_p(a,b,c,d) = g_p(a,b) - 2cx_p(a,b) - 2dy_p(a,b) ,
\end{equation}
and $g_p(a,b) = x_p^2(a,b) + y_p^2(a,b)$. The following property follows immediately from \eqref{fp}:
\begin{lemma}
\label{lem:bisector}
(i) For any fixed pair $(a,b)$, $f_p$ is a bivariate linear function in $c$ and $d$, i.e., its graph is a plane. 

\noindent
(ii) For any fixed triple, $(a, b, c)$ (resp., $(a,b,d)$), the graph of $f_p$ (as a function of
the fourth variable) is a line.
\end{lemma}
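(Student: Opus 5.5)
The plan is to read both parts directly off the formula \eqref{fp}, $f_p(a,b,c,d) = g_p(a,b) - 2c\,x_p(a,b) - 2d\,y_p(a,b)$. The key observation is that $g_p$, $x_p$ and $y_p$ depend only on $(a,b)$, through the projection $p^\downarrow(a,b)$ onto $\pi(a,b)$ and the canonical frame of that plane. Neither $c$ nor $d$ appears in these coefficients. So the whole proof is the remark that $c$ and $d$ enter only linearly, with coefficients determined by $(a,b)$.

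For (i), I would fix $(a,b)$ and write $\alpha = g_p(a,b)$, $\beta = -2x_p(a,b)$, $\gamma = -2y_p(a,b)$. These are constants once $(a,b)$ is fixed. Then $f_p(a,b,\cdot,\cdot) = \alpha + \beta c + \gamma d$ is an affine function of $(c,d)$. Its graph over the $(c,d)$-plane is therefore a non-vertical plane in the 3-space with coordinates $(c,d,f)$. I would note in passing that "linear" here means affine, because of the constant term $g_p(a,b)$.

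For (ii), I would additionally fix $c$ and obtain $f_p = (\alpha + \beta c) + \gamma d$. This is an affine function of the single variable $d$, so its graph is a line in the $(d,f)$-plane; fixing $d$ instead gives the symmetric statement in $c$. There is no real obstacle. The only point needing care is to confirm that the canonical frame within $\pi(a,b)$, and hence $x_p$ and $y_p$, depends only on $(a,b)$ and not on $(c,d)$, which holds by construction of the representation.
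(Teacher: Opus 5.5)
Your proposal is correct and matches the paper's proof. The paper simply states that the lemma ``follows immediately from \eqref{fp},'' because $g_p$, $x_p$, $y_p$ depend only on $(a,b)$ while $c$ and $d$ enter affinely. Your clarifications, that ``linear'' means affine here and that the canonical frame in $\pi(a,b)$ depends only on $(a,b)$, are consistent with the paper's setup.
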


Let $\F = \{f_p \mid p\in P\}$ be the resulting set of $n$ $4$-variate functions.  
The Voronoi diagram of $P$ in $\Lspace$ is the minimization diagram $\M := \M_\F$ of $\F$
(namely, the 4D projection of the lower envelope of $\F$).
Each cell $\cell$ of $\M$ is associated with a point $p \in P$ that is the nearest neighbor of 
all lines $\ell$ such that $\ell^*\in\cell$.  Let $\M_p$ denote the family of cells of $\M$ that 
are associated with $p$, i.e., the Voronoi region of $p$ in $\M$ (note that a Voronoi region might 
be disconnected). For $q\ne p$, let 
\[
K_{p,q} = \{ \ell^*\in\Lspace \mid f_p(\ell^*) \le f_q(\ell^*) \}.
\]
Then $\M_p = \bigcap_{q\in P\setminus\{p\}} K_{p,q}$.

For a query line $\ell$, we wish to locate the cell of $\M$ containing $\ell^*$. The best known data 
structure can answer a query in $O^*(1)$ time using $O^*(n^5)$ space and expected preprocessing time. 
Currently, we do not know how to preprocess general four-dimensional minimization diagrams, like $\M$, 
into a data structure of size $O^*(n^4)$ for answering point-location queries in $O^*(1)$ time.
We circumvent this issue, and manage to achieve this storage bound, by exploiting some monotonicity 
properties of the Voronoi regions of $\M$, as discussed below.

\paragraph{Structure of Voronoi regions.}
For a pair of distinct points $p,q\in P$, let $\sigma_{p,q}$ denote the three-dimensional 
surface $f_p = f_q$, i.e., the bisector of $p$ and $q$ in $\Lspace$, which is defined by the equation
\[
g_p(a,b) - 2cx_p(a,b) - 2dy_p(a,b) = g_q(a,b) - 2cx_q(a,b) - 2dy_q(a,b).
\]
Assuming $y_q(a,b) \ne y_p(a,b)$ (see below for handling the case of equality), 
$\sigma_{p,q}$ is the graph of a trivariate function $\psi_{p,q}: \Lspace^{(0)} \rightarrow \reals$, where 
$\Lspace^{(0)} \subset \Lspace$ is the $3$-dimensional hyperplane $d=0$, given by
\begin{equation} \label{eq:psipq}
d = \psi_{p,q}(a,b,c) := \frac{g_q(a,b) - g_p(a,b)}{2(y_q(a,b) - y_p(a,b))} 
- \frac{x_q(a,b) - x_p(a,b)}{y_q(a,b) - y_p(a,b)} \; c .
\end{equation}
As we will be dealing with inequalities of the form $d \le \psi_{p,q}(a,b,c)$ of 
$d \ge \psi_{p,q}(a,b,c)$, we replace $\psi_{p,q}(a,b,c)$ by the following two partial functions 
$\psi_{p,q}^+, \psi_{p,q}^- : \Lspace^{(0)} \rightarrow \reals$, given by
\begin{align}
  \label{eq:surfaces}
  \begin{split}
    \psi_{p,q}^+(a,b,c) & = 
    \begin{cases}
      \psi_{p,q}(a,b,c) & y_q(a,b) - y_p(a,b) > 0 , \\
      +\infty & \text{otherwise.}
    \end{cases} \\
    \psi_{p,q}^-(a,b,c) & = 
    \begin{cases}
      \psi_{p,q}(a,b,c) & y_q(a,b) - y_p(a,b) < 0, \\
      -\infty & \text{otherwise} .
    \end{cases} 
  \end{split}
\end{align}
Using \eqref{surfaces}, we can express $K_{p,q}$ as
\begin{equation} \label{eq:kpq}
  K_{p,q} = \left\{ (a,b,c,d) \in \Lspace \mid 
  \psi_{p,q}^-(a,b,c) \le d \le \psi_{p,q}^+(a,b,c) \right\} ,
\end{equation}
noting that at each $(a,b)$ only at most one of $\psi_{p,q}^-$, $\psi_{p,q}^+$ is defined (i.e., $\ne \pm\infty$);
they are both undefined when the denominator vanishes.
For a triple $(a,b,c)$, let $\rho(a,b,c)$ be the line in the parametric space $\Lspace$ parallel to the $d$-axis 
and passing through the point $(a,b,c,0)\in\Lspace$. 
Then \eqref{kpq} and \eqref{surfaces} imply that
\begin{equation}
  \label{eq:bisect-line}
	  K_{p,q}\cap \rho(a,b,c)  = 
    \begin{cases}
	    (-\infty,\psi_{p,q}^+(a,b,c)],\qquad & y_q(a,b) - y_p(a,b) > 0 , \\
	    [\psi_{p,q}^-(a,b,c),+\infty),\qquad & y_q(a,b) - y_p(a,b) < 0 .
    \end{cases}
\end{equation}

Eq.~\eqref{surfaces} also implies that
\begin{equation} \label{eq:sandwich}
  \M_p = \left\{ (a,b,c,d) \in \Lspace \mid \max_{q\in P\setminus\{p\}} \psi_{p,q}^-(a,b,c) \le d \le 
  \min_{q\in P\setminus\{p\}} \psi_{p,q}^+(a,b,c) \right\} .
\end{equation}
That is, $\M_p$ is the \emph{sandwich region} between the lower envelope (with respect to the $d$-direction)
of the functions $\psi_{p,q}^+$ and the upper envelope of the functions 
$\psi_{p,q}^-$, over $q\in P\setminus \{p\}$. Alternatively, $\M_p=\M_p^-\cap \M_p^+$, 
where $\M_p^-$ (resp., $\M_p^+$) is the region lying below 
the lower envelope (resp., above the upper envelope) of the functions  $\psi_{p,q}^+$  (resp., $\psi_{p,q}^-$),
over $q\in P\setminus\{p\}$, in the $d$-direction.  This condition implies that $\M_p$ is monotone in the $d$-direction.
See Figure~\ref{fig:EMpm} for an illustration.

\begin{figure}[htb]
  \begin{center}
    \includegraphics[scale=0.50]{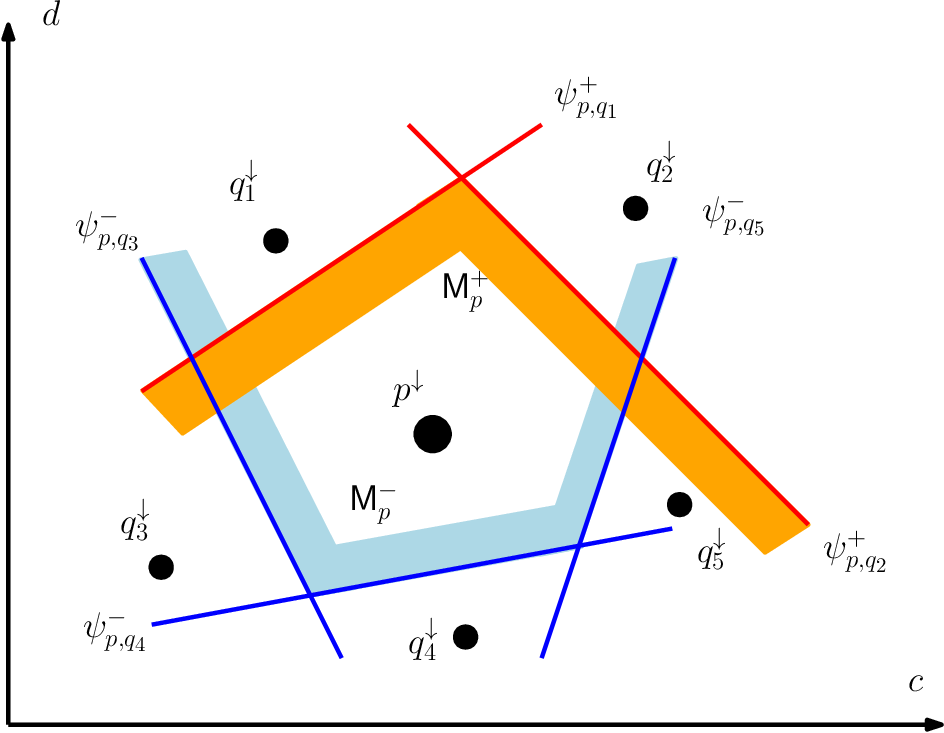} 
  \end{center}
  \caption{The structure of the decomposition of the lower envelope and the minimization diagram of $\F$.}
  \label{fig:EMpm}
\end{figure}

As already noted, $\psi_{p,q}(a,b,c)$ and its partial sub-functions $\psi^+_{p,q}(a,b,c)$, $\psi^-_{p,q}(a,b,c)$
are not defined when $y_q(a,b) = y_p(a,b)$. This situation will be dealt with later on.

We conclude this discussion with the following observation, which will be a key ingredient
for our data structure:

\begin{figure}[htb]
  \begin{center}
    \begin{tabular}{cc}
      \includegraphics[scale=0.52]{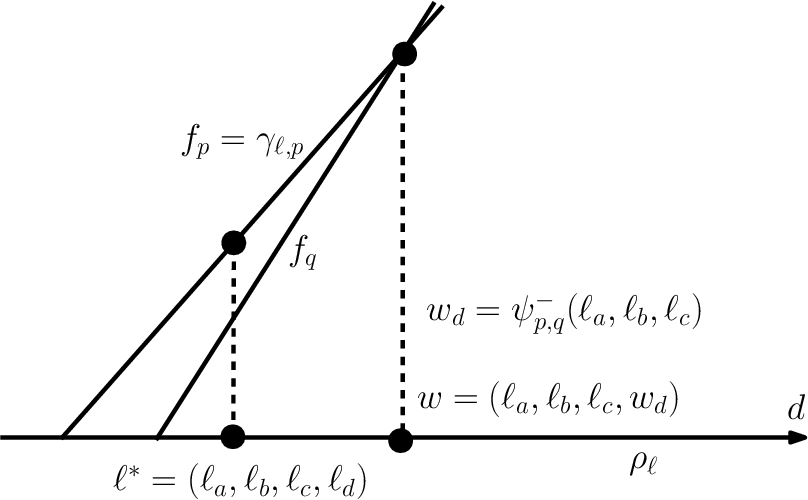} 
      & \includegraphics[scale=0.52]{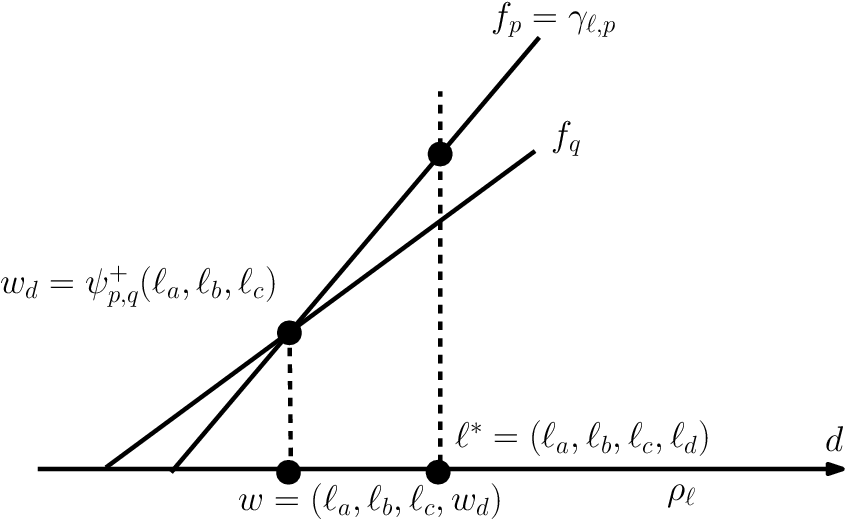}\\
      (a) 
      &(b)
    \end{tabular}
    \caption{Illustration of Lemma~\ref{lem:nearer-point}.
          (a) The case $y_q(\ell_a, \ell_b) < y_p(\ell_a, \ell_b)$;
          (b) the case $y_q(\ell_a, \ell_b) > y_p(\ell_a, \ell_b)$.
    }
    \label{fig:lem71}
  \end{center}
\end{figure}
  

\begin{lemma}
  \label{lem:nearer-point}
  Let $\ell^* = (\ell_a,\ell_b,\ell_c,\ell_d)\in \Lspace$, and let $p$ be a point of $P$. Let
  $\rho_\ell$ denote the line in the $d$-direction in $\Lspace$ passing through $\ell^*$,
  and let the line $\gamma_{\ell,p}$ be the graph of the function $f_{\ell,p}(d) = f_p(\ell_a,\ell_b,\ell_c,d)$
  (clearly, this function is independent of $\ell_d$).
  Let $q$ be a point of $P$ that is nearer to $\ell$ than $p$ (i.e., $f_q(\ell^*) < f_p(\ell^*)$)
  and $y_p(a,b)\neq y_q(a,b)$, assuming that such a point exists.
	Then $\gamma_{\ell,q}$, the graph of the linear function $f_{\ell,q}(d) = f_q(\ell_a,\ell_b,\ell_c,d)$  
	intersects $\gamma_{\ell,p}$ 
  at a unique point, whose projection on $\rho_\ell$ is denoted as $w=(\ell_a,\ell_b,\ell_c,w_d)$. 
  Furthermore, if $w_d > \ell_d$ (resp., $w_d < \ell_d$) then we have 
  $w_d = \psi_{p,q}^-(\ell_a,\ell_b,\ell_c)$ (resp., $w_d=\psi_{p,q}^+(\ell_a,\ell_b,\ell_c)$).
\end{lemma}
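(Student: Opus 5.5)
The plan is to restrict everything to the line $\rho_\ell$, where by \lemref{bisector}(ii) each $f_{\ell,p}(d)=f_p(\ell_a,\ell_b,\ell_c,d)$ is affine in $d$. From \eqref{fp}, $f_{\ell,p}(d)=g_p(\ell_a,\ell_b)-2\ell_c x_p(\ell_a,\ell_b)-2d\,y_p(\ell_a,\ell_b)$, so its slope is $-2y_p(\ell_a,\ell_b)$. Likewise $\gamma_{\ell,q}$ has slope $-2y_q(\ell_a,\ell_b)$. Since $y_p\neq y_q$ at $(\ell_a,\ell_b)$, the two lines are non-parallel and meet at a unique point. Its $d$-coordinate $w_d$ solves $f_p=f_q$ at $(\ell_a,\ell_b,\ell_c,d)$, which is exactly the defining equation of the bisector $\sigma_{p,q}$. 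Hence $w_d=\psi_{p,q}(\ell_a,\ell_b,\ell_c)$ as given in \eqref{psipq}.

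It remains to decide which of $\psi^\pm_{p,q}$ this value equals, and I will use the sign of $h(d):=f_{\ell,p}(d)-f_{\ell,q}(d)$. This function is affine with slope $2(y_q-y_p)$ and vanishes exactly at $w_d$. By hypothesis $h(\ell_d)>0$, since $q$ is nearer to $\ell$ than $p$.

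Suppose first that $w_d>\ell_d$. Then $h$ is positive to the left of its root, so it is decreasing, i.e.\ $y_q-y_p<0$. By \eqref{surfaces} the defined partial function is then $\psi^-_{p,q}$, and $w_d=\psi^-_{p,q}(\ell_a,\ell_b,\ell_c)$. As a consistency check with \eqref{bisect-line}, $K_{p,q}\cap\rho_\ell=[\psi^-_{p,q},\infty)$, and $\ell_d$ indeed lies outside it.

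Symmetrically, suppose $w_d<\ell_d$. Then $h$ is increasing, so $y_q-y_p>0$, and therefore $w_d=\psi^+_{p,q}(\ell_a,\ell_b,\ell_c)$.

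The case $w_d=\ell_d$ cannot occur, because $h(\ell_d)\neq 0$. The only delicate point is keeping the sign conventions of \eqref{surfaces} and \eqref{bisect-line} consistent; I would verify them against \eqref{bisect-line} directly rather than rederive them.
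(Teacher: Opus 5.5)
Your proof is correct and takes essentially the same approach as the paper's: on $\rho_\ell$ you compare the slopes $-2y_p$ and $-2y_q$ of the two affine functions, and combine the sign of $y_q-y_p$ forced by $f_q(\ell^*)<f_p(\ell^*)$ and the position of $w_d$ relative to $\ell_d$ with \eqref{surfaces} to decide whether $w_d$ equals $\psi^-_{p,q}$ or $\psi^+_{p,q}$. Your explicit checks that $w_d$ satisfies \eqref{psipq} and that $w_d=\ell_d$ cannot occur are small points the paper leaves implicit.
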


\begin{proof}
Refer to \figref{lem71}.
Since $q$ is closer to $\ell$ than $p$ and $y_p(\ell_a,\ell_b) \ne y_q(\ell_a,\ell_b)$,
the function $\psi_{p,q}(\ell_a,\ell_b,\ell_c)$ is defined. Since $\gamma_{\ell,p}$ and 
$\gamma_{\ell,q}$ are lines (cf.\ \lemref{bisector}), $\gamma_{\ell,q}$ lies below $\gamma_{\ell,p}$ before 
$w_d$ (i.e., for $d < w_d$) if and only if its slope is greater than that of $\gamma_{\ell,p}$, i.e.,
$-2y_q(\ell_a,\ell_b) > -y_p(\ell_a,\ell_b)$, or $y_q(\ell_a,\ell_b) - y_p(\ell_a,\ell_b) < 0$. 
Hence, by the definition of $\psi_{p,q}$ (cf.\ \eqref{surfaces}), $w_d > \ell_d$ (resp., $w_d < \ell_d$) 
implies that $w_d = \psi_{p,q}^-(\ell_a,\ell_b,\ell_c)$ (resp., $w_d=\psi_{p,q}^+(\ell_a,\ell_b,\ell_c)$).
This completes the proof of the lemma.
\end{proof}

We are now ready to describe the data structure, which is based on Lemma~\ref{lem:nearer-point}.

\paragraph{Overall data structure.}
We construct a two-level data structure. The top level is a 
partition tree $\T$ constructed on $P$. Each node $v$ of $\T$ is 
associated with a canonical subset $P_v \subset P$.
If $|P_v| \le n_0$, for some sufficiently large constant $n_0$, then $v$ is a leaf and we store $P_v$ at $v$.
Otherwise, we fix a sufficiently large constant parameter $r>0$ and choose a random subset
$R_v \subseteq P_v$ of $\alpha r\log r$ points, for a suitable absolute constant $\alpha>0$, independent of $r$.
We construct the four-dimensional Voronoi diagram $\M(R_v)$ induced by $R_v$ in $\Lspace$. 
For every point $p\in R_v$, we construct $\M_p^-, \M_p^+$, as defined above, but with respect 
to $\M(R_v)$. Let $\Xi_p^-$ (resp., $\Xi_p^+$) be the vertical decomposition of $\M_p^-$ 
(resp., $\M_p^+$). Let $\Xi_v$ be the set of cells in all these $2|R_v|$ vertical 
decompositions. By \lemref{union}, $|\Xi_v|= O^*(r\cdot r^3) = O^*(r^4)$ and
$\Xi_v$ can be constructed in overall $O^*(r^4)$ expected time. 

We define a \emph{conflict list} $P_\cell$ for every $\cell\in\Xi$, as follows.
For each point $p\in R_v$ and each cell $\cell$ of $\Xi_p^-$ (resp., $\Xi_p^+$), we define 
$P_\cell \subset P_v$ to be the subset of points $q\in P_v$ for which the surface 
$d = \psi_{p,q}^+(a,b,c)$ (resp., $d= \psi_{p,q}^-(a,b,c)$) crosses $\cell$. 
With a suitable choice of $\alpha$,
the size of each conflict list is at most $n/r$, with high probability, because, by construction, 
for a cell $\cell$ of $\M_p^-$ (resp., $\M_p^+$), none of the surfaces $d=\psi_{p,u}^+(a,b,c)$ 
(resp., $d=\psi_{p,u}^-(a,b,c)$), for $u\in R_v\setminus \{p\}$, intersect $\cell$;
see~\cite{HW87}. For each cell $\cell\in\Xi_v$, we create a child $w$ of $v$ and 
recursively build at $w$ the data structure on $P_\cell$. 

In addition, we also store a second-level data structure at $v$:
For this, define $\beta_{p,q}$ to be the one-dimensional locus 
\[
\beta_{p,q} = \{ (a,b) \mid y_p(a,b) = y_q(a,b) \} .
\]
(This is the case, ignored so far, of a vanishing denominator.)
For each point $p \in R_v$, we construct the two-dimensional arrangement $\A_p$ of the set of curves 
$\B_p = \{\beta_{p,q} \mid q \in P_v\setminus R_v\}$,
and preprocess it for point-location queries. 
For each edge $e\in\A_p$, if $e$ is a portion of the curve 
$\beta_{p,q}$, we set $P_e=\{q\}$. For a vertex $\chi=(\chi_a,\chi_b)$ of $\A_p$,
let $P_\chi \subseteq P_v\setminus R_v$ be the subset of points whose curves are incident 
on $\chi$. By definition, all points in $P_\chi$ have the same $y$-coordinate, namely,
$y_p(\chi_a,\chi_b)$. We store the points $q$ of $P_\chi$ in a 
sorted list ordered by $x_q(\chi_a,\chi_b)$ (the $x$-coordinates of their projections on $\pi(\chi_a,\chi_b)$),
so that, for a query line $\ell$ of the form 
$\ell^*=(\chi_a,\chi_b,\ell_c,\ell_d)$, 
the point in $P_\chi$ nearest to $\ell$ can be computed in $O(\log n)$ time. 
The total size of this second-level data structure, over all points of $R_v$, is 
$|R_v| \cdot O(|P_v|^2)= O(|P_v|^2)$ (since $r$ is a constant).

There are $O^*(r^4)$ subproblems at each node $v$ in the top tree $\T$, each of size at most $n_v/r$. 
A standard analysis of multi-level data structures shows that the size of 
the  overall structure, including the space needed to store the 
second-level structures, is $O^*(n^4)$, and that it can be constructed in $O^*(n^4)$ expected time.
(The quadratic overhead cost of the second-level structures is subsumed in this bound.)

\paragraph{Query procedure.}
Let $\ell$ be a query line with $\ell^*=(\ell_a,\ell_b,\ell_c,\ell_d)$.
We visit $\T$ in a top-down manner. Suppose we are at a node $v \in \T$. 
If $v$ is a leaf, we compute and return $\NN(\ell,P_v)$ in a brute-force manner, in $O^*(1)$ time. 
So assume that $v$ is an internal node.  We compute the nearest neighbor of $\ell$ in $R_v$, 
which we call $p$.  Next, we compute the cells $\cell^-, \cell^+$ of 
$\M_p^-, \M_p^+$, respectively, that contain $\ell^*$. All this is
done in brute force and takes $O^*(1)$ time since $r$ is a constant. 
We recursively search in the subtrees rooted at the cells $\cell^-, \cell^+$,
and compute the nearest point $p'_v$ among $p$ and the nearest points 
returned by these two recursive subproblems. 

In addition, we locate the point $(\ell_a, \ell_b)$ in the arrangement $\A_p$ to find, 
in $O(\log n)$ time, the nearest neighbor $p''_v$ of $\ell$ among the points of $P_v\setminus R_v$ 
whose projections on $\pi(\ell_a,\ell_b)$ have the $y$-coordinate $y_p(\ell_a,\ell_b)$, if such 
points exist. If the point $(\ell_a,\ell_b)$ lies inside a face of $\A_p$ then no such
point exists, if it lies on an edge of $\A_p$ then there is exactly one such point, and 
if it lies at a vertex of $\A_p$ there can be many such points and we compute the nearest one
by performing a binary search on the sorted list stored at that vertex, observing that the
distances along the sequence form a bimodal sequence with a unique minimum (or, in degenerate
situations, two consecutive minima, in which case $p_v''$ is either of these points).
We return the point among $p$, $p'_v$, $p''_v$ that is nearest to $\ell$.

The correctness of the algorithm is a consequence of the following observation:
Suppose $P_v$ contains a point $q$ that is nearer to $\ell$ than $p$, the point of 
$R_v$ closest to $\ell$. If $y_q(\ell_a,\ell_b)=y_p(\ell_a,\ell_b)$, then $q=p''_v$. 
Otherwise, by Lemma~\ref{lem:nearer-point}, $f_q$ intersects the curve $\gamma_{\ell,p}$ 
at a unique point $w=(\ell_a,\ell_b,\ell_c,w_d)$, where $w_d=\psi_{p,q}(\ell_a,\ell_b,\ell_c)$. 
Let $[w^-,w^+] = \M_p(R_v)\cap \rho_\ell$, where $\rho_\ell$ is the line parallel to the 
$d$-axis passing through $\ell^*$. Furthermore, if $w_d\ge \ell_d$, then $w_d\ge \ell_d>w^-$, 
and therefore, by \lemref{nearer-point}, $\psi_{p,q}^+$ intersects $\cell^+$ and thus $q$ 
belongs to the conflict list $P_{\cell^+}$. Similarly, if $w_d<\ell_d$, then $q$ belongs to the 
conflict list $P_{\cell^-}$, for the similarly defined cell $\cell^-$. 
In summary, if $q$ is closer to $\ell$ than $p$ (and $y_q(\ell_a,\ell_b) \ne y_p(\ell_a,\ell_b)$)
then $q$ lies in the conflict lists of one of
$\cell^-, \cell^+$. Hence, one of the two recursive subproblems returns $q$.

At each node $v$, we spend $O(\log n)$ time on searching in the secondary data structure and 
recurse in two subproblems, each of size at most $|P_v|/r$ (and $r$ can be chosen to be a 
sufficiently large constant). Hence, the total query time is $O^*(1)$ (this bound is not 
polylogarithmic, though). We thus conclude:

\begin{theorem} \label{thm:line-pt-log-qtime}
Let $P$ be a set of $n$ points in $\reals^3$.
$P$ can be preprocessed, in $O^*(n^4)$ expected time, into a data structure of size $O^*(n^4)$,
so that, for any query line $\ell\in\reals^3$, 
$\NN(\ell,P)$ can be computed in $O^*(1)$ time. 
\end{theorem}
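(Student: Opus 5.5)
The theorem follows by assembling the two-level structure just described and verifying three things: correctness of the query, the size and preprocessing bounds, and the query-time bound.

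\emph{Correctness.} I would argue by induction on the depth of $\T$ that the query at a node $v$ returns $\NN(\ell,P_v)$. Leaves are handled by brute force. At an internal node $v$, let $p=\NN(\ell,R_v)$. If $p$ is not the nearest point of $P_v$, let $q\in P_v\setminus R_v$ be the true nearest neighbor. There are two cases.
\begin{itemize}
\item If $y_q(\ell_a,\ell_b)=y_p(\ell_a,\ell_b)$, then $(\ell_a,\ell_b)$ lies on $\beta_{p,q}$. Point location in $\A_p$, together with the bimodal binary search at a vertex, returns a point at least as close as $q$.
\item Otherwise, \lemref{nearer-point} shows that the bisector $\psi^{+}_{p,q}$ or $\psi^-_{p,q}$ meets $\rho_\ell$ strictly inside the portion of the sandwich containing $\ell^*$, on the appropriate side. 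It therefore crosses the cell $\cell^-$ or $\cell^+$ of $\Xi_v$ containing $\ell^*$. The subtle point is that $\ell^*\in\M_p(R_v)$ while $q$ beats $p$ at $\ell^*$, so the bisector surface passes between $\ell^*$ and the envelope. Since the prism containing $\ell^*$ is $d$-monotone and bounded by the envelope, the surface must cross that prism. Hence $q\in P_\cell$, and the induction hypothesis applied to that child returns $q$ or an equally near point.
\end{itemize}

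\emph{Size and preprocessing.} By \lemref{union}(ii), applied per $p\in R_v$ to the $O(r\log r)$ trivariate partial functions $\psi^\pm_{p,u}$, we get $|\Xi_v|=O^*(r^4)$, computable in $O^*(r^4)$ expected time. The random-sampling argument~\cite{HW87} gives conflict lists of size at most $n_v/r$; we resample if this fails, which takes expected $O(1)$ trials. The secondary structures cost $O(n_v^2)$ at $v$: arrangements of $O(n_v)$ algebraic curves per $p\in R_v$, plus sorted lists at the vertices, built in $O^*(n_v^2)$ time. The storage recurrence is
\[
S(n)\le c\,r^{4+\eps}\,S(n/r)+O(n^2),
\]
which solves to $S(n)=O(n^{4+\eps'})$ once $r$ is a large enough constant relative to $\eps$. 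The same recurrence bounds the expected preprocessing time.

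\emph{Query time.} A query visits two children per node and spends $O(\log n)$ time at each node. This gives $Q(n)\le 2Q(n/r)+O(\log n)$, so $Q(n)=O(n^{\log_r 2}\log n)$, which is $O^*(1)$ for large constant $r$.

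\emph{Main obstacle.} The step needing the most care is the correctness claim that $q$'s bisector actually crosses the specific prism $\cell^\pm$ containing $\ell^*$, rather than merely some prism of $\Xi_p^\pm$. I would first try arguing that each prism of the vertical decomposition of the region below a lower envelope, restricted to the vertical line $\rho_\ell$, is an interval. The bisector point $w$ separates $\ell^*$ from the envelope on $\rho_\ell$, so the bisector surface must intersect the prism. I would treat separately the degenerate cases where the bisector is undefined or tangent to the prism.
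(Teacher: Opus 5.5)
\textbf{Gap in the key crossing step.} Your overall architecture matches the paper and is fine: induction over $\T$, the $\beta_{p,q}$ case handled through $\A_p$, and the size and query recurrences. The problem is in the step you yourself flag as the main obstacle. The reason you give for why the bisector of $p$ and $q$ crosses the prism containing $\ell^*$ is false.

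You claim two things:
\begin{enumerate}[(i)]
\item $w$ meets $\rho_\ell$ strictly inside the sandwich interval $\M_p(R_v)\cap\rho_\ell=[w^-,w^+]$;
\item $w$ separates $\ell^*$ from the envelope that bounds the prism.
\end{enumerate}
Neither holds.

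For (i): on $\rho_\ell$, the difference $f_q-f_p$ is linear in $d$ with slope $-2(y_q-y_p)$, and it is negative at $\ell_d$. Its zero $w_d$ is therefore at distance $(f_p-f_q)(\ell^*)/(2|y_q-y_p|)$ from $\ell_d$. This can be arbitrarily large (take $y_q$ close to $y_p$), which puts $w$ far outside $[w^-,w^+]$.

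For (ii): $w$ lies on the wrong side. Suppose $y_q>y_p$. Then
$w_d=\psi^+_{p,q}(\ell_a,\ell_b,\ell_c)<\ell_d\le\min_{u\in R_v\setminus\{p\}}\psi^+_{p,u}(\ell_a,\ell_b,\ell_c)$.
By the definition of conflict lists, the surface $\psi^+_{p,q}$ is tested against the prism $\cell^-\in\Xi^-_p$, whose ceiling is exactly that lower envelope. So it is $\ell^*$ that lies between $w$ and the envelope, not $w$ between $\ell^*$ and the envelope. Knowing only that $\cell^-\cap\rho_\ell$ is an interval containing $\ell_d$ and bounded above by the envelope guarantees nothing. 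If the prism had a floor between $w$ and $\ell^*$, nothing would force the surface to meet it.

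\textbf{The missing ingredient: the prisms are semi-infinite in the $d$-direction.} A prism of the vertical decomposition of the region below a lower envelope in $\reals^4$ has the form $\{(x,d)\mid x\in\cell_3,\ d\le\EE(x)\}$, where $\cell_3$ is a cell of the vertical decomposition of the three-dimensional minimization diagram.
\begin{itemize}
\item Hence $\cell^-$ contains the entire downward ray from $\ell^*$ along $\rho_\ell$, and in particular it contains $w$.
\item The graph of $\psi^+_{p,q}$ is a graph over $(a,b,c)$, so it cannot contain the prism. It therefore crosses $\cell^-$, and $q\in P_{\cell^-}$.
\item The case $y_q<y_p$ is symmetric: use $\psi^-_{p,q}$, the prism $\cell^+\in\Xi^+_p$, and the upward ray.
\end{itemize}
This is the paper's argument: the observation $w_d\ge\ell_d>w^-$ places $w$ inside the upward-unbounded prism.

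This is also the structural reason the construction decomposes $\M_p^-$ and $\M_p^+$ separately, rather than the sandwich $\M_p$. A decomposition of $\M_p$ itself would have prisms bounded on both sides, which would not support this argument, and its complexity is not known to be small anyway.

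With this replacement, and with the pairing of $\psi^\pm_{p,q}$ with $\Xi^\mp_p$ made explicit (you left it vague), your proof goes through.
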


Following a standard space/query-time trade-off techniques (see, e.g.,~Appendix~A.2 in~\cite{AAEKS}) 
and combining this data structure with the linear-size data structure of 
Sections~\ref{subsec:pt-linear}--\ref{subsec:test-construct}, we obtain the following result:

\begin{theorem} \label{thm:n4}
Let $P$ be a set of $n$ points in $\reals^3$ and $s\in[n,n^4]$ a so-called storage parameter. 
$P$ can be preprocessed, in $O^*(s)$ expected time, into a data structure of size $O^*(s)$, so that, 
for any query line $\ell\in\reals^3$, $\NN(\ell,P)$ can be computed in $O^*((n/s^{1/4})^{2/3})$ time. 
\end{theorem}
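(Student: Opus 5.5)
We combine the two structures in the standard multi-level trade-off fashion (Appendix~A.2 of~\cite{AAEKS}), working with the decision version (cylinder emptiness) and then invoking parametric search. Parametric search costs only an $O^*(1)$ factor, provided the decision procedure is built from partition-tree traversals and constant-size comparisons, which holds for both ingredient structures.

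\textbf{Top levels.} We use the cutting-based structure of \subsecref{Fast_NN1}, viewed as a hierarchy in the parametric space $\Lspace$ of query lines. At each node $v$, we sample $R_v$ and partition $\Lspace$ into $O^*(r^4)$ prisms $\Xi_v$. Each prism has a conflict list of size at most $n_v/r$, together with the secondary arrangement $\A_p$ for the degenerate case $y_p=y_q$. We run this recursion only for $j$ levels, where $j$ is chosen so that the resulting subproblems have size $m = n/s^{1/4}$, up to $O^*(1)$ factors. Since a query descends into only two children per level, at most $2^j = O^*(1)$ leaves are reached; for a large constant $r$ we have $j=O(\log n/\log r)$, so $2^j=n^{O(1/\log r)}$, which is subpolynomial. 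The number of leaf subproblems is $O^*((n/m)^4) = O^*(s/m^4)$, each of size $m$. The secondary arrangements at node $v$ cost $O(n_v^2)$, which is dominated by these counts.

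\textbf{Leaves.} At each leaf we store the linear-size structure of \thmref{NNLP} on its point set. This structure answers a line NN query (or a cylinder-emptiness query) in $O^*(m^{1/2})$ time. The total storage is then $O^*(s/m^4)\cdot O(m) = O^*(s/m^3)\le O^*(s)$.

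\textbf{Query time and the exponent $2/3$.} The query time above is $O^*(m^{1/2}) = O^*((n/s^{1/4})^{1/2})$, which is even better than the claimed $O^*((n/s^{1/4})^{2/3})$. So the claimed bound follows a fortiori, and the plan is simply to verify the bookkeeping. The main point to check is correctness at the truncated leaves. The query at node $v$ returns the minimum among $p$, $p''_v$, and the recursive answers in $\cell^-,\cell^+$. Replacing the recursion at depth $j$ by an exact NN query on $P_\cell$ using \thmref{NNLP} preserves this invariant. The invariant rests on \lemref{nearer-point}: every point nearer than $p$ lies in one of the two conflict lists or is $p''_v$.

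\textbf{Main obstacle.} The delicate step is bounding the storage of the truncated tree. It has $O^*(r^4)$ children per node, so after $j$ levels it has $O^*(r^{4j})$ leaves with $n/r^j$ points each. We need $r^{4j}\cdot n/r^j \lesssim s$ (with $n^{\eps}$ slack), which is satisfied by taking $r^j \approx s^{1/4}$ with $s\le n^4$. We also need $O(n\log n)$ preprocessing per leaf. I would carry out this check explicitly, absorbing the $r^{O(1)}$-per-level constants into the $O^*$ notation. Finally, the preprocessing time is $O^*(s)$ in expectation, by the same accounting as for the storage.
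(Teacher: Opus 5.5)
Your architecture (the cutting-based structure of \subsecref{Fast_NN1} truncated at subproblem size $m$, with the structure of \thmref{NNLP} at its leaves) is the standard trade-off the paper appeals to. However, the parameter balancing is wrong, and your conclusion that the query time beats the claimed bound is a symptom of that. With $m=n/s^{1/4}$ the number of leaves is $(n/m)^4=s$, not $s/m^4$, so the leaves alone occupy $s\cdot m=ns^{3/4}$ space. Your final check fails the same way: $r^{4j}\cdot n/r^j=nr^{3j}=ns^{3/4}$ when $r^j=s^{1/4}$, which is at most $s$ only for $s=n^4$. As a sanity check, at $s=n$ your bound would give query time $n^{3/8}$ with linear storage, beating \thmref{NNLP}, which the paper argues is likely near-optimal. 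The correct balance is $(n/m)^4\cdot m\approx s$, i.e., $m\approx(n^4/s)^{1/3}$. The leaf cost is then $O^*(m^{1/2})=O^*(n^{2/3}/s^{1/6})=O^*((n/s^{1/4})^{2/3})$. This is exactly where the exponent $2/3$ comes from, so there is no slack at all.

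A second step is unjustified: your claim that the secondary arrangements $\A_p$ are dominated by the leaf counts. Their size is bounded only by $O(n_v^2)$ per node. In the full structure this overhead is absorbed solely because the recursion runs down to constant size; in a truncated tree it is not. With only this bound available, the root alone must be charged quadratic space, which exceeds the budget whenever $s<n^2$. The nodes at depth $j-1$ are charged about $r^{4(j-1)}(n/r^{j-1})^2\approx n^2(n/m)^2$ in total, roughly a factor $m$ more than all the leaves together.

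The cleanest repair is to nest the other way around:
\begin{enumerate}[(i)]
\item Use the two-level partition tree behind \thmref{NNLP} as a cylinder-emptiness structure, and stop its recursion (at both levels) at nodes of size $m'=(s/n)^{1/3}$.
\item At each such node, store the complete structure of \thmref{line-pt-log-qtime}, of size $O^*(m'^4)$. Its quadratic secondary overhead is now subsumed by that bound.
\end{enumerate}
Total storage and expected preprocessing time are then $O^*(nm'^3)=O^*(s)$. A query reaches $O^*((n/m')^{1/2})$ such nodes, each resolved in $O^*(1)$ time by comparing the nearest-neighbor distance with $\rho$. Parametric search then yields the $O^*((n/s^{1/4})^{2/3})$ bound.

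Alternatively, you can keep your nesting but eliminate the degenerate case altogether. Decompose each $\M_p$ with respect to both the $c$- and the $d$-direction and recurse into four cells. This works because a point $q$ with $x_q(a,b)=x_p(a,b)$ and $y_q(a,b)=y_p(a,b)$ is never strictly nearer to $\ell$ than $p$, so every strictly nearer point is caught by one of the two decompositions.
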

For sanity check, observe that this bound coincides with the above bounds at the two extreme
cases $s=n$ and $s=n^4$.

\subsection{Segment or triangle NN queries: Fast query time}
\label{subsec:TrNN}
We can extend this data structure 
to the case where the query objects are segments or triangles in $\reals^3$.

First, consider a query segment $e=ab$ in $\reals^3$. Let $\ell_e$ be the line containing $e$,  
let $W_e$ be the slab bounded by the two planes passing through the 
endpoints of $e$ and orthogonal to $e$. 
For a point $p_i\in P$, if $p_i \not\in W_e$ then $\dist(p_i,e)= \min \{\|p_i-a\|, \|p_i-b\|\}$, 
otherwise $\dist(p_i,e) = \dist(p_i,\ell_e)$. 

We therefore build two data structures. In the first
structure, we preprocess $P$ into an $O^*(n^2)$-size data structure
so that for a query point $q\in\reals^3$, $\NN(q,P)$ can be computed in $O(\log n)$ time~\cite{Ag:rs}. 
The second structure is a multi-level partition tree $T$ on $P$, 
built on the one just described. Each node $v$ of $T$ 
stores a canonical subset $P_v \subseteq P$. The first two levels 
build halfspace range-searching data structures so that, for a query segment $e\in\reals^3$, we 
can extract the subset $P_e =  P \cap W_e$ as a union of a few canonical subsets (see below for 
precise bounds). At each node $w$ of the second level, we build the data structure described in 
\subsecref{Fast_NN1} on the set $P_w$, so that for a query line $\ell_e$, $\NN(\ell_e, P_w)$ can be computed.

Let $e$ be a query segment.
By querying the first data structure with the endpoints of $e$, we compute 
the point $p_i \in P$ that is nearest to an endpoint of $e$.
By querying the second data structure with $e$, we compute $\NN(\ell_e, P\cap W_e)$.
The closer of the these two points is $\NN(e,P)$. 

The total size of the data structure is $O^*(n^4)$ and the query time remains $O^*(1)$.
(The cost of an NN query with a point amid points is subsumed by the cost of the other structure.)

Next, consider a query triangle $\triangle$ in $\reals^3$. Let $\pi_\triangle$ be the plane containing 
$\triangle$, let $\ell_\triangle$ be the line passing through the origin and normal to $\pi_\triangle$, 
and let $W_\triangle= \triangle\oplus \ell_\triangle$ 
be the unbounded prism spanned by $\triangle$ and orthogonal to it. 
If a point $p_i\not\in W_\triangle$, then the point of $\triangle$ closest to $p_i$ lies on 
one of its edges or at one of its vertices, and if
$p_i\in W_\triangle$ then $\dist(p_i,\triangle) = \dist(p_i,\pi_\triangle)$. 

We first construct the data structure just described for computing the NN of a query segment in $P$. 
Next, we preprocess $P$ into a simplex range-emptiness data structure of size $O^*(n^3)$ with
logarithmic query time. For a query triangle $\triangle$, it computes $\NN(\pi_\triangle, P\cap W_\triangle)$, 
as follows. For a real value $t>0$, let $W_{\triangle,t}\subset W_\triangle$ be the set of points 
within distance $t$ from $\pi_\triangle$; $W_{\triangle,t}$ is a bounded prism with its bases being 
translates of $\triangle$. Using the above data structure, we can determine, in $O(\log n)$ time, whether
$P\cap W_{\triangle,t}\ne \emptyset$, and return such a point if the answer is {\sc Yes}. By combining this 
data structure with parametric search, we can compute $\NN(\triangle,P\cap W_\triangle)$ in $O^*(1)$ time.

For a query triangle $\triangle$, we first compute the point of $P$ that is closest to its edges 
and vertices, using the first data structure, and then compute $\NN(\triangle, P\cap W_\triangle)$ 
using the second data structure.

Putting everything together, we obtain the following.
\begin{theorem} \label{thm:tr-pt}
Let $P$ be a set of $n$ points in $\reals^3$. $P$ can be preprocessed,
in $O^*(n^4)$ expected time, into a data structure of size $O^*(n^4)$,
so that, for any query triangle (or segment) $\triangle\in\reals^3$, $\NN(\triangle, P)$ 
can be computed in $O^*(1)$ time. 
\end{theorem}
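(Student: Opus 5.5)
The plan is to reduce a triangle query to $O(1)$ subqueries, each answered by a structure already built in this section, and take the nearest of the returned points. For a query triangle $\triangle$ with edges $e_1,e_2,e_3$, let $\pi_\triangle$ and $W_\triangle$ be as above. For every $p\in P$,
\[
\dist(p,\triangle)=\min\{\dist(p,e_1),\dist(p,e_2),\dist(p,e_3)\}
\]
when $p\notin W_\triangle$, while $\dist(p,\triangle)=\dist(p,\pi_\triangle)$ when $p\in W_\triangle$. In either case $\dist(p,\triangle)$ is at most each of these quantities, since each is the distance from $p$ to some point of $\triangle$ (for $p\in W_\triangle$, its orthogonal projection onto $\pi_\triangle$ lies in $\triangle$). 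Therefore
\[
\NN(\triangle,P)=\arg\min\{\NN(e_1,P),\NN(e_2,P),\NN(e_3,P),\NN(\pi_\triangle,P\cap W_\triangle)\}.
\]
The same identity with one edge and no prism gives the segment case.

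For segments, I would first use the decomposition by the slab $W_e$ of $e=ab$. Points outside $W_e$ are answered by a point-NN structure of size $O^*(n^2)$ with $O(\log n)$ query time~\cite{Ag:rs}, queried at $a$ and at $b$. Points inside $W_e$ are handled by a multilevel partition tree whose first two levels are halfspace range-searching structures in the cutting-based (fast-query) regime. These express $P\cap W_e$ as $O^*(1)$ canonical subsets, each canonical set carries the structure of \thmref{line-pt-log-qtime}, and we query that structure with $\ell_e$.

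For the space bound, halfspace levels in $\reals^3$ with $O^*(1)$ query have $O^*(n^3)$ size per level. I would argue that attaching an $O^*(n_w^4)$ structure at the bottom dominates: the standard multilevel analysis with a cutting of constant $r$ at each level gives recurrences $S(n)=O^*(r^3)S(n/r)+O^*(n^4)$, which solve to $O^*(n^4)$ because the exponent $4$ exceeds $3$. The query time is $O^*(1)$ per level, and each level branches into only $O(1)$ children. This gives the segment claim, with preprocessing bounded the same way. For triangles, I would add a structure that answers emptiness for the bounded prism $W_{\triangle,t}$. This is an intersection of $5$ halfspaces, so a $5$-level halfspace (simplex) emptiness structure of size $O^*(n^3)$ and polylog query works. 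Combining it with parametric search over $t$ yields $\NN(\pi_\triangle,P\cap W_\triangle)$ in $O^*(1)$ time, since the generic decision procedure runs in $O^*(1)$ time.

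The main obstacle is the multilevel bookkeeping: making sure that nesting two halfspace levels above the $O^*(n^4)$ line structure does not push the storage to $O^*(n^{3+4})$. This requires the recurrence argument above, which exploits that each level is built on the canonical subset of size $n/r$ with only $O^*(r^3)$ children, so the costs multiply in $r$ rather than in $n$. A secondary point is verifying that parametric search applies to the prism decision problem. Its decision procedure is a constant-depth tree walk with comparisons that are polynomial in $t$, so Megiddo-style simulation costs only an $O^*(1)$ factor.
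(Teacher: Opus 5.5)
Your proposal is correct and follows essentially the same route as the paper: for segments, the slab $W_e$ splits the query into an endpoint query on an $O^*(n^2)$ point-NN structure and a two-level halfspace tree whose canonical subsets carry the structure of \thmref{line-pt-log-qtime}; for triangles, you add the three edge queries and an $O^*(n^3)$-size simplex-emptiness structure for the bounded prisms $W_{\triangle,t}$, turned into an NN query by parametric search over $t$. One small imprecision is in your storage argument: in a cutting-based halfspace tree the canonical subsets attached at a node can have size up to $n$, and only the conflict lists passed to children have size at most $n/r$, so the recurrence is really $S(n)=O(r^3)S(n/r)+O(r^3)\cdot O^*(n^4)$, but this still solves to $O^*(n^4)$ and the conclusion stands.
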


We can again obtain a space/query-time trade-off. Recall that the NN of a query 
segment can be computed in $O^*(\sqrt{n})$ time using $O^*(n)$ space, but 
computing the NN of a query triangle takes $O^*(n^{2/3})$ time using $O^*(n)$ space. 
Hence, we obtain the following:

\begin{theorem} \label{thm:tr-pt-tradeoff}
Let $P$ be a set of $n$ points in $\reals^3$ and $s\in[n,n^4]$ a storage parameter. 
$P$ can be preprocessed, in $O^*(s)$ expected time, into a data structure of size $O^*(s)$,
so that, for any query segment $e$ in $\reals^3$, $\NN(e, P)$ can be computed in 
$O^*((n/s^{1/4})^{2/3})$ time. For a query triangle, the query time is $O((n/s^{1/4})^{8/9})$.
\end{theorem}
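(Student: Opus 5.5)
We obtain the trade-off by the standard interpolation scheme (as in Appendix~A.2 of~\cite{AAEKS}), combining the fast-query structure of \thmref{tr-pt} with the linear-size structures: \thmref{NNSP} for segments, and a linear-size $O^*(n^{2/3})$ structure for triangles (from three-dimensional semi-algebraic range searching~\cite{MP,AMS}).

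The outer level mirrors the construction behind \thmref{line-pt-log-qtime}. Treat the query object in the four-dimensional parametric space $\Lspace$ of lines; for triangles, the four-parameter core is the supporting plane, or equivalently the decomposition of \subsecref{TrNN} into edges and the interior prism. Build the partition tree $\T$ of \subsecref{Fast_NN1}, but stop the recursion at nodes where $|P_v|\le n/r$, with $r$ chosen so that $O^*(r^4)=s/n$, i.e.\ $r=(s/n)^{1/3}$ up to $n^{\eps}$ factors. This truncated tree has $O^*(r^4)$ leaves, each holding at most $n/r$ points. At each leaf, build the linear-size structure on its points. The total storage and preprocessing are then $O^*(r^4\cdot n/r)=O^*(nr^3)=O^*(s)$.

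A query descends two branches per level, for $O^*(1)$ overhead plus the secondary searches. It reaches only $O^*(1)$ leaves, since the two-way branching over $O(\log r)$ levels costs $r^{o(1)}$, which is absorbed in $O^*$. At each such leaf it runs a linear-size query on at most $n/r$ points. The query time is therefore $O^*((n/r)^{1/2})$ for segments and $O^*((n/r)^{2/3})$ for triangles. Substituting $r=(s/n)^{1/3}$ gives $n/r=(n/s^{1/4})^{4/3}$, so the query time is $O^*((n/s^{1/4})^{2/3})$ for segments and $O^*((n/s^{1/4})^{8/9})$ for triangles, as claimed. Both bounds agree with the endpoints $s=n$ and $s=n^4$.

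The main obstacle is that the segment and triangle structure of \subsecref{TrNN} is multilevel, not a single tree. Its first levels are halfspace (or prism) range-searching structures selecting $P\cap W_e$ or $P\cap W_\triangle$, and these must also be truncated compatibly. I would handle each level with its own standard trade-off. The halfspace-searching levels need only $O^*(n^3)$ space for logarithmic query time, so they can be built at full size within the $s$ budget whenever $s\ge n^3$. Otherwise they are truncated, which adds only polylogarithmic or $n^{\eps}$ overhead relative to the dominant four-dimensional term. The point-NN part for endpoints and edges costs $O^*(n^2)$ and reduces to the segment case for edges, so it is dominated. It remains to check that the binding exponent comes from the four-dimensional line level, so the overall bound is as stated; I would verify this by writing the recurrence $S(n)$, $Q(n)$ for the multilevel structure and confirming that the lower levels contribute only $O^*(1)$-factor overhead.
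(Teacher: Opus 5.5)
\textbf{Gap: the segment case for $s<n^2$.} The problem is the order in which you compose the two structures. You put the truncated fast structure on top and the linear-size structures at its leaves.

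First, the tree $\T$ of \subsecref{Fast_NN1} cannot route a segment query. Its correctness rests on \lemref{nearer-point}, which only guarantees that points closer to the \emph{line} $\ell_e$ than $p=\NN(\ell_e,R_v)$ lie in the conflict lists of $\cell^-,\cell^+$. If $p\notin W_e$, the point nearest to $e$ may be farther from $\ell_e$ than $p$, and then it is lost.

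So $\T$ can only be used after filtering to $P\cap W_e$, and that is where your argument breaks. Extracting $P\cap W_e$ as canonical subsets is three-dimensional halfspace (slab) range searching. Its known trade-off is $O^*(n/s^{1/3})$ query time for storage $s\le n^3$, not a polylogarithmic or $n^{\eps}$ overhead. The multilevel recurrence you propose to check therefore gives at best $O^*\bigl(n/s^{1/3}+(n/s^{1/4})^{2/3}\bigr)$. Since $n/s^{1/3}>(n/s^{1/4})^{2/3}$ exactly when $s<n^2$, you get $n^{2/3}$ instead of $n^{1/2}$ at $s=n$. This is the slab-hardness noted in the remark after \thmref{NNSP}: that structure reaches $\sqrt n$ precisely because it tests whole cigars and never filters by $W_e$.

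For triangles your fallback does suffice, because $n/s^{1/3}\le (n/s^{1/4})^{8/9}$ for all $s\ge n$.

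Separately, your storage count $O^*(r^4\cdot n/r)$ omits the arrangements $\A_p$ of size $O(|P_v|^2)$ kept at every internal node of $\T$. These are quadratic already at the root, so they exceed $s$ whenever $s<n^2$. Also, your balance should read $nr^3=s$, not $O^*(r^4)=s/n$.

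\textbf{Fix: reverse the order.} Put the linear-size partition tree on top and use the fast structure as a black box at the bottom. This is the order behind the exponent formula of \cite[Appendix A.2]{AAEKS}, the standard trade-off the paper invokes.

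For segments, take the cigar-emptiness structure of \thmref{NNSP}: the 2D simplicial partition tree together with the wide-cigar trees at its nodes. Stop every recursion at subsets of size $m=(s/n)^{1/3}$, and store at each leaf the structure of \thmref{tr-pt} on its $m$ points. A leaf answers a cigar query in $O^*(1)$ time by computing the point nearest to $e$ and comparing its distance with $\rho$. The storage is $O^*((n/m)\cdot m^4)=O^*(s)$, since the halfspace levels and the arrangements $\A_p$ now live inside leaf structures of size $O^*(m^4)$. A query reaches $O^*((n/m)^{1/2})$ leaves, and $(n/m)^{1/2}=(n/s^{1/4})^{2/3}$. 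Parametric search then yields the NN query.

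For triangles, put the $O^*(n^{2/3})$ semi-algebraic partition tree on top instead; this gives $(n/m)^{2/3}=(n/s^{1/4})^{8/9}$. Both bounds are the formula $(n/s^{1/t_q})^{(1-1/t_o)/(1-1/t_q)}$ with $t_q=4$ and $1-1/t_o=1/2$, resp.\ $2/3$.
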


\section{ Point NN Searching amid Lines, Segments and Triangles in $\reals^3$}
\label{sec:plnn}

In this section, we study the problem of nearest-neighbor (NN) searching 
where the input is a set of $n$ lines, segments or triangles in $\reals^3$ and the queries 
are points in $\reals^3$. We wish to process this 
set into a data structure that for any given query point $q$, returns the input object that
is nearest to $q$. Recall that, for a closed geometric object $\gamma$ (such as a line, segment, or
triangle) and for a point $q\in\reals^3$, we define the Euclidean distance between them as
$\dist(q,\gamma) = \min_{x\in\gamma} \|q-x\|$. 
Given a set $\XX$ of geometric objects of these kinds in $\reals^3$ and a point $q\in\reals^3$, we define 
$\NN(q,\XX) = \arg\min_{\gamma\in\XX} \dist(q,\gamma)$. Our goal is thus to preprocess $\XX$ into 
a data structure so that, for a query point $q\in\reals^3$, $\NN(q,\XX)$ can be computed quickly.

\subsection{The case of lines}

We begin by considering the (simplest) case where the input objects are lines.
Let $\L$ be a set of $n$ lines in $\reals^3$. We wish to preprocess $\L$ into a data structure so that 
the line of $\L$ nearest to a query point $q \in \reals^3$ can be reported quickly.
For each line $\ell\in \L$, define a trivariate function $f_\ell: \reals^3 \rightarrow \reals$, 
so that $f_\ell(q)$ is the distance of a point $q\in\reals^3$ from $\ell$. 
Using \lemref{union}, 
we can preprocess $\L$, in $O^*(n^3)$ expected time, into a data structure of size $O^*(n^3)$, so that an 
NN query with a point $q\in\reals^3$ can be answered in $O(\log n)$ time. In the rest of the section, 
we thus focus on the near-linear size regime and show that a data structure of 
near-linear size can be constructed that answers an NN query in $O^*(n^{2/3})$ time. We will then 
combine both structures, as we did earlier, to obtain a trade-off between storage and query time.

\paragraph{Overall data structure.} 
Using the parametric-search technique~\cite{Ag:rs,AM93}, an NN query can be reduced to answering
$O^*(1)$ \emph{ball-intersection-detection} queries on $\L$. That is, we wish to preprocess $\L$ into a 
data structure that can efficiently determine whether a query ball $\sigma$ intersects any line of $\L$. 
We thus describe a data structure for answering ball-intersection-detection queries.
We rely on the following technical property, originally proved  by Mohaban and Sharir~\cite{MS}.
For a line $\ell$ in $\reals^3$, let $V_\ell$ be the vertical plane that contains $\ell$, and let 
$H_\ell$ be the plane orthogonal to $V_\ell$ that contains $\ell$. Let $\sigma = \sigma_p$ be a ball
centered at a point $p$. We say that $\ell$ is \emph{lower} (resp., \emph{higher})
than $\sigma_p$ if $p$ lies above (resp., below) $H_\ell$.
Let $\ell_\sigma^-$ (resp., $\ell_\sigma^+$) be the line parallel to $\ell$ contained in $V_\ell$ 
that is tangent to $\sigma$ from \emph{below} (resp., \emph{above}); see Figure~\ref{fig:point_line_NN}. 
For an object $\xi$ in $\reals^3$, let $\xi^\downarrow$ denote its $xy$-projection. 
For a subset $X \subseteq \L$, let $X^\downarrow$ denote the set of the $xy$-projections 
of the lines in $X$.

\begin{figure}[htb]
  \begin{center}
    \includegraphics[scale=0.7]{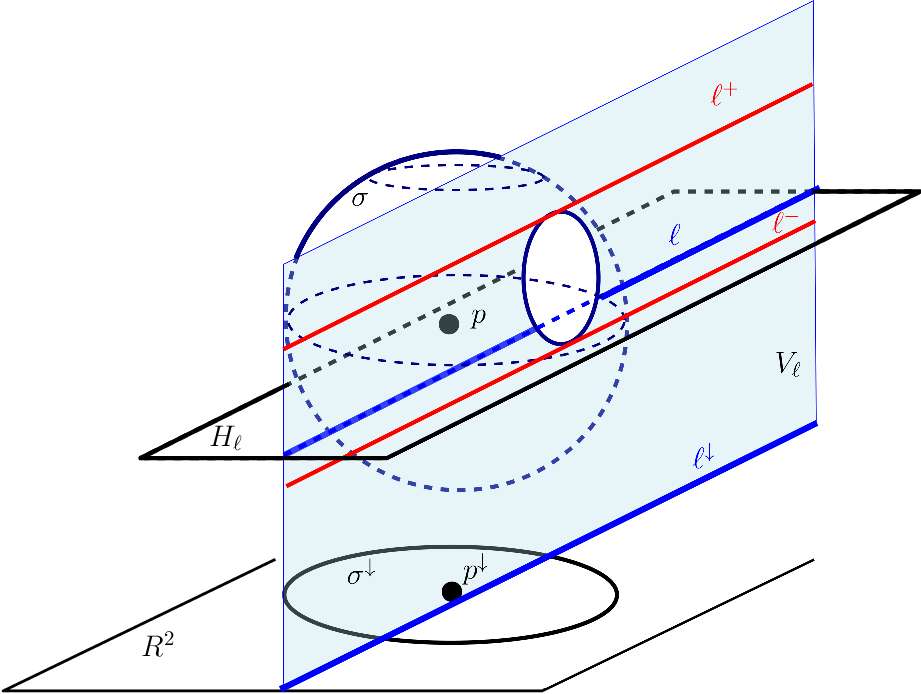}
  \end{center}
  \caption{Illustration of condition (ii-) of Lemma~\ref{lem:lower}. Here $\ell$ is lower than $\sigma = \sigma_p$.
    The small ellipse is the cross-section $\sigma \cap V_\ell$ (which is the drawing of a circle in ${\reals}^3$).
  }
      \label{fig:point_line_NN}
\end{figure}

\begin{lemma}[\cite{MS}]
  \label{lem:lower}
  Let $\ell$ be a line in $\reals^3$, and let $\sigma$ be a ball in $\reals^3$. Then $\ell$ 
  intersects $\sigma$ if and only if
  \begin{itemize}
	  \item[(i)\hspace{1mm}]
    the $xy$-projections $\ell^\downarrow$ and $\sigma^\downarrow$ 
    intersect; and one of the following two conditions holds.
  \item[(ii-)]
    $\ell$ is lower than $\sigma$ and $\ell$ lies above $\ell_\sigma^-$, or
  \item[(ii+)]
    $\ell$ is higher than $\sigma$ and $\ell$ lies below $\ell_\sigma^+$.
  \end{itemize}
\end{lemma}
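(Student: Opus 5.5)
We work entirely inside the vertical plane $V_\ell$ and reduce the statement to a planar fact about a line and a disk. First we dispose of the vertical case: if $\ell$ is vertical then $V_\ell$ and $H_\ell$ are not uniquely defined. By general position we assume no line of $\L$ is vertical, so $\ell^\downarrow$ is a line, $V_\ell$ is the unique vertical plane containing $\ell$, and $H_\ell$ is a non-vertical plane. Observe that $\ell$ meets $\sigma$ if and only if $\ell$ meets the cross-section $D=\sigma\cap V_\ell$, since $\ell\subset V_\ell$. Moreover, $D\neq\emptyset$ if and only if the vertical plane $V_\ell$ meets $\sigma$, which happens if and only if the line $\ell^\downarrow=V_\ell^\downarrow$ meets the disk $\sigma^\downarrow$. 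This gives condition (i) as a necessary condition, and under (i) $D$ is a (possibly degenerate) disk in $V_\ell$.

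Next, assuming (i), we treat $V_\ell$ as a plane with a horizontal axis along $\ell^\downarrow$ and the vertical $z$-axis. In this plane $\ell$ is a non-vertical line, and $\ell_\sigma^-$ and $\ell_\sigma^+$ are the two lines parallel to $\ell$ that are tangent to $D$ from below and from above. These are exactly the lines of $V_\ell$ parallel to $\ell$ that are tangent to $\sigma$, because any line in $V_\ell$ meets $\sigma$ exactly where it meets $D$. The set of lines parallel to $\ell$ in $V_\ell$ that meet $D$ is therefore precisely those lying between $\ell_\sigma^-$ and $\ell_\sigma^+$ (inclusive). So $\ell\cap\sigma\ne\emptyset$ if and only if $\ell$ lies weakly above $\ell_\sigma^-$ and weakly below $\ell_\sigma^+$.

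It remains to show that only one of the two inequalities is active, according to whether $\ell$ is lower or higher than $\sigma$. Let $c$ be the center of $D$, which is the orthogonal projection of $p$ onto $V_\ell$. Since $H_\ell$ is orthogonal to $V_\ell$ and contains $\ell$, we have $p$ above $H_\ell$ if and only if $c$ lies above $\ell$ within $V_\ell$. The translation from $p$ to $c$ is horizontal and normal to $V_\ell$, hence parallel to $H_\ell$, so it preserves the side of $H_\ell$. Now suppose $\ell$ is lower than $\sigma$, i.e., $c$ is above $\ell$. The line $\ell_\sigma^+$ passes above $c$, since it is the parallel tangent on the upper side of the center. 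Hence $\ell$ is automatically below $\ell_\sigma^+$, and intersection reduces to $\ell$ lying above $\ell_\sigma^-$, which is (ii-). The higher case is symmetric and gives (ii+). Conversely, each of (ii-) and (ii+), together with the automatic inequality, places $\ell$ between the two tangents, so $\ell$ meets $D$.

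The only delicate point is the bookkeeping in the last step. One must check that ``above/below $H_\ell$'' for $p$ matches ``above/below $\ell$'' for $c$ inside $V_\ell$, and handle the boundary case $p\in H_\ell$, where $\ell$ passes through $c$ and meets $\sigma$. Finally, we observe that when (i) fails, $\ell_\sigma^{\pm}$ are undefined, which is why (i) must be stated separately.
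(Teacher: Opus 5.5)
The paper gives no proof of this lemma. It is quoted from Mohaban and Sharir~\cite{MS}, so there is no in-paper argument to compare against.

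Your proof is correct and self-contained. You reduce to the cross-section disk $D=\sigma\cap V_\ell$, identify condition (i) with $D\neq\emptyset$, and characterize intersection as $\ell$ lying weakly between $\ell_\sigma^-$ and $\ell_\sigma^+$. You then observe that the offset from $p$ to the center $c$ of $D$ is along the horizontal normal of $V_\ell$, which is parallel to $H_\ell$. Together with $H_\ell\cap V_\ell=\ell$, this shows that exactly one of the two tangency constraints is automatic. This is the natural planar argument, and every step checks out, including the degenerate disk.

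The one loose end is the case $p\in H_\ell$, which you flag but do not settle. With the paper's strict definitions of lower and higher, suppose (i) holds and $p\in H_\ell$. Then $c\in\ell$, so $\ell$ meets $\sigma$, yet neither (ii-) nor (ii+) applies. The ``only if'' direction is therefore literally true only if one of the two notions is taken to be non-strict, for example ``lower'' meaning $p$ lies on or above $H_\ell$. This is also how a closed-halfspace stabbing query would classify such lines in the paper's use of the lemma. With that convention your final step goes through unchanged, since $\ell$ then passes through $c$, which lies weakly between the two tangents.
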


Condition (i) is equivalent to testing whether a query disk in the plane intersects one of $n$
given lines in the plane.
Since a line in $\reals^2$ requires two parameters, the problem of extracting all lines of $\L$ whose $xy$-projections 
intersect $\sigma^\downarrow$ 
can be formulated as a two-dimensional semi-algebraic 
range query---the inequality that we need to test just involves the absolute value 
of a linear expression. By definition, testing whether a line $\ell$ in $\reals^3$ lies 
lower (or higher) than a ball amounts to testing whether the center of the ball lies above 
(or below) the respective plane $H_\ell$, so extracting all lines that lie lower (or higher) than 
$\sigma_p$ can be formulated as a $3$-dimensional halfspace-stabbing query, where the input halfspaces
are bounded from above or below by the planes $H_\ell$, and the query is the center $p$ of the ball. 

Finally, we describe below a linear-size data structure to preprocess a set $\XX$ of $m$ lines in $\reals^3$ 
so that, for a query ball $\sigma$ such that $\sigma^\downarrow$ intersects all lines of $\XX^\downarrow$ and 
all lines of $\XX$ lie lower (or all lie higher) than $\sigma$, we can detect in 
$O^*(m^{2/3})$ time whether $\sigma$ intersects any line of $\XX$ (see \lemref{lower-query} below).

With these pieces at our disposal, we construct a $3$-level partition tree~\cite{Ag:rs,AAEKS} on $\L$, 
each of whose nodes $v$ stores a ``canonical'' subset $\L_v \subseteq \L$ (see, e.g., Theorem~A.1 in 
the appendix of \cite{AAEKS}). The first level builds a two-dimensional semi-algebraic range-searching 
data structure that extracts the subset $\L_\sigma$ of lines of $\L$ whose $xy$-projections cross 
$\sigma^\downarrow$, where $\sigma = \sigma_p$ is the ball with the given radius centered at the query point $p$,
as the union of a few canonical subsets. For each canonical subset $\L_v$ of 
the first level, the second level builds a $3$-dimensional halfspace range-searching data structure, 
as mentioned above, that extracts all lines of $\L_v$ lying lower (or higher) than $\sigma$.
Finally, for each canonical subset $\L_w$ of the second-level tree, 
we build the data structures, described below (cf.\ \lemref{lower-query}), for intersection 
detection with respect to a query ball $\sigma$ that is higher (or lower) 
than all of the lines of $\L_w$, and $\sigma^\downarrow$ 
intersects all lines of $\L_w^\downarrow$ (in the $xy$-plane).
The overall size of the data structure\footnote{%
  A straightforward application of the multi-level data-structure framework leads to a 
  data structure of size $O^*(n)$. But the size can be improved to $O(n)$ while keeping 
  the query time $O^*(n^{2/3})$, by constructing secondary structures only at some of 
  the nodes~\cite{Ag:rs}.}
is $O(n)$. 

The query time at each level of the partition tree is $O^*(n^{2/3})$ (it is actually smaller for 
the first level, as the corresponding subproblem is two-dimensional), 
the properties of multi-level partition trees (see, e.g., Theorem~A.1 in the 
appendix of \cite{AAEKS}) imply that the overall query time is also $O^*(n^{2/3})$. 

Using standard space/query-time trade-off techniques~\cite{AAEKS}, we can combine the linear-size 
data structure with the $O^*(n^3)$-size data structure: For any parameter $s\in[n,n^3]$, we can 
construct a data structure of size $O^*(s)$, in $O^*(s)$ expected time, that answers a query in 
time $O^*(n/s^{1/3})$. Hence, modulo the missing details of the third level, elaborated below, 
we obtain the following lemma.
\begin{lemma} \label{lem:ballint}
  Let $\L$ be a set of $n$ lines in $\reals^3$, and let $s\in [n,n^3]$ be a parameter.
  Then $\L$ can be preprocessed, in $O^*(s)$ expected time, into a data structure of size $O^*(s)$,
  so that an intersection-detection query for a ball can be answered in $O^*(n/s^{1/3})$ time. 
\end{lemma}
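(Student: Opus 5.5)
The plan has four parts: a three-level decomposition based on \lemref{lower}, a linear-size structure for the innermost level built with the test-set machinery of \lemref{test-set}, a large-storage structure for the top end of the trade-off, and the standard interpolation between the two. The innermost level is the only genuinely new ingredient.

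\textbf{Decomposition.} By \lemref{lower}, a ball $\sigma$ meets $\ell$ exactly when condition (i) holds together with (ii-) or (ii+). I would therefore build the three-level partition tree described above.
\begin{enumerate}[(1)]
\item The first level is a two-dimensional semi-algebraic range-searching structure on the parameters of $\ell^\downarrow$. It extracts the lines whose projections meet the disk $\sigma^\downarrow$.
\item The second level is a three-dimensional halfspace-stabbing structure on the planes $H_\ell$. It splits the canonical subsets into lines that are lower than $\sigma$ and lines that are higher than $\sigma$.
\item The third level answers detection queries for a canonical set $\XX$ of $m$ lines. Every line of $\XX^\downarrow$ meets $\sigma^\downarrow$, and every line of $\XX$ is, say, lower than $\sigma$. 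In this situation the query reduces to asking whether some $\ell\in\XX$ lies above $\ell_\sigma^-$.
\end{enumerate}
The multi-level framework (Theorem~A.1 of \cite{AAEKS}) composes these levels with only an $O^*(1)$ overhead. Hence it suffices that each level has linear size and $O^*(m^{2/3})$ query time.

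\textbf{Third level, linear storage.} Represent each line by a point of the four-dimensional line space, and represent the query by the region $R_\sigma$ of lines satisfying (ii-) under the constraints of the third level. The condition ``$\ell$ above $\ell_\sigma^-$'' can be written as $\ell$ lying above the graph of a trivariate function $F_\sigma$ over three of the line parameters. So $R_\sigma$ is the region above a surface in $\reals^4$, and the task is emptiness of such a region.

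I would follow \lemref{test-set}. Take a sample $N$ of $O(r\log r)$ lines. For each line, form the set of query balls (center and radius, four parameters) containing it. Take the $(\le c\log r)$-level of this arrangement in $\reals^4$ and its vertical decomposition, which has size $r^{O(1)}$. For each cell $\tau$, let the test range be the union of the ranges $R_\sigma$ over $\sigma\in\tau$. Properties (P1) and (P3) follow exactly as in \lemref{test-cyl}.

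The crux is (P2). Because every $R_\sigma$ lies above the graph of $F_\sigma$, each test range is also a region above a graph, namely the graph of $\inf_{\sigma\in\tau}F_\sigma$. The complement of the union of $m$ test ranges is then the region below the lower envelope of $m$ trivariate semi-algebraic functions. By \lemref{union}(ii), this region decomposes into $O^*(m^3)$ cells, so $\zeta(m)=O^*(m^3)$. \lemref{test-set} then gives query time $O^*(m/\zeta^{-1}(m))=O^*(m^{2/3})$.

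The main obstacle is this monotonicity step: showing that, under the constraints of the third level (projection intersects, line lower than the ball), the valid part of $R_\sigma$ really is the full region above a single graph in a fixed coordinate direction, uniformly in $\sigma$. I would first try choosing the direction to be the intercept parameter $d$ of the line. Varying $d$ translates $\ell$ vertically inside $V_\ell$. With the other three parameters fixed, condition (i) does not depend on $d$, and raising $d$ preserves being above $\ell_\sigma^-$. The condition ``lower than $\sigma$'' may fail for large $d$. I would handle this with an additional halfspace-type level, or by observing that once $\ell$ passes from lower to higher while remaining above $\ell_\sigma^-$, it has already intersected $\sigma$.

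\textbf{Large storage and trade-off.} At the other extreme, \corref{NN} gives size $O^*(n^3)$ with $O(\log n)$ query time. Standard interpolation between the two structures (Appendix~A.2 of \cite{AAEKS}) yields size $O^*(s)$ and query time $O^*(n/s^{1/3})$. Preprocessing time is $O^*(s)$ in expectation, inherited from \lemref{test-set} and \lemref{enc}.
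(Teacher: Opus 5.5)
Your proposal is correct and follows the paper's proof essentially step for step. It uses the same three-level tree built from \lemref{lower}, and test sets of the form $\bigcup_{\sigma^*\in\tau}\gamma_\sigma^+$ over cells of a vertical decomposition in the $4$-dimensional ball space. The paper selects cells with at most $n/r$ surfaces of $\Omega$ passing fully above them rather than low-level cells, an inessential difference. You also match the paper in proving (P2) via the region below a lower envelope and \lemref{union}(ii), and in the standard trade-off with the $O^*(n^3)$ structure.

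Your monotonicity concern is settled by your second option, which is exactly what the paper uses implicitly. Every line of the canonical set is already lower than $\sigma$, so the full region $\gamma_\sigma^+=\{d\ge F_\sigma(a,b,c)\}$ contains precisely the lines of that set that meet $\sigma$; no extra level is needed.
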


\paragraph{Ball-intersection-detection queries for lines lower than the ball.}

Let $\L$ be a set of $n$ lines in $\reals^3$. 
Let $\Sigma^+$ (resp., $\Sigma^-$) be the set of balls $\sigma$ in $\reals^3$ that satisfy the 
conditions (C1) and (C$2^+$) (resp., (C1) and (C$2^-$)).
\begin{itemize}
	\item[{\bf (C1)\hspace{1mm}}] $\sigma^\downarrow$ intersects all lines of $\L^\downarrow$, and
  \item[{\bf (C$2^+$)}] all lines of $\L$ are lower than $\sigma$, or
  \item[{\bf (C$2^-$)}] all lines of $\L$ are higher than $\sigma$.
\end{itemize}
Consider the case of $\Sigma^+$. We wish to preprocess $\L$ into a linear-size data structure 
that, for a query ball $\sigma\in \Sigma^+$, can 
determine, in $O^*(n^{2/3})$ time, whether $\sigma$ intersects any line of $\L$.
If the answer is {\sc Yes}, return one such line. 
(A similar data structure can be constructed for answering queries with balls in $\Sigma^-$,
by reversing the $z$-direction.)

We work in the $4$-dimensional parametric space $\Lspace$ of lines defined in \subsecref{Fast_NN1}, i.e., 
a line $\ell$ is represented by the point $\ell^* = (a,b,c,d)$, for which the equations 
defining $\ell$ are $y=ax+c$, $z=bx+d$. Thus, as before, ignoring the true projective nature 
of $\Lspace$, we identify it
with $\reals^4$. Put $\L^* = \{ \ell^* \mid \ell \in \L\}$. 
A ball $\sigma$ is associated with a surface (patch) $\gamma_\sigma \subset \Lspace$, 
which is the locus of points $\ell^*$ such that the corresponding line $\ell$ is tangent to $\sigma$ from below. 
Let $\gamma_\sigma^+$ be the set of the points lying on or above $\gamma_\sigma$ in the $d$-direction; 
$\gamma_\sigma^+$ is a semi-algebraic set of constant complexity.\footnote{%
  For the symmetric case where all lines of $\L$ are higher than a query ball $\sigma$, 
  we define $\gamma_\sigma$ to be the locus of points $\ell^*$ such that $\ell$ is tangent 
  to $\sigma$ from above, and we set $\gamma_\sigma^-$ to be the set of points lying on or below $\gamma_\sigma$.}
It is easily seen that a line $\ell$ in $\L$ intersects $\sigma$ if and only if $\ell^*$ 
lies in $\gamma_\sigma^+$. Let $\Gamma$ be the collection of all sets 
$\gamma_\sigma^+$ such that $\sigma\in\Sigma^+$.
Thus the ball-intersection query for a ball $\sigma$ in our setting reduces to 
semi-algebraic range-emptiness  query in $\L^*$ with $\gamma_\sigma^+\in\Gamma$.

Similar to the discussion in \subsecref{RS}, it suffices to describe an algorithm for 
constructing a test set. We describe such an algorithm that constructs a test set 
of size $O(r^4)$ with $\zeta(m)=O^*(m^3)$ and $\delta=1$ (see \lemref{lower-query} below).  
Plugging these bounds into \lemref{test-set}, we obtain (modulo the still missing details
of the test-set construction) the following:

\begin{lemma} \label{lem:lower-query}
Let $\L$ be a set of $n$ lines in $\reals^3$. $\L$ can be preprocessed,
in $O^*(n)$ expected time, into a data structure of size $O(n)$, so that for any ball $\sigma$ 
in $\Sigma^+$ or in $\Sigma^-$, the intersection-detection query 
with $\sigma$ in $\L$ can be answered in $O^*(n^{2/3})$ time. 
\end{lemma}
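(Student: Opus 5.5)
The plan is to reduce the lemma to \lemref{test-set}. It then suffices to build, for a given parameter $r$, a test set $\test$ for $(n/r)$-shallow ranges of $\Gamma=\{\gamma_\sigma^+\mid\sigma\in\Sigma^+\}$ over the point set $\L^*\subset\Lspace\cong\reals^4$. The test set should have size $r^{O(1)}$, with $\delta=1$ and $\zeta(m)=O^*(m^3)$; then \lemref{test-set} gives query time $O^*(n/\zeta^{-1}(n))=O^*(n^{2/3})$. The $\Sigma^-$ case is symmetric, by reversing the $z$-direction.

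\emph{Construction of $\test$.} The construction follows \subsecref{test-construct}. A ball is parametrized by $(p,t)\in\reals^4$, its center and radius, and each $\ell\in\L$ gives the dual region $K_\ell$ of balls in $\Sigma^+$ that meet $\ell$. I will draw a random sample $N\subseteq\L$ of size $O(r\log r)$, take the vertical decomposition of $\A_{\le k}(N^*)$ with $k=\ceil{c\log r}$ in this $4$-dimensional space (size $O^*(r^4)$ by \cite{Koltun-04a}), and let each prism $\tau$ define the generalized range $\gcyl_\tau=\bigcup\{\gamma_\sigma^+\mid\sigma\in\tau\}$. Properties (P1) and (P3) with $\delta=1$ then follow exactly as in \lemref{test-cyl}: quantifier elimination gives constant complexity, and the shallow-net argument of~\cite{ShSh} gives shallowness and coverage.

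\emph{Property (P2), the main step.} The key structural fact is that $\gamma_\sigma^+$ is monotone in the $d$-direction. If $\ell$ with parameters $(a,b,c)$ is lower than $\sigma$ and meets $\sigma$ (or lies above $\ell_\sigma^-$), then raising $d$ keeps $\ell$ in $V_\ell$, translated upward. By \lemref{lower}(ii-) and condition (C$2^+$), such lines stay in the range, since the range is upward closed along the $d$-direction within the relevant domain. Hence $\gcyl_\tau$, a union of such upward-closed sets, is also upward closed in $d$: $\gcyl_\tau=\{(a,b,c,d)\mid d\ge g_\tau(a,b,c)\}$ for a partially defined semi-algebraic trivariate function $g_\tau$ of constant complexity. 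It is the lower envelope, over $\sigma\in\tau$, of the tangency functions, and it equals $+\infty$ where $\ell^\downarrow$ misses the relevant projections.

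Consequently, the complement of the union of any $m$ ranges of $\test$ is exactly the region below the lower envelope of the $m$ functions $g_\tau$. By \lemref{union}(ii), the vertical decomposition of this region has complexity $O^*(m^3)$, so $\zeta(m)=O^*(m^3)$.

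The main obstacle is verifying this monotonicity rigorously for the generalized ranges. Condition (C$2^+$) ("all lines lower than $\sigma$") depends on $d$ too. I would handle this by intersecting each range with the semi-algebraic region of lines lower than every ball in $\tau$, or by noting that the lines lower than $\sigma$ form an upward-closed set in $d$ as well, since $p$ stays above $H_\ell$ if $\ell$ moves down. I need to check the direction carefully: raising $d$ raises $H_\ell$, so "lower" could fail for large $d$. If it does, I would restrict to the portion between $\gamma_\sigma$ and the critical $d$ where $H_\ell$ passes through $p$. This bounded interval still yields a sandwich of two trivariate envelopes, which is harder. In that case I would instead argue that within the query's canonical set all lines are lower anyway, so for emptiness it is harmless to use the upward closure, and I would redefine the ranges as $\{d\ge\gamma_\sigma\}$.

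Finally, the preprocessing time $O^*(n)$ follows from \lemref{test-set}, since the test set is computed in $O^*(nr^{O(1)})$ time for constant $r$.
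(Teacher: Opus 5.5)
Your proposal follows the paper's proof: the test set consists of the unions $\bigcup_{\sigma^*\in\tau}\gamma_\sigma^+$ over prisms $\tau$ of a sampled vertical decomposition in the $4$-dimensional ball space (the paper keeps the prisms of $\VeD{\A}(R)$ with at most $n/r$ surfaces passing fully above them rather than your prisms of level at most $c\log r$, an inessential variation), and (P2) holds because each union is upward closed in $d$, so \lemref{union}(ii) applies to the region below the lower envelope. Your final resolution of the ``obstacle'' is exactly what the paper does: $\gamma_\sigma^+$ is upward closed by definition, and (C$2^+$) is used only to guarantee that $\L^*\cap\gamma_\sigma^+$ is precisely the set of lines meeting $\sigma$; but then define $K_\ell=\{\sigma^*\mid \ell^*\in\gamma_\sigma^+\}$, a constant-complexity region, rather than ``balls in $\Sigma^+$ that meet $\ell$,'' since $\Sigma^+$ depends on all of $\L$ and the shallowness argument of \lemref{test-cyl} needs $K_\ell$ to be the exact dual of the ranges being unioned.
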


\paragraph{Test-set construction.}
We construct the test set following the same approach as in~\subsecref{test-construct}. 
Let $\Sspace$ be the $4$-dimensional 
parametric space of balls in $\reals^3$, where a ball $\sigma$ of radius $r$  centered at a point $p$ 
is mapped to the point $\sigma^* = (p,r)\in\Sspace$. $\Sspace$ can thus be identified with 
(the positive halfspace $r>0$ of) $\reals^4$. A line $\ell$ in $\reals^3$ is mapped to a surface
$\omega_\ell$, consisting of all points $\sigma^* \in \Sspace$ that represent balls that touch 
$\ell$ from above. As is easily verified, these surfaces are monotone over the $xyz$-subspace, so that a point
$\sigma^*$ lies below the surface $\omega_\ell$ if and only if $\ell$ intersects $\sigma$,
assuming $\sigma\in\Sigma^+$ and $\ell$ is lower than $\sigma$. (Informally, this above/below location of $\sigma^*$ is the reason why we have to distinguish 
  between lines that pass below the ball and lines that pass above.)

Let $\Omega = \{ \omega_\ell \mid \ell \in \L\}$ denote the collection of these surfaces. We take a 
random subset $R\subseteq \Omega$ of $s = cr\log r$ surfaces, for some sufficiently large constant $r$, 
where $c>0$ is a suitable constant, independent of $r$, and construct the vertical decomposition 
$\VeD{\A}(R)$ of the arrangement $\A(R)$, of size $O^*(r^4)$, as 
stated in \lemref{union} (see also~\cite{Koltun-04a}). 
We choose a subset $\Xi$ of $\VeD{\A}(R)$, consisting of those cells that have at most 
$n/r$ surfaces of $\Omega$ passing \emph{fully above} them. By construction, these cells cover the 
highest $n/r$ levels of $\A(\Omega)$ (for else, as is easily checked,
there would exist a cell $\tau \notin \Xi$ that has strictly fewer than $n/r$ surfaces above it, 
contrary to the definition of $\Xi$),
and are contained in the at most $2n/r$ upper levels of $\A(\Omega)$.

Let $\tau$ be a cell of $\Xi$.
We now switch to the parametric line-space $\Lspace$, where each point $\sigma^*\in\tau$
becomes the surface $\gamma_\sigma$. Set $\Phi_\tau = \bigcup_{\sigma^*\in\tau} \gamma_\sigma^+$.
Set $\Phi = \{\Phi_\tau \mid \tau \in \Xi\}$ to be the desired test set. $\Phi$ is a family of 
$O^*(r^4)$ constant-complexity semi-algebraic sets in $\Lspace$. Following the same argument as 
in \subsecref{test-construct}, we can argue that each $\Phi_\tau$ is a semi-algebraic set of 
constant complexity and that it is $(2n/r)$-shallow with respect to $\L^*$. 
Since the cells in $\Xi$ cover 
$\A_{\le n/r}(\Omega)$, any 
$(n/r)$-shallow range $\gamma_\sigma^+$ must be associated with a cell $\tau$ (in the sense that
$\sigma^*\in\tau$), and thus $\gamma_\sigma^+ \subseteq \Phi_\tau$.
Therefore properties (P1) and (P3) of a test set hold. 
We prove property (P2).
Each set $\Phi_\tau$ is unbounded in the positive $d$-direction, therefore the complement of the union 
of a subset of $m$ sets is the region lying below their lower envelope. Hence, by \lemref{union}, the 
complement of the union of these regions (i.e., the portion of space below their lower envelope)
can be decomposed into $O^*(m^3)$ pseudo-prisms, and this construction takes 
$O^*(m^3)$ expected time. Hence, we obtain the following:
\begin{lemma} \label{lem:btest}
  For a parameter $r\ge 1$, a test set for the $(n/r)$-shallow ranges of $\Gamma$, of size $O^*(r^4)$, with 
  $\zeta(m)=O^*(m^3)$ and $\delta=1$, can be computed in $O^*(r^4)$ expected time.
\end{lemma}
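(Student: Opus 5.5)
The plan is to verify the three test-set properties for $\Phi=\{\Phi_\tau\mid\tau\in\Xi\}$ in turn, with (P2) settled by \lemref{union}(ii), and then to bound the construction cost. The construction works in the ball space $\Sspace\cong\reals^4$. We draw a random sample $R\subseteq\Omega$ of $cr\log r$ surfaces $\omega_\ell$, which are graphs of trivariate functions over the $xyz$-subspace. We then build $\VeD{\A}(R)$, which has $O^*(r^4)$ cells by~\cite{Koltun-04a} and can be computed in $O^*(r^4)$ expected time.

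For each cell $\tau$ we count the surfaces of $\Omega$ passing fully above it, and keep $\Xi$, the cells with at most $n/r$ such surfaces. Doing this by brute force costs $O^*(nr^4)$, which is within the $nr^{O(1)}$ budget required by \lemref{test-set}. I will state the bound in that form, or restrict the counts to $R$ and rescale, if the strict $O^*(r^4)$ bound is needed. By the standard $\eps$-net argument~\cite{HW87}, with probability at least $1/2$ every cell is crossed by at most $n/r$ surfaces of $\Omega$. Hence every cell of $\Xi$ lies within the top $2n/r$ levels. Conversely, every point at level at most $n/r$ from above lies in a cell of $\Xi$, since otherwise its cell would have more than $n/r$ surfaces fully above it and so would the point.

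\textbf{(P1) and (P3).} $\Phi_\tau=\{\ell^*\mid\exists\,\sigma^*\in\tau:\ \ell^*\in\gamma_\sigma^+\}$ is a projection of a constant-complexity semi-algebraic set, so quantifier elimination~\cite{BPR} shows it has constant complexity, exactly as in \lemref{test-cyl}(i). Shallowness follows by duality. A line $\ell$ has $\ell^*\in\Phi_\tau$ iff $\ell$ meets some ball $\sigma$ with $\sigma^*\in\tau$, iff $\sigma^*$ lies below $\omega_\ell$. So $\omega_\ell$ either passes fully above $\tau$ or crosses $\tau$, and at most $n/r+n/r=2n/r$ lines qualify. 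For (P3), an $(n/r)$-shallow range $\gamma_\sigma^+$ means $\sigma^*$ lies below at most $n/r$ surfaces of $\Omega$. Then $\sigma^*$ lies in some $\tau\in\Xi$, so $\gamma_\sigma^+\subseteq\Phi_\tau$, which gives $\delta=1$. The factor $2$ in $2n/r$ is absorbed by replacing $r$ with $2r$.

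\textbf{(P2).} Each $\gamma_\sigma^+$ is upward-closed in the $d$-direction, and hence so is $\Phi_\tau$. Let $\phi_\tau(a,b,c)=\inf\{d\mid(a,b,c,d)\in\Phi_\tau\}$, a partially defined, constant-complexity semi-algebraic trivariate function. Then $\Phi_\tau$ is the region on or above its graph, and for any $m$ sets the complement of their union is the region below the lower envelope of the $\phi_\tau$. \lemref{union}(ii) then decomposes this region into $O^*(m^3)$ prisms in $O^*(m^3)$ expected time, so $\zeta(m)=O^*(m^3)$.

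The main obstacle is the upward-closedness claim together with the semi-algebraicity of $\phi_\tau$. We need that raising $d$, which translates $\ell$ vertically upward inside $V_\ell$, keeps a lower line intersecting $\sigma$ until it passes $\ell_\sigma^+$. This holds only within the regime of \lemref{lower}, where $\sigma\in\Sigma^+$, $\ell$ is lower than $\sigma$, and $\ell^\downarrow$ meets $\sigma^\downarrow$. I would therefore first restrict $\gamma_\sigma^+$ to $\ell^*$ satisfying (C1) and (C$2^+$) and check that it is still $d$-monotone there. If it is not, I would replace it by its upward closure $\{(a,b,c,d')\mid d'\ge d,\ (a,b,c,d)\in\gamma_\sigma^+\}$ and verify that this does not change the answers for the query balls considered, since by (C$2^+$) every input line lies below the ball.
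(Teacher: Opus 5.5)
Your construction and your verification of (P1)--(P3) follow the paper's proof step for step. Your $O^*(nr^4)$ accounting for selecting $\Xi$ is the accurate one; the stated $O^*(r^4)$ omits that step, and $O^*(nr^4)$ is all that \lemref{test-set} needs. The upward-closedness you flag as the main obstacle is a non-issue: $\gamma_\sigma^+$ is \emph{defined} as the set of points on or above $\gamma_\sigma$ in the $d$-direction, and your upward-closure fallback merely reproduces that definition.

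The genuine gap is what this definition does to your (P1) argument. On lines lower than $\sigma$, $\gamma_\sigma^+$ is the set of lines meeting $\sigma$. But it also contains every line higher than $\sigma$ whose $xy$-projection meets $\sigma^\downarrow$, whether or not that line meets $\sigma$, because such a line passes above the center of the disk $\sigma\cap V_\ell$ and hence above $\ell_\sigma^-$. So your step ``$\ell^*\in\Phi_\tau$ iff $\ell$ meets some ball $\sigma$ with $\sigma^*\in\tau$, iff $\sigma^*$ lies below $\omega_\ell$'' is valid only for balls with respect to which $\ell$ is lower, e.g.\ for $\sigma\in\Sigma^+$. However, $\Phi_\tau$ is a union over \emph{all} balls of $\tau$, and the cells of $\VeD{\A}(R)$ are not confined to $\Sigma^+$. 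Over a center lying below $H_\ell$ the surface $\omega_\ell$ is not even defined, so such an $\ell$ is counted neither as crossing $\tau$ nor as passing above it.

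Here is a concrete failure. Take a cell of $\VeD{\A}(R)$ lying over a region of centers below every plane $H_\ell$, $\ell\in R$. It meets no surface of $R$, so it is a full column containing balls of every radius. It is crossed by at most $n/r$ surfaces of $\Omega$, so its centers lie below $H_\ell$ for all but $n/r$ lines. Hence no surface passes above it and it is kept in $\Xi$. Yet its $\Phi_\tau$ contains almost all of $\L^*$, since every $V_\ell$ meets a large enough ball. Your observation that (C$2^+$) makes the closure harmless secures (P3), not the shallowness of $\Phi_\tau$.

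The repair is to replace $\omega_\ell$ by the exact dual locus of the relation $\ell^*\in\gamma_\sigma^+$. For $\sigma^*=(p,r)$, let $p'$ be the projection of $p$ onto $V_\ell$. The disk $\sigma\cap V_\ell$ has center $p'$ and squared radius $r^2-\dist(p,V_\ell)^2$, and $\dist(p,\ell)^2=\dist(p,V_\ell)^2+\dist(p',\ell)^2$. Hence $\ell^*\in\gamma_\sigma^+$ iff $r\ge\dist(p,\ell)$ when $p$ lies above $H_\ell$, and iff $r\ge\dist(p,V_\ell)$ when $p$ lies on or below $H_\ell$. The right-hand side is a continuous, constant-complexity semi-algebraic function of $p$ on all of $\reals^3$. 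Take its graphs $\hat\omega_\ell$ as the surfaces; then $\ell^*\in\gamma_\sigma^+$ iff $\sigma^*$ lies on or above $\hat\omega_\ell$, for \emph{every} ball. Your count $|\Phi_\tau\cap\L^*|\le 2n/r$ then goes through verbatim, as in \lemref{test-cyl}, where the regions $K_q$ are exact duals of the ranges; (P2) and (P3) are unaffected.

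One orientation detail: with $r$ increasing upward, meeting corresponds to lying on or above the surface. So the cells to keep are those with at most $n/r$ surfaces fully \emph{below} them; your ``fully above'' follows the inverted convention in the paper's text. The paper itself covers (P1) only by ``following the same argument as in \subsecref{test-construct}'', so the same correction is needed there.
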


We have thus established the missing ingredient for the proof of \lemref{lower-query}, and so 
this completes the description of the linear-size data structure for NN searching with query 
points amid lines. Combining this with the preceding analysis,
which culminates in Lemmas~\ref{lem:ballint} and \ref{lem:lower-query}, we obtain a data structure
of linear size, that answers an NN query with a point in $O^*(n^{2/3})$ time. 
By Lemma~\ref{lem:ballint} and the reduction from NN search to ball intersection detection, mentioned 
earlier in this section, we conclude the following main result.

\begin{theorem}
  \label{thm:NNPL}
  Let $\L$ be a set of $n$ lines in $\reals^3$, and let $s\in [n,n^3]$ be a storage parameter. 
  $\L$ can be preprocessed, in $O^*(s)$ expected time, into a data structure of size $O^*(s)$,
  so that for any query point $p\in\reals^3$, the line of $\L$ 
  nearest to $p$ can be computed in $O^*(n/s^{1/3})$ time. 
\end{theorem}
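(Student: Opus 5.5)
The plan is to assemble \thmref{NNPL} from pieces already established: reduce NN queries to ball-intersection-detection queries via parametric search, and then invoke \lemref{ballint}. That lemma is proved by the three-level partition tree whose last level is \lemref{lower-query}, built on the test set of \lemref{btest}.

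\emph{Step 1: the two extreme regimes.} For $s=n^3$, I take the $O^*(n^3)$-size structure of \lemref{union}(ii) (equivalently \cor{NN}) on the trivariate functions $f_\ell(q)=\dist(q,\ell)$. It answers an NN query by locating $q$ in the minimization diagram, in $O(\log n)$ time.

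For $s=n$, I use the three-level linear-size partition tree:
\begin{itemize}
\item the first level is a 2D semi-algebraic structure for condition (i) of \lemref{lower};
\item the second level is a 3D halfspace-stabbing structure separating lines lower and higher than $\sigma$;
\item the third level is the structure of \lemref{lower-query}, on each canonical set $\L_w$.
\end{itemize}
For the third level, I plug \lemref{btest} into \lemref{test-set}. With $\zeta(m)=O^*(m^3)$ we get $\zeta^{-1}(n)=n^{1/3-o(1)}$, so the query time is $O^*(n^{2/3})$. The correctness of the decomposition follows from \lemref{lower}: $\ell$ meets $\sigma$ iff (i) holds together with (ii$-$) or (ii$+$). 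Each such line is caught in exactly the canonical subsets where $\sigma\in\Sigma^+$ or $\Sigma^-$ relative to $\L_w$. The multi-level framework (Theorem A.1 of \cite{AAEKS}) then gives $O^*(n^{2/3})$ overall, with size $O(n)$ after the standard sparsification of secondary structures.

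\emph{Step 2: trade-off and NN.} For $s\in[n,n^3]$, I follow the standard recipe of Appendix A.2 of \cite{AAEKS}.
\begin{itemize}
\item At the top, build a $(1/t)$-cutting of the minimization-diagram structure (equivalently, of $\reals^3$ with respect to the distance-function graphs) with $t$ chosen so that its size plus the recursive costs total $O^*(s)$; concretely, $t\approx (s/n)^{1/2}$ in the sense that each cell has conflict list of size $n/t$.
\item Then switch to the linear-size structure on each conflict list.
\end{itemize}
Balancing gives query time $O^*\bigl(m^{2/3}\bigr)$ on subproblems of size $m=n/t$, with $t^3$ cells of total storage $t^3\cdot m=s$. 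This yields $O^*((n/s^{1/3}))$ once the exponents are checked: $m^{2/3}$ with $m=n/s^{1/3}\cdot$(lower-order factors) is dominated by $n/s^{1/3}$ at both endpoints. I will verify this at $s=n$ (giving $n^{2/3}$) and $s=n^3$ (giving $O^*(1)$). Finally, parametric search (Megiddo-style, as in \cite{AM93}) turns the decision procedure into an NN procedure with only an $O^*(1)$ overhead, and preprocessing stays $O^*(s)$ expected.

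\emph{Main obstacle.} The delicate point is the test set's property (P2): the complement of the union of $m$ sets $\Phi_\tau$ must decompose into $O^*(m^3)$ cells. This requires that each $\Phi_\tau$ be upward-closed in the $d$-direction in $\Lspace$, so that the complement is the region below a lower envelope of trivariate functions, and \lemref{union}(ii) applies. I would check upward-closure directly: if $\ell$ passes above the lower tangent $\ell_\sigma^-$ within $V_\ell$, then raising $d$ keeps it above. This check is where the restriction to $\Sigma^+$ (lines lower than the ball) is essential.
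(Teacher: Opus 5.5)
Your proposal is correct and follows the paper's proof. Parametric search reduces NN to ball-intersection detection, the linear-size structure is the three-level tree ending in \lemref{lower-query} (via \lemref{btest} and \lemref{test-set}), the $O^*(n^3)$ structure comes from \lemref{union}(ii), and the two are combined by the standard trade-off of \cite{AAEKS}. The paper combines them at the decision level in \lemref{ballint}, whereas you combine them at the NN level via the $O^*(t^3)$-size decomposition of the region below the lower envelope in $\reals^4$; this difference is immaterial.

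There is one slip in your exponent check. With $t=(s/n)^{1/2}$ you have $m=n/t=n^{3/2}/s^{1/2}$, not $n/s^{1/3}$ times lower-order factors, and then $m^{2/3}=n/s^{1/3}$ exactly, so the bound holds for every $s\in[n,n^3]$ with no endpoint comparison needed.
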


\subsection{Point NN queries amid segments or triangles}

Finally, we show that our data structure can be extended to answering NN queries with points when the input objects 
are segments or triangles. We first consider the case of segments, and focus on the regime of near-linear
storage.

\paragraph{The case of segments.}
Let $\E=\{e_1,\ldots, e_n\}$ be a set of $n$ segments in $\reals^3$. 
For a segment $e_i \in \E$, let $\ell_i$ be the line containing $e_i$ and $W_i$ the slab 
bounded by the two planes passing through the endpoints of $e_i$ and orthogonal to $e_i$.
We follow the same idea as in \subsecref{TrNN}.
We build two data structures: The first one preprocesses $P$, the set of endpoints of the segments 
in $\E$, into a data structure for computing $\NN(q, P)$, for any query point $q\in\reals^3$.
The second structure is a multi-level partition tree $T$ on $\E$, similar to the one just 
described, each of whose nodes $v$ stores a canonical subset $\E_v \subseteq \E$. The first two levels 
build halfspace range-searching data structures so that for a query point $q\in\reals^3$, we 
can extract the subset $\E_q$ of segments $e_i\in\E$ such that $q\in W_i$, as a union of a 
few canonical subsets (see below for precise bounds). At each node $w$ of the second level, let $\L_w$ be the 
set of lines containing the segments of the canonical subset $\E_w$. We construct an NN query 
data structure for $\L_w$, following Theorem~\ref{thm:NNPL}.

Let $q\in\reals^3$ be a query point. By querying the first data structure with $q$, we compute 
the segment $e_i$ of $\E$ that has an endpoint that is closest to $q$ among all endpoints.
Next, using the second data structure, we compute the segment $e_i$ that is nearest to $q$,
among those segments $e_j$ for which $q\in W_j$. The closer of the these two segments is $\NN(q,\E)$. 
The total size of the data structure is $O^*(n)$ and the query time remains $O^*(n^{2/3})$.
(The cost of an NN query with a point amid points is subsumed by the cost of the second structure.)

\paragraph{The case of triangles.}
Next, let $\T=\{\triangle_1,\ldots,\triangle_n\}$ 
be a set of $n$ triangles in $\reals^3$. Let $\E$ be the set of edges of the triangles in $\T$. 
For a triangle $\triangle_i$, let $\pi_i$ be the plane containing $\triangle_i$, let $\ell_i$ be the 
line passing through the origin and normal to $\pi_i$, and let $W_i= \triangle_i\oplus \ell_i$ 
be the prism spanned by $\triangle_i$ and orthogonal to it.

We again build two data structures. We first construct the NN data structure just described on the set $\E$. 
For a point $q\in\reals^3$, let 
\[
\T_q = \{ \triangle_i \in \T \mid q \in W_i\} .
\]
Next, we build a following four-level partition tree. 
For a query point $q\in\reals^3$, the first three levels 
extract the subset $\T_q$ of triangles, as a union of a few 
canonical subsets. For each canonical subset $\T_z$, let 
$P_z$ be the set of points dual to the planes containing the triangles of $\T_z$. 
At the fourth level of the structure, we preprocess $P_z$ into a linear-size 
semi-algebraic range-searching data structure~\cite{AMS,MP}, where the query ranges are the 
following dual regions of balls in $\reals^3$. That is, for a pair $q\in\reals^3$ and $\rho\ge 0$, let 
$\gamma_{q,\rho}$ be the set of points dual to the planes that are within distance $\rho$ from 
$q$ (namely, intersect the ball of radius $\rho$ centered at $q$).
For a node $z$ at the third-level tree such that $\T_z\subseteq \T_q$, for a query point $q\in\reals^3$, 
combining the fourth-level data structure built for $\T_z$ with parametric search, we can compute 
$\NN(q,\T_z)$ in time $O^*(|\T_z|^{2/3})$. By querying both data structures with $q$, we obtain 
$\NN(q,\T)$ as the nearest of the two output triangles.

\paragraph{The fast-query regime.}
For the fast query regime, we can again construct constant-complexity semi-algebraic trivate function 
$f_\triangle$ for each $\triangle\in\T$ that measures the distance of a point $q\in\reals^3$ from $p$.
We can preprocess these functions, 
in $O^*(n^3)$ expected time, into a data structure of $O^*(n^3)$ size, so that 
$\NN(q,T)$ for a query point $q\in\reals^3$ can be computed in $O(\log n)$ time.
By combining this data structure with the $O^*(n)$-size data structure just presented,
we can obtain space/query-time trade-off, with the same performance bound as before. 
Putting everything together, we obtain the following:

\begin{theorem}
  \label{thm:NNPT}
Let $\T$ be a set of $n$ triangles (or segments) in $\reals^3$, and let $s\in [n,n^3]$ be a parameter. 
$\T$ can be preprocessed, in $O^*(s)$ expected time, into a data structure of size $O^*(s)$,
so that for any query point $q\in\reals^3$, $\NN(q,\T)$ can be computed in $O^*(n/s^{1/3})$ expected time. 
\end{theorem}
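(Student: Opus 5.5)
The plan is to assemble Theorem~\ref{thm:NNPT} from three ingredients: a near-linear-size structure with $O^*(n^{2/3})$ query time, the $O^*(n^3)$-size structure with logarithmic query time, and the standard space/query-time interpolation of~\cite{AAEKS}. For segments, the linear-size structure is the two-structure scheme just described. The first structure answers point-NN queries amid the $2n$ endpoints. The second is a multi-level partition tree. Its first two levels use halfspace range searching to isolate the segments $e_i$ with $q\in W_i$. The membership $q\in W_i$ is a conjunction of two linear inequalities in $q$, so each level is a three-dimensional halfspace query with $O^*(n^{2/3})$ query time and linear space. Its last level applies Theorem~\ref{thm:NNPL} to the lines containing the canonical subset.

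Correctness for segments follows from a dichotomy. If $q\in W_i$ then $\dist(q,e_i)=\dist(q,\ell_i)$; otherwise $\dist(q,e_i)$ equals the distance from $q$ to an endpoint of $e_i$. Consequently the minimum over the two returned candidates is $\dist(q,\E)$. A candidate can never underestimate a distance, since $\dist(q,\ell_i)\le\dist(q,e_i)$ always holds. For triangles, the analogous dichotomy is the following. If $q\in W_i$, then $\dist(q,\triangle_i)=\dist(q,\pi_i)$. Otherwise the nearest point of $\triangle_i$ lies on $\bd\triangle_i$. We therefore query the segment structure on the $3n$ edges, and a four-level tree whose first three levels isolate $\T_q$ (three halfspace conditions per prism). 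The fourth level answers, via parametric search over $\rho$, a semi-algebraic range-emptiness query in the three-dimensional dual space of planes, with ranges $\gamma_{q,\rho}$. By~\cite{AMS,MP} this gives $O^*(m^{2/3})$ per canonical set. Multi-level composition (Theorem~A.1 of~\cite{AAEKS}) keeps the total at $O^*(n)$ space and $O^*(n^{2/3})$ time. One point needs care: a minimum taken over canonical subsets must be combined correctly with the parametric search. I would instead run parametric search over the whole multi-level decision structure, which is the standard approach and gives $O^*(1)$ decision calls.

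For the large-storage end, each $f_\triangle(q)=\dist(q,\triangle)$ is a constant-complexity semi-algebraic trivariate function. Corollary~\ref{cor:NN} then gives $O^*(n^3)$ space and $O(\log n)$ query time. To interpolate, we build a partition tree of $O^*(1)$ levels whose top levels use cuttings in the relevant parametric spaces, down to subproblems of size $n/t$, and attach the $O^*(\cdot^3)$ structure at the bottom. This is the recipe of Appendix~A.2 of~\cite{AAEKS}, and it yields query time $O^*(n/s^{1/3})$ for $s\in[n,n^3]$. The linear end gives $n^{2/3}=n/n^{1/3}$, so the endpoints match.

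I expect the main obstacle to be this trade-off step. Each level of the multi-level structure has a different dimension and a different large-storage counterpart, so each must be shown to admit the same $n/s^{1/3}$ trade-off. The halfspace levels in $\reals^3$ and the dual-plane level admit it by standard three-dimensional bounds. The line level admits it by Theorem~\ref{thm:NNPL} itself. I would verify that no level has a worse exponent, then invoke the multi-level trade-off theorem, which loses only $O^*(1)$ factors.
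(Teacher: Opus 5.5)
Your proposal is correct and follows the paper's proof. It uses the same endpoint/slab split for segments, the same edge structure plus four-level tree (three prism levels, then dual-plane range searching with parametric search) for triangles, Corollary~\ref{cor:NN} at the $O^*(n^3)$ end, and the interpolation of~\cite{AAEKS}. The obstacles you flag are milder than you fear: parametric search run separately on each canonical subset, as in the paper, is already correct because $\dist(q,\triangle)=\dist(q,\pi_\triangle)$ for every $\triangle$ in a canonical subset of $\T_q$, and the paper simply combines the linear-size and cubic structures as black boxes, so no level-by-level exponent check is needed. One slip: your non-underestimation sentence has the inequality backwards, since $\dist(q,\ell_i)\le\dist(q,e_i)$ is exactly why line distances may be used only for segments with $q\in W_i$, where they are exact.
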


\section{Line NN Queries amid Lines}
\label{sec:Fast_NN2}

Next, we consider the case when the input is a set $L$ of $n$ lines in $\reals^3$, and 
show that the machinery in \subsecref{Fast_NN1} can be adapted, with some twists, to compute 
$\NN(\ell,L)$, for a query line $\ell$. Similar to the other setups discussed above,
we note that, using a $4$-dimensional semi-algebraic range searching data structure, 
$L$ can be preprocessed, in $O^*(n)$ expected time, into a data structure of $O(n)$ space,
so that, for a query line $\ell$, $\NN(\ell,L)$ can be computed in $O^*(n^{3/4})$ time~\cite{AMS,AAEKS,MP}. 
We therefore focus, again, on a data structure with $O^*(1)$ query time, with the goal of reducing its storage.

For simplicity and without loss of generality, we assume that no line of $L$ passes through the origin and 
that no two lines of $L$ lie on a plane passing through the origin.\footnote{The second requirement is vacuous if
we assume general position of the lines, so that no two lines are coplanar, but it becomes significant
when we allow such degeneracies in the input.}
We can ensure these properties by perturbing the origin if necessary.

\paragraph{Distance function, Voronoi diagram.}
We use the same representation $\ell^*=(a,b,c,d)$ for a line $\ell$, using the orthogonal plane $\pi_\ell$ 
through the origin, as before.  
For an input line $\lambda\in L$, we define several objects:
\begin{enumerate}[(i)]
\item We define a function $f_\lambda: \reals^4 \rightarrow \reals_{\ge 0}$ such that for a line $\ell$ with $\ell^*=(a,b,c,d)$, $f_\lambda(a,b,c,d)=\dist(\ell,\lambda)$.
\item For a line $\ell$ with $\ell^*=(a,b,c,d)$,
  let $\lambda^\downarrow:= \lambda^\downarrow (a,b)$ denote the projection of $\lambda$ onto $\pi_\ell$. 
  If $\lambda^\downarrow(a,b)$ does not pass through the origin,
  the equation of $\lambda^\downarrow$, in the canonical coordinate frame within $\pi_\ell$, is of the form
  \[
  \xi_\lambda(a,b) x + \eta_\lambda(a,b) y + \zeta_\lambda(a,b) = 0 ,
  \]
  where $(\xi_\lambda(a,b), \eta_\lambda(a,b))$ is the unit vector within $\pi_\ell$ 
  orthogonal to $\lambda^\downarrow$
  and pointing away from the origin (so the origin lies in the negative halfplane bounded by 
  $\lambda^\downarrow$). 
  We omit the easily derivable explicit expressions for
  $\xi_\lambda(a,b)$, $\eta_\lambda(a,b)$ and $\zeta_\lambda(a,b)$. 
  See Figure~\ref{fig:line-line}.
    As we vary $(a,b)$,
    the unit vector $(\xi_\lambda(a,b),\eta_\lambda(a,b))$ changes continuously except when $\lambda^\downarrow(a,b)$ 
    passes through the origin.
\item Let $h_\lambda$ be the plane spanned by $\lambda$ and the origin.
\item 
  We define two \emph{critical} surfaces: 
  let $\ph_{\lambda,1}$ be the locus of all points $\ell^*=(a,b,c,d)\in\Lspace$ such that $\ell$ is parallel to $h_\lambda$ (namely $\lambda^\downarrow(a,b)$ passes through the origin in $\pi_\ell$), and 
  let $\ph_{\lambda,2}$ be the locus of all points $\ell^*\in\Lspace$ such that 
  $\ell$ touches $\lambda$. Since the planes $h_\lambda$ are distinct for all lines in $L$, the intersection surface 
  $\ph_{\lambda,1}\cap\ph_{\theta,1}$, for two distinct lines $\lambda,\theta\in L$, corresponds to all lines that are parallel to the 
  intersection line $h_\lambda\cap h_\theta$ and is thus 
  a $2$-flat of the form $(a_0,b_0)\times\reals^2$ for some fixed 
  $(a_0,b_0)\in\reals^2$ (ignoring, as before, the fact that $\Lspace$ is projective).   
  For all points $\ell^*=(a,b,c,d)$ within a cell of $\Lspace\setminus (\ph_{\lambda,1}\cup\ph_{\lambda,2})$, the origin lies on the same side of the line $\lambda^\downarrow(a,b)$ in $\pi_\ell$,
  the point $(c,d)$ 
  lies in this halfplane as well (that is, the same halfplane of $\pi_\ell$ bounded by $\lambda^\downarrow(a,b)$), 
  and the unit vector $(\xi_\lambda(a,b), \eta_\lambda(a,b))$ varies continuously. These properties will be crucial in building the data structure, as described below.
\end{enumerate}

\begin{figure}[htb]
  \begin{center}
    \includegraphics[scale=0.60]{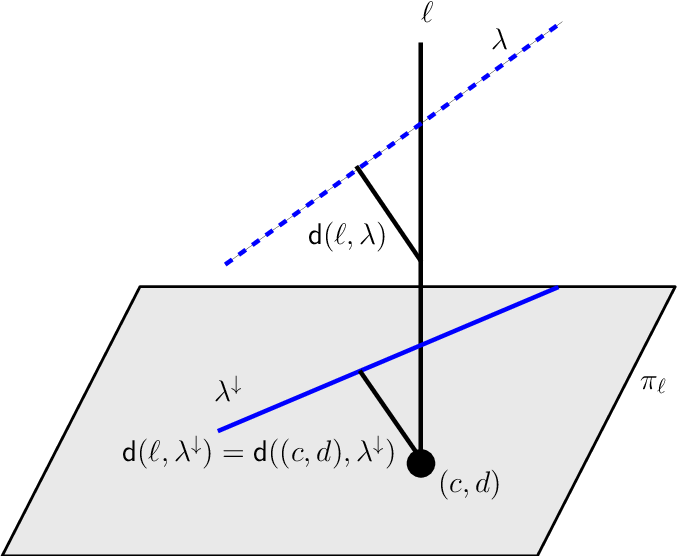}
    \end{center}
   \caption{The distance function for lines within $\pi_\ell$. The two segments that realize
         $\dist(\ell,\lambda)$ and $\dist(\ell,\lambda^\downarrow) = \dist((c,d),\lambda^\downarrow)$ 
         are parallel and equal, with the latter segment being the projection of the former onto $\pi_\ell$.}
	\label{fig:line-line}
\end{figure}

A crucial 
observation, which is easy to verify, is that
\[
f_\lambda(a,b,c,d) = \dist(\ell,\lambda^\downarrow) = \dist((c,d),\lambda^\downarrow) .
\]
Hence,
\begin{equation}
	\label{eq:l-dist}
	f_\lambda(a,b,c,d)  = \left| \xi_\lambda(a,b) c + \eta_\lambda(a,b) d + \zeta_\lambda(a,b) \right| .
\end{equation}
	Except for the absolute value, (\ref{eq:l-dist}) is linear in $c$ and $d$ and thus $f_\lambda$ can be treated as a linear function locally.  More precisely,
the critical surfaces $\ph_{\lambda,1}, \ph_{\lambda,2}$, defined above, induce a stratification of $\Lspace$ such that $f_\lambda$ can be treated as a linear function (in $c$ and $d$) within each cell of $\Lspace \setminus (\ph_{\lambda,1}\cup \ph_{\lambda,2})$.
This property will be crucial for our data structure.
The minimization diagram $\M$ of the functions $\{f_\lambda\mid \lambda\in L\}$ represents the 
Voronoi diagram induced by $L$ in $\Lspace$. 
As in \subsecref{Fast_NN1}, we will exploit the special structure of Voronoi 
regions in $\M$, resulting from the ``local linearity'' of the 
functions $f_\lambda$ in $c$ and $d$.

\begin{figure}[htb]
  \begin{center}
    \includegraphics[scale=0.7]{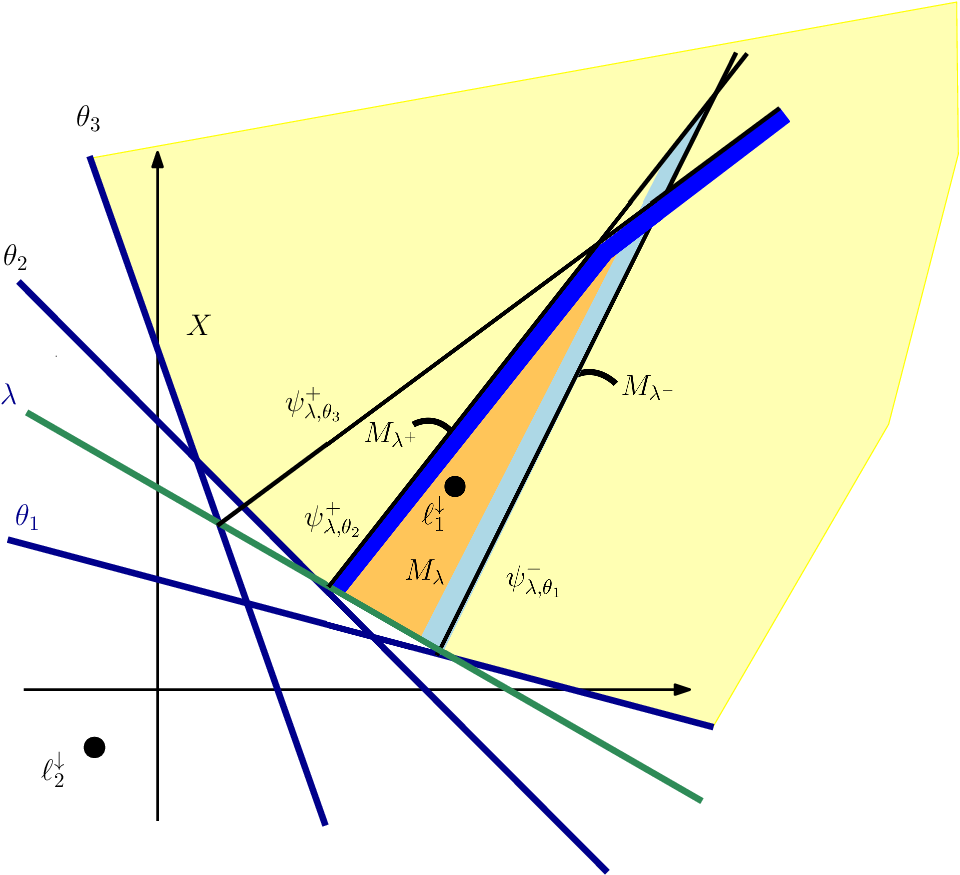}
	\caption{Voronoi cell of $\lambda$ in a fixed $(a_0,b_0)$-plane $\pi(a_0,b_0)$. Four lines $\lambda, \theta_1,\theta_2, \theta_3$. 
	  The lightly shaded region, $\Xspace$, corresponds to the lines $(a_0,b_0,c,d)$ that are positively oriented with respect to all four of 
	  them --- line $\ell_1$ (resp.,\ $\ell_2$) is positively (resp.,\ negatively) oriented with respect to them.
	  Linear functions $\psi_{\lambda,\theta_2}^+, \psi_{\lambda,\theta_3}^+, \psi_{\lambda,\theta_1}^-$,
	  lower envelope $M_{\lambda}^+$, the upper envelope $M_{\lambda^-}$, and the Voronoi cell $\M_\lambda$ (within $\Xspace$).
	}
	\label{fig:line-VD}
  \end{center}
\end{figure}

\paragraph{Structure of the Voronoi diagram.}
We call a line $\ell$, with $\ell^*=(a,b,c,d)$, \emph{positively (resp., negatively) oriented} with respect to 
$\lambda\in L$ if the point $(c,d)$ lies in the positive (resp., negative) halfplane of $\pi_\ell$ 
bounded by  $\lambda^\downarrow$, i.e., 
$\xi_\lambda(a,b) c + \eta_\lambda(a,b) d + \zeta_\lambda(a,b)$ is positive (resp., negative).

Let $\lambda, \theta \in L$ be a pair of input lines. 
We define a linear function $\psi_{\lambda,\theta}: \Lspace_0 \rightarrow \reals$,
where $\Lspace_0$ is the hyperplane $d=0$ (compare with the similar earlier expression in 
(\ref {eq:psipq})), assuming $\eta_\lambda(a,b) \neq \eta_{\theta} (a,b)$,
\begin{equation}
\label{eq:l-surface}
d = \psi_{\lambda,\theta} (a,b,c) := \frac{\zeta_{\theta} (a,b)-\zeta_\lambda (a,b)}{\eta_\lambda(a,b)-\eta_\theta(a,b)} + 
\frac{\xi_{\theta} (a,b)-\xi_\lambda (a,b)}{\eta_\lambda(a,b)-\eta_{\theta}(a,b)} \; c .
\end{equation}
The graph of $\psi_{\lambda,\theta}$ is the bisector of $\lambda$ and $\theta$ when restricted to the points $\ell^*=(a,b,c,d)$ such that $\ell$ is positively oriented with respect to both $\lambda$ and $\theta$. 

Note that the function $\psi_{\lambda,\theta} (a,b,c)$ is not defined when $\eta_\lambda(a,b) = \eta_{\theta} (a,b)$, 
which also implies that $\xi_\lambda(a,b) = \xi_{\theta} (a,b)$ (because $(\xi_\lambda(a,b),\eta_\lambda(a,b))$ 
is a unit vector and $\ell$ is positively oriented with respect to both $\lambda,\theta$), i.e., 
$\lambda^\downarrow (a,b)$ and ${\theta}^\downarrow(a,b)$ are parallel and lie on the same side of 
the origin in $\pi_\ell$. 
Let $\beta_{\lambda,\theta}$ denote the one-dimensional locus of all $(a,b)$ for which 
$\lambda^\downarrow(a,b)$ and ${\theta}^\downarrow (a,b)$ are parallel, namely,
\[
	\beta_{\lambda,\theta} = \{ (a,b) \mid \eta_\lambda(a,b) = \eta_{\theta}(a,b) \} .
\]

As in~\eqref{surfaces}, we replace $\psi_{\lambda,\theta}$ by two partially defined functions $\psi_{\lambda,\theta}^+$ 
and $\psi_{\lambda,\theta}^-$ depending on  whether $\eta_\lambda(a,b)-\eta_{\theta}(a,b)$ is positive or negative
(assuming it to be nonzero), namely,
\begin{align}
  \label{eq:l-surfaces}
  \begin{split}
	  \psi_{\lambda,\theta}^+(a,b,c) & = 
    \begin{cases}
      \psi_{\lambda,\theta}(a,b,c) & \eta_\lambda (a,b) - \eta_\theta(a,b) > 0 , \\
      +\infty & \text{otherwise.}
    \end{cases} \\
    \psi_{\lambda,\theta}^-(a,b,c) & = 
    \begin{cases}
      \psi_{\lambda,\theta}(a,b,c) & \eta_\lambda (a,b) - \eta_\theta(a,b) < 0, \\
      -\infty & \text{otherwise} .
    \end{cases} 
  \end{split}
\end{align}

Let $\Xspace \subset\Lspace$ and let $H\subset L$ be such that all lines in $H$ are positively oriented 
with respect to all lines $\ell$ with $\ell^*\in\Xspace$. For a line $\lambda\in H$, let 
\[
	\M_\lambda := \M_\lambda(\Xspace,H) = \{ \ell^*\in \Xspace \mid \dist(\ell,\lambda)\le \dist(\ell,\theta) \; \forall\theta\in H\}
\]
be the Voronoi region of $\lambda$ with respect to $H$ restricted to $\Xspace$.
Following the same argument as in \subsecref{Fast_NN1},
$\M_\lambda$ can be expressed as
\begin{equation} \label{eq:l-sandwich}
  \M_\lambda = \left\{ (a,b,c,d) \in \Xspace \mid \max_{\theta\in H\setminus\{\lambda\}} \psi_{\lambda,\theta}^-(a,b,c) \le d \le 
  \min_{\theta\in H\setminus\{\lambda\}} \psi_{\lambda,\theta}^+(a,b,c) \right\} .
\end{equation}
That is, $\M_\lambda$ is
the sandwich region between the 
upper envelope of $\{\psi_{\lambda,\theta}^- \mid \theta \in H\setminus\{\lambda\}\}$ and the 
lower envelope of $\{\psi_{\lambda,\theta}^+ \mid \theta\in H\setminus\{\lambda\}\}$, restricted within $\Xspace$.  Alternatively, $\M_\lambda=\M_\lambda^-\cap \M_\lambda^+\cap\Xspace$, 
where $\M_\lambda^-$ (resp., $\M_\lambda^+$) is the region lying below 
the lower envelope (resp., above the upper envelope) of the functions  $\psi_{\lambda,\theta}^+$  (resp., $\psi_{\lambda,\theta}^-$),
over $\theta\in H\setminus\{\lambda\}$, in the $d$-direction. Note that $\M_\lambda^+$ and $\M_\lambda^-$ are
monotone in the $d$-direction. 
See Figure~\ref{fig:line-VD} for an illustration. 

We can also state the analog of \lemref{nearer-point}, whose proof follows
in a similar manner. In preparation for the lemma, let 
$\ell^* = (\ell_a,\ell_b,\ell_c,\ell_d)\in \Lspace$, and 
let $\rho_\ell$ denote the line in the $d$-direction in $\Lspace$ passing through $\ell^*$. 
Let $\lambda\in L$ be a line such that $\ell$ is positively oriented with respect to $\lambda$. 
Let $f_{\lambda}^{(\ell)}(d)$ be the linear function over $\rho_\ell$ defined by
\[ f_{\lambda}^{(\ell)}(d) = 
	  \eta_\lambda(\ell_a,\ell_b) d +\xi_\lambda(\ell_a,\ell_b) \ell_c + \zeta_\lambda(\ell_a,\ell_b).
	 \]
If $\ell(d)$ is positively oriented with respect to $\lambda$, then $f_\lambda^{(\ell)}(d)$ is
the distance of the line $\ell(d) = (\ell_a,\ell_b,\ell_c,d)$ from $\lambda$.
Let $\gamma_{\lambda}^{(\ell)}$ be the graph of the function 
$f_{\lambda}^{(\ell)}$, which is clearly a line in the two-dimensional $(d,x_5)$-plane.
\begin{lemma}
  \label{lem:nearer-line}
  Let $\lambda,\theta \in L$ be two lines such that $\ell$ is 
  positively oriented with respect to both $\lambda$ and $\theta$.
  Assume  that $\theta$ is closer to $\ell$ than 
  $\lambda$ and (i.e., $f_{\theta}(\ell^*) < f_\lambda(\ell^*)$) and $(\ell_a,\ell_b) \not\in \beta_{\lambda,\theta}$.
  Then the line $\gamma_{\theta}^{(\ell)}$, the graph of the 
  function $f_\theta(\ell_a,\ell_b,\ell_c,d)$, intersects $\gamma_{\lambda}^{(\ell)}$ at a unique point
  $w=(\ell_a,\ell_b,\ell_c,w_d)$ where $w_d=\psi_{\lambda,\theta} (\ell_a,\ell_b,\ell_c)$. Furthermore,
  if $w_d > \ell_d$ (resp., $w_d < \ell_d$) then $w_d = \psi_{\lambda,\theta}^-(\ell_a,\ell_b,\ell_c)$
  (resp., $w_d = \psi_{\lambda,\theta}^+(\ell_a,\ell_b,\ell_c)$).
\end{lemma}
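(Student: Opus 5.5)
We follow the proof of \lemref{nearer-point}, replacing the functions $f_p$ by the locally linear functions $f^{(\ell)}_\lambda$. Since $\ell$ is positively oriented with respect to both $\lambda$ and $\theta$, (\ref{eq:l-dist}) holds at $\ell^*$ without the absolute value:
\[
f_\lambda(\ell^*) = f_\lambda^{(\ell)}(\ell_d), \qquad f_\theta(\ell^*) = f_\theta^{(\ell)}(\ell_d).
\]
Thus the comparison $f_\theta(\ell^*) < f_\lambda(\ell^*)$ becomes a comparison of two affine functions of $d$ along $\rho_\ell$, evaluated at $d=\ell_d$.

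\emph{Existence and location of the crossing.} The slopes of $\gamma_\lambda^{(\ell)}$ and $\gamma_\theta^{(\ell)}$ are $\eta_\lambda(\ell_a,\ell_b)$ and $\eta_\theta(\ell_a,\ell_b)$. They differ because $(\ell_a,\ell_b)\notin\beta_{\lambda,\theta}$, so the two lines meet in exactly one point $w$. Solving $f_\lambda^{(\ell)}(d)=f_\theta^{(\ell)}(d)$ for $d$ gives
\[
(\eta_\lambda-\eta_\theta)\,d = (\zeta_\theta-\zeta_\lambda) + (\xi_\theta-\xi_\lambda)\,\ell_c ,
\]
which is exactly (\ref{eq:l-surface}). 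Hence $w_d=\psi_{\lambda,\theta}(\ell_a,\ell_b,\ell_c)$.

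\emph{Which partial function applies.} The difference $g(d)=f_\lambda^{(\ell)}(d)-f_\theta^{(\ell)}(d)$ is affine with slope $\eta_\lambda-\eta_\theta$. It vanishes at $w_d$ and satisfies $g(\ell_d)>0$.
\begin{itemize}
\item If $w_d>\ell_d$, then $g$ is positive to the left of its root, so its slope is negative: $\eta_\lambda-\eta_\theta<0$. By (\ref{eq:l-surfaces}), $w_d=\psi^-_{\lambda,\theta}(\ell_a,\ell_b,\ell_c)$.
\item If $w_d<\ell_d$, the slope is positive: $\eta_\lambda-\eta_\theta>0$. Then $w_d=\psi^+_{\lambda,\theta}(\ell_a,\ell_b,\ell_c)$.
\end{itemize}
The case $w_d=\ell_d$ cannot occur, since $g(\ell_d)\ne0$.

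The only delicate step is the first one: the distance must equal the linear expression at $\ell^*$ itself, which the positive-orientation hypothesis guarantees. The lemma does not need this representation to hold along the whole of $\rho_\ell$, because $w$ is defined through the graphs $\gamma^{(\ell)}$ rather than through $f_\lambda$ and $f_\theta$. Once that is in place, the rest is the same sign argument as in \lemref{nearer-point}.
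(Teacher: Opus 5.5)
Your proof is correct and follows the same route the paper intends; the paper only says the argument is analogous to that of \lemref{nearer-point}. In both, distinct slopes $\eta_\lambda\ne\eta_\theta$ give a unique crossing at $\psi_{\lambda,\theta}$, and the sign of $(\eta_\lambda-\eta_\theta)(\ell_d-w_d)=f_\lambda(\ell^*)-f_\theta(\ell^*)>0$ decides between $\psi^-_{\lambda,\theta}$ and $\psi^+_{\lambda,\theta}$; your use of positive orientation to drop the absolute value at $\ell^*$ is exactly the point the paper leaves implicit.
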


Refer to \figref{lem71} for the analogous situation when the input objects are points.
With this preparation, we are now ready to describe the data structure.

\paragraph{Overall data structure.}
Unlike the distance function in \subsecref{Fast_NN1}, the distance function $f_\lambda$ is only locally linear in $c$ and $d$,
so the data structure is slightly more involved than the one in \subsecref{Fast_NN1}. In particular, we first construct a hierarchical 
stratification of $\Lspace$, and then, for each cell $\cell$ in the stratification, we 
construct a data structure analogous to that in \subsecref{Fast_NN1} for a subset of lines whose 
critical surfaces do not intersect $\cell$.

We build the following three-level data structure. 
Let $\Phi=\{\ph_{\lambda,1}, \ph_{\lambda,2}\mid\lambda \in L\}$.
At the top level, we construct a partition tree $\T^{(1)}$ 
on $\Phi$, induced by the polynomial-partitioning based hierarchical decomposition of $\Lspace$
using the algorithm in~\cite{AAEZ}. 
Each node $v$ of $\T^{(1)}$ is associated with some region $\cell_v$ of $\Lspace$ and with a 
subset $L_v \subseteq L$ consisting of lines $\lambda$ such that
at least one of $\ph_{\lambda,1},\ph_{\lambda,2}$
crosses $\cell_v$. 
	The cells $\cell_v$ induce the desired hierarchical stratification of $\Lspace$.
For the subset $L_v^*$ of lines whose surfaces do not intersect $\cell_v$, we can compute 
$\NN(\ell, L_v^*)$, for any $\ell^*\in\cell_v$,
using the data structure described in \subsecref{Fast_NN1}. Roughly speaking, for a query line $\ell$, $\T^{(1)}$ provides a partition of $L$ into $O(\log n)$ canonical subsets 
so that $\ell$ is positively (or negatively) oriented with respect to all lines within each
canonical subset. A technical difficulty we have to overcome is that $\ell^*$ may lie on some of the critical 
surfaces in $\Phi$ and they may contain $\cell_v$, 
so we have to proceed more carefully,
as described below.

For a node $v\in\T^{(1)}$, if $\cell_v\subset\ph_{\lambda,2}$ for some $\lambda_v\in L_{p(v)}$ 
($\dist(\ell,\lambda)=0$ for any $\ell$ with $\ell^*\in\cell_v$), where $p(v)$ is the parent of $v$, we store $\lambda_v$ at $v$ (if there are multiple such lines, we store only one of them), make $v$ a leaf, 
and discard the subtree of $\T^{(1)}$ rooted at $v$. So assume that $\cell_v$ is not contained in any such surface.

\begin{figure}[htb]
  \begin{center}
    \includegraphics[scale=0.6]{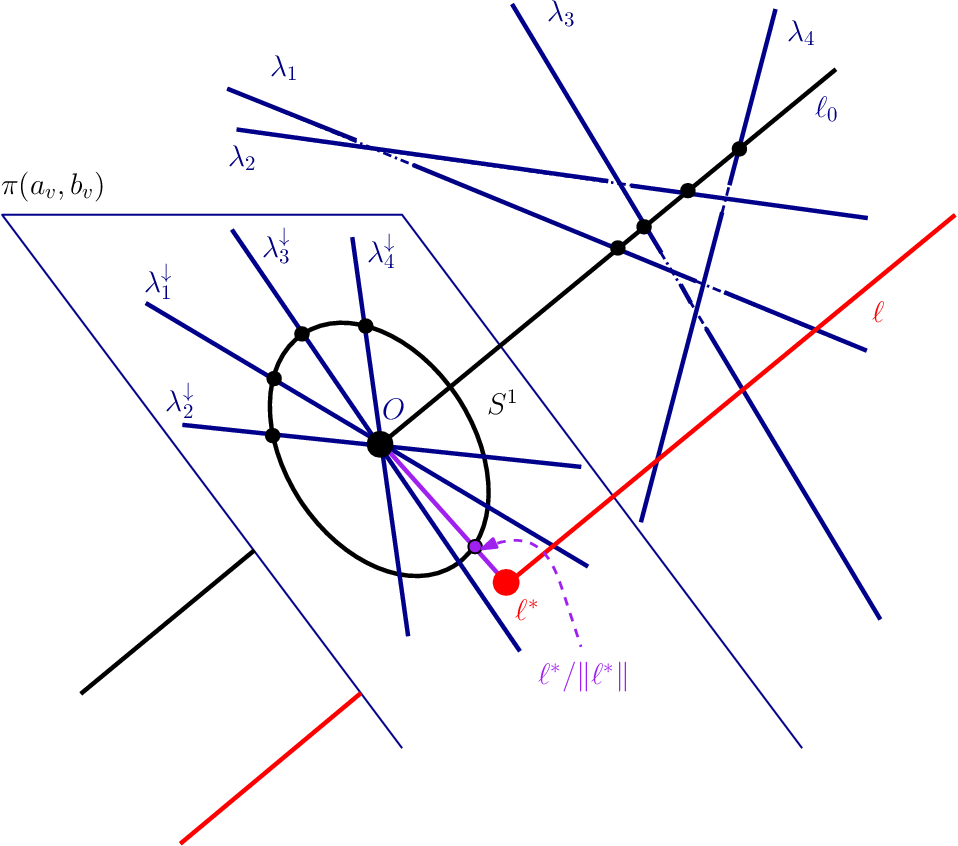}
    \end{center}
  \caption{All lines in $L_v^\circ=\{\ell_1, \ell_2,\ell_3,\ell_4\}$ are incident on the line $\ell_0^*=(a_v,b_v,0,0)$;
    Their projections on the plane $\pi(a_v,b_v)$ pass through the origin. The subdivision $\Gamma_v$ of $\sphere^1$, and a query line $\ell$ with $\ell^*=(a_v,b_v,c,d)$. The NN of $\ell$ is $\lambda_1$ or $\lambda_3$.
  }
  \label{fig:concurrent_lines}
\end{figure}

Let $L_v^\circ = \{ \lambda \in L_{p(v)} \setminus L_v \mid \cell_v \subset \ph_{\lambda,1}\}$ be the set of lines $\lambda$
such that $\lambda^\downarrow(a,b)$ passes through the origin for any $\ell^*=(a,b,c,d)\in\cell_v$
(a property equivalent to the definition of $\varphi_{1,\lambda}$). We store $L_v^\circ$ at $v$. 
If $|L_v^\circ|>1$ then,
as mentioned above, $\cell_v$ lies in a $2$-flat $\pi_v := (a_v,b_v)\times\reals^2$ for some fixed $(a_v,b_v)\in\reals^2$,
and thus $\lambda^\downarrow(a,b)$ is fixed for all $(a,b,c,d)\in\cell_v$. 
If $|L_v^\circ|>1$, let $\Gamma_v$ be the subdivision of the unit circle in $\pi_v$ (centered at the origin) induced by the set 
$L_v^\downarrow := \{ \lambda^\downarrow(a_v,b_v) \mid \lambda\in L_v^\circ\}$ of lines.  For a point 
$q\in\pi_v$, the nearest neighbor of $q$ in $L_v^\downarrow$ 
is one of the two lines that delimit the circular interval of $\Gamma_v$ containing the unit vector $q/\|q\|$. 
See \figref{concurrent_lines}.
For a query line $\ell$ with $\ell^*\in\cell_v$, we can thus compute $\NN(\ell,L_v^\circ)$, in $O(\log n)$ time, using $\Gamma_v$.

Since $\tau_v$ is not contained in 
$\ph_{\lambda,2}$ for any line $\lambda$ in $L_{p(v)}$, by the above discussion, either all lines $\ell$ 
with $\ell^*\in\cell_v$ are positively oriented with respect to all lines 
$\lambda\in L_{p(v)}\setminus (L_v\cup L_v^\circ)$ or all of them are negatively oriented.
Therefore, we associate two canonical 
subsets $L_v^+, L_v^- \subseteq L_{p(v)}\setminus (L_v \cup L_v^\circ)$ with $v$, where
$L_v^+$ (resp., $L_v^-$) is the set of lines $\lambda$ for which 
the lines $\ell$ with $\ell^*\in\tau_v$ are positively (resp., negatively) 
oriented with respect to $\lambda$. 
We construct two second-level tree data 
structures $\T^{(2+)}_v, \T^{(2-)}_v$ on the canonical subset $L_v^+$ and $L_v^-$, respectively, to compute
$\NN(\ell, L_v^+)$ and $\NN(\ell,L_v^-)$ for any $\ell\in\cell_v$.
We describe $\T^{(2+)}_v$; the 
other is constructed in a fully symmetric manner. $\T^{(2+)}_v$ is analogous to the two-level data 
structure described in \subsecref{Fast_NN1}.

Let $w$ be a node of $\T^{(2+)}_v$, and let $L_w^+\subseteq L_v^+$ be the subset of lines associated 
with $w$. If $|L_w^+| \le n_0$, for some sufficiently large constant $n_0$, then $w$ is a leaf and we store 
$L_w^+$ at $w$.  Otherwise, we fix a sufficiently large constant parameter $r>0$ and choose a random subset
$R_w \subseteq L_w^+$ of $\alpha r\log r$ lines, for a suitable constant $\alpha>0$, independent of $r$.
For every line $\lambda\in R_w$, we construct $\M_\lambda^-, \M_\lambda^+$, as defined above, with $\Xspace=\cell_v$ and $H=R_w$.
Let $\Xi_\lambda^-$ (resp., $\Xi_\lambda^+$) be the vertical decomposition of $\M_\lambda^-$ 
(resp., $\M_\lambda^+$). Let $\Xi_w$ be the set of cells in all these $2|R_w|$ vertical 
decompositions. By \lemref{union}, $|\Xi_w|= O^*(r\cdot r^3) = O^*(r^4)$ and
$\Xi_w$ can be constructed in overall $O^*(r^4)$ expected time.\footnote{%
  The cells of $\Xi_w$ may extend beyond $\cell_v$, but the subtree $\T^{(2+)}_v$ will only be queried with 
  lines $\ell$ such that $\ell^*\in\cell_v$, thereby ensuring that $\ell$ is positively oriented with respect to 
  all lines of $L_w^+$, so we do not have to clip the cells of $\Xi_w$ within $\cell_v$.}

For each line $\lambda\in R_w$ and each cell $\cell$ of $\Xi_\lambda^-$ (resp., $\Xi_\lambda^+$), we define 
$L_\cell^+ \subset L_w^+$ to be the subset of points $\theta\in L_w^+$ for which the surface 
$d = \psi_{\lambda,\theta}^+(a,b,c)$ (resp., $d= \psi_{\lambda,\theta}^-(a,b,c)$) crosses $\cell$. 
With a suitable choice of $\alpha$, the size of each conflict list is at most $n/r$, with high probability. 
For each cell $\cell\in\Xi_w$, we create a child $z = z_\tau$ of $w$, and 
recursively build at $z$ the data structure on $L_\cell^+$. 
In addition, we also store a third-level data structure at $w$:
For each line $\lambda \in R_w$, we construct the two-dimensional arrangement $\A_\lambda$ of the set of curves 
$\B_\lambda = \{\beta_{\lambda,\theta} \mid \theta \in L_w^+\setminus R_w\}$,
and preprocess it for point-location queries. 
For each edge $e\in\A_\lambda$, if $e$ is a portion of the curve 
$\beta_{\lambda,\theta}$, we set $L_e=\{\theta\}$. For a vertex 
$\chi=(\chi_a,\chi_b)$ of $\A_\lambda$,
let $L_\chi \subseteq L_w^+\setminus R_w$ be the subset of lines whose curves are incident 
on $\chi$. By definition, all lines in $L_\chi$ are parallel.
We store the lines $\theta$ of $L_\chi$ in a 
sorted list ordered by $\zeta_\theta(\chi_a,\chi_b)$
so that, for a query line $\ell$ of the form 
$\ell^*=(\chi_a,\chi_b,\ell_c,\ell_d)$, 
the line in $L_\chi$ nearest to $\ell$ can be computed in $O(\log n)$ time. 

Following the same argument as in~\subsecref{Fast_NN1},
the size of the  overall structure
is $O^*(n^4)$, and that it can be constructed in $O^*(n^4)$ expected time.

\paragraph{Query procedure.}
Let $\ell$, with $\ell^*=(a,b,c,d)$, be a query line. Recall that $\ell^\downarrow = (c,d)$ is the projection of $\ell$ onto the plane $\pi_\ell$.
	The query procedure recursively traces a root-to-leaf path $\Pi$ in $\T^{(1)}$ such that 
$\ell^* \in \cell_v$ for all nodes $v\in\Pi$, computes $\NN(\ell,L_v)$ at each node $v\in\Pi$, and returns
$\NN(\ell,L_\mathrm{root})$. Each node $v$ visited by the query procedure is processed as follows:
First, suppose $v$ is the leaf of $\Pi$. If $\cell_v\subset \ph_{\lambda,2}$ of some line $\ell\in L$, i.e., 
$\ell$ touches $\lambda$, then one such line $\lambda_v$ is stored at $v$. We return $\lambda_v$.
Otherwise, we compute $\NN(\ell,L_v)$ in a brute-force manner. 

Next, suppose $v$ is a non-leaf node, and let $w$ be the child of $v$ such that $\ell^*\in\cell_w$.  
Note that $L_v = L_w^+\cup L_w^-\cup L_w^\circ\cup L_w$. We compute the nearest neighbor of $\ell$ for each of the 
four subsets and return the one that is nearest to $\ell$.
We recursively compute $\NN(\ell, L_w)$, and compute the other three using the secondary structures stored at $w$.
If $|L_w^\circ|\le 1$, then computing $\NN(\ell,L_w^\circ)$ is straightforward, so assume that
$|L_w^\circ|>1$. We query the subdivision $\Gamma_w\subset\Sspace^1$ with the unit vector 
$\ell^\downarrow/\|\ell^\downarrow\|$, as described above, and compute $\NN(\ell,L_w^\circ)$ in $O(\log n)$ time.
Finally, since $\ell\in\cell_v$, $\ell$ is positively (resp., negatively) 
oriented with respect to all lines in $L_w^+$ (resp., $L_w^-$), so we query
$\T^{(2+)}_v$ (resp., $\T^{(2-)}_v$) with $\ell^*$, as in \subsecref{Fast_NN1}, and compute 
$\NN(\ell,L_w^+)$ (resp., $\NN(\ell,L_w^-)$) in $O^*(1)$ time.\footnote{
This step also requires searching in the arrangements $\A_\lambda$ of 
curves stored at the nodes of $\T^{(2+)}_w$ (resp., $\T^{(2-)}_w$) 
visited by the query procedure, as in \subsecref{Fast_NN1}.}
This completes the description of the query procedure.

Following the same analysis in \subsecref{Fast_NN1}, the overall query time remains $O^*(1)$.
Finally, combining this data structure with the linear-size data structure mentioned in the beginning 
of the section, and using Theorem A.2 in~\cite{AAEKS}, we obtain the following main result of this section:
\begin{theorem} \label{thm:n4lines}
Let $L$ be a set of $n$ lines in $\reals^3$ and $s\in[n,n^4]$ a storage parameter. $L$ can be preprocessed,
in $O^*(s)$ expected time, into a data structure of size $O^*(s)$, so that, for any query line 
$\ell\in\reals^3$, $\NN(\ell,L)$ can be computed in $O^*(n/s^{1/4})$ time. 
\end{theorem}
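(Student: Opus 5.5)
The plan is to combine two extreme-regime structures for $\NN(\ell,L)$ using the standard space/query-time trade-off (Theorem A.2 of~\cite{AAEKS}).

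\emph{Linear-size end.} A line is a point in the 4-dimensional space $\Lspace$. The decision query ``is some $\lambda\in L$ within distance $\rho$ of $\ell$?'' is a semi-algebraic range-emptiness query on $L^*$ with constant-complexity ranges. By \lemref{Pi-tradeoff} with $\tO=\tQ=4$ and $s=n$, such a query takes $O^*(n^{3/4})$ time using $O(n)$ space. Parametric search then turns this into an NN query with the same bound.

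\emph{Large-storage end.} This is the three-level structure just described, which has $O^*(n^4)$ size and $O^*(1)$ query time. I would verify it in two steps.

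\begin{enumerate}[(i)]
\item \emph{Correctness.} At each node $v$ on the search path we have $L_v=L_w\cup L_w^\circ\cup L_w^+\cup L_w^-$.
\begin{itemize}
\item For $L_w^\circ$, the circular subdivision $\Gamma_w$ returns the nearest line, because inside the $2$-flat $\pi_w$ all projections $\lambda^\downarrow$ pass through the origin.
\item For $L_w^+$, the query follows \subsecref{Fast_NN1}. If $\theta$ is closer than the sample-nearest $\lambda\in R_w$ and $(\ell_a,\ell_b)\notin\beta_{\lambda,\theta}$, then \lemref{nearer-line} places $\theta$ in the conflict list of $\cell^+$ or $\cell^-$. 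Otherwise $(\ell_a,\ell_b)$ lies on $\beta_{\lambda,\theta}$, and $\theta$ is found through the arrangement $\A_\lambda$. There the distances to the parallel lines of $L_\chi$ are $|\zeta_\theta+\mathrm{const}|$, so a binary search in the sorted order locates the minimum.
\item For $L_w^-$, the argument is symmetric.
\end{itemize}

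\item \emph{Size.} The key fact is that each critical surface $\ph_{\lambda,i}$ is a constant-complexity 3-dimensional surface in $\Lspace$.
\begin{itemize}
\item The polynomial-partitioning tree $\T^{(1)}$ of~\cite{AAEZ} has near-linear total conflict size summed over each level. The branching is chosen so that the second-level cost $O^*(|L_v|^4)$ at node $v$ sums to $O^*(n^4)$.
\item The second-level recurrence $S(m)=O^*(r^4)S(m/r)+O(m^2)$ solves to $O^*(m^4)$, since $|\Xi_w|=O^*(r^4)$ by \lemref{union}(ii) and $r$ is a large constant.
\end{itemize}
Hence the overall size is $O^*(n^4)$. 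The query visits $O(\log n)$ nodes of $\T^{(1)}$, each spending $O^*(1)$ time, so the query time is $O^*(1)$.
\end{enumerate}

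\emph{Interpolation.} For $s\in[n,n^4]$, build the fast structure over a partition of $L^*$ into groups, using a $(1/t)$-cutting or partition tree in $\Lspace$ with $t^4\approx s/n$ (roughly). A query descends the linear-size partition tree in $\Lspace$ until subproblems have size about $n/t$, then switches to the $O^*(1)$-query structure. The generic theorem gives query time
\[
O^*\bigl((n/s^{1/4})^{(1-1/4)/(1-1/4)}\bigr)=O^*(n/s^{1/4}),
\]
matching both ends: $O^*(n^{3/4})$ at $s=n$ and $O^*(1)$ at $s=n^4$.

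\emph{Main obstacle.} The delicate step is the storage bound for $\T^{(1)}$ combined with the degenerate cells that lie inside critical surfaces. One must check that low-dimensional cells $\cell_v$ contained in some $\ph_{\lambda,1}$ do not blow up the conflict sizes. This holds because two such surfaces meet in a $2$-flat $(a_0,b_0)\times\reals^2$, so all the lines of $L_v^\circ$ are handled together by a single one-dimensional structure $\Gamma_v$ of linear size.
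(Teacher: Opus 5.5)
Your proposal is correct and follows the paper's route: the same two extreme structures (the linear-size $O^*(n^{3/4})$ range-searching structure, and the three-level $O^*(n^4)$ structure built from $\T^{(1)}$, the $\psi^\pm$ sandwich decompositions and the $\beta$-curve arrangements), combined via Theorem A.2 of~\cite{AAEKS}. The only slip is your interpolation parameter: if you stop the partition tree on $L^*$ at groups of size $n/t$, the storage is $t\,(n/t)^4$, so you need $t\approx(n^4/s)^{1/3}$ rather than $t^4\approx s/n$; the query then visits $O^*(t^{3/4})=O^*(n/s^{1/4})$ groups, which is the bound you state.
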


\section{Nearest-Neighbor Queries with Segments or Triangles in $\reals^3$}
\label{sec:Triangle-NN}

In this section, we extend the data structures of Sections~\ref{sec:Fast_NN} and~\ref{sec:plnn} to the general case
where the input is a set $\DDelta$ of $n$ triangles in $\reals^3$ and the query object is also a 
triangle $\tau$ in $\reals^3$, and the goal is to compute $\NN(\tau,\DDelta)$. We begin with a simpler case in which the
input is a set of segments in $\reals^3$ and the query object is also a segment in $\reals^3$.
This will be one of the main ingredients for handling the general case, and will turn out to be the most
expensive (and the most involved) part of the full algorithm for the case of triangles.

\subsection{The case of segments}
Let $\E=\{e_1, \ldots, e_n\}$ be a set of $n$ segments in $\reals^3$, and let $P$ be the set of their
endpoints. We wish to preprocess $\E$ into a data structure so that for a query 
segment $g$, $\NN(g,\E)$ can be computed quickly. Our data structure is based on the 
following simple, almost tautological, observation. 

\begin{lemma} \label{lem:seg-CP}
Let $e$ and $g$ be two segments in $\reals^3$. Let $(p,q) \in e \times g$ be the closest pair of points
on these segments. Then either (i) $p$ is an endpoint of $e$, or (ii) $q$ is an endpoint of $g$, or 
(iii) $p$ and $q$ lie in the relative interiors of $e$ and $g$, respectively, in which case they are 
also the closest pair of points in $\ell_e$ and $\ell_g$, the lines containing these two segments. 
\end{lemma}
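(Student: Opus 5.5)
The plan is to use the first-order optimality conditions of the convex function $F(s,t)=\|e(s)-g(t)\|^2$. Parametrize $e(s)=a+s(b-a)$ for $s\in[0,1]$ and $g(t)=c+t(d-c)$ for $t\in[0,1]$, where $a,b$ and $c,d$ are the endpoints of $e$ and $g$. The function $F$ is a convex quadratic in $(s,t)$, because $e(s)-g(t)$ is affine in $(s,t)$ and $\|\cdot\|^2$ is convex. The closest pair $(p,q)=(e(s^*),g(t^*))$ is therefore a minimizer of $F$ over the square $[0,1]^2$.

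The case analysis is then immediate. If $s^*\in\{0,1\}$, then $p$ is an endpoint of $e$, which is case (i). If $t^*\in\{0,1\}$, then $q$ is an endpoint of $g$, which is case (ii). Otherwise $(s^*,t^*)$ lies in the open square $(0,1)^2$. Since it minimizes $F$ over the square and is an interior point, it is a local minimizer of $F$ on all of $\reals^2$, and so $\nabla F(s^*,t^*)=0$.

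Next I use the convexity of $F$ on $\reals^2$: any critical point of a convex differentiable function is a global minimizer. So $(s^*,t^*)$ minimizes $\|e(s)-g(t)\|$ over all $(s,t)\in\reals^2$. This is exactly the statement that $(p,q)$ is a closest pair of points in $\ell_e\times\ell_g$, and $p,q$ lie in the relative interiors of $e,g$ because $s^*,t^*\in(0,1)$. That gives case (iii).

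The only subtle point is the degenerate situation where $\ell_e$ and $\ell_g$ are parallel, so that $F$ is not strictly convex. In that case the closest pair of the lines is not unique. The argument still shows that $(p,q)$ is \emph{a} closest pair of the lines, so $\dist(e,g)=\dist(\ell_e,\ell_g)$, which is all that the subsequent use of the lemma needs. I would add a remark clarifying that "the closest pair" in (iii) is to be read in this sense. Degenerate segments (single points) can also be absorbed into (i) or (ii).
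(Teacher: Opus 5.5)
Your proof is correct and is the natural formalization of what the paper takes for granted. The paper gives no proof of this lemma (it calls it ``almost tautological''), and your argument---that an interior minimizer of the convex quadratic $F$ over $[0,1]^2$ is a critical point and hence a global minimizer over $\reals^2$---is the same first-order reasoning the paper uses right afterwards when it sets the $\lambda$- and $\mu$-derivatives to zero. Your remark on parallel supporting lines is also consistent with the paper, which likewise sets that case aside and sends it to the endpoint structures.
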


Using \lemref{seg-CP}, we construct three separate data structures:
\begin{description}
\item[$\Psi_1$:] For a parameter $s\in[n,n^3]$, 
	we preprocess $\E$ into a data structure of size $O^*(s)$, in $O^*(s)$ expected time,
        using \thmref{NNPT}, so that, for a query point $p \in \reals^3$, $\NN(p,\E)$ can be computed 
        in $O^*(n/s^{1/3})$ expected time.
\item[$\Psi_2$:] For a parameter $s\in [n,n^4]$, 
        we preprocess $P$ into a data structure of size $O^*(s)$, in $O^*(s)$ expected time, 
        using \thmref{tr-pt}, so that, for a query segment $g$ in $\reals^3$, $\NN(g,P)$ can be 
		computed in $O^*((n/s^{1/4})^{2/3})$ expected time. 
\item[$\Psi_3$:] For a parameter $s\in[n,n^5]$, 
        we preprocess $\E$ into a data structure of size $O^*(s)$, in $O^*(s)$ expected time, 
        as described below (see \thmref{seg2segs}),
	so that, for a query segment $g$ in $\reals^3$, the structure first identifies the subset $\E_g \subseteq \E$ 
        of segments $e$ such that the closest pair of points in $e\times g$ lie in the relative 
        interiors of $e$ and $g$,
        and then it computes $\NN(\ell_g, L_g)$, where $L_g =\{ \ell_e \mid e \in \E_g\}$. 
		The expected query time is $O^*((n/s^{1/5})^{15/16}+(n/s^{1/4})^{16/15})$.
\end{description}

Since $\Psi_1$ and $\Psi_2$ can be obtained using techniques presented earlier in the paper,
we focus on $\Psi_3$.
We first derive the condition for the closest pair of points on two segments to lie in their 
respective relative interiors, and then, based on this condition, describe the data structure $\Psi_3$. 

\paragraph{Closest pair.}
Let $e$ and $g$ be two segments in $\reals^3$, and let $\ell_e$ (resp., $\ell_g$) be the line containing 
$e$ (resp., $g$). We parameterize a line $\ell$ in $\reals^3$ by the pair $(a,b)$, so that 
$\ell = \{ a+b\lambda \mid \lambda\in\reals\}$, where $a = \ell\cap \{z=0\}$ and $b$ is a unit vector 
in the direction of $\ell$ (this is a variant of the representation that we have used earlier in the 
paper, and it ignores the case where $\ell$ is horizontal, a case that can indeed be ignored
using a generic coordinate frame, or treating it separately, using a simpler structure). 
Thus each of $a$, $b$ has two degrees of freedom. 
Let $(a_e, b_e)$ (resp., $(a_g, b_g)$) be the parameters describing $\ell_e$ (resp., $\ell_g$). 
Let $\alpha_e^-, \alpha_e^+ \in \reals$ be the two parameters specifying the endpoints of $e$, 
namely $e = \{ a_e+b_e\lambda \mid \lambda\in[\alpha_e^-,\alpha_e^+] \}$. 
Similarly, let $\alpha_g^-,\alpha_g^+$ specify the endpoints of $g$, so
$g = \{ a_g + b_g\mu \mid \mu\in[\alpha_g^-,\alpha_g^+] \}$.
Thus a segment $e$ can be represented as a point $e^*=(a_e,b_e,\alpha_e^-,\alpha_e^+)$ in $\reals^6$.
We thus seek parameters $\lambda$, $\mu\in\reals$ that minimize 
\begin{align*}
	\dist^2(\ell_e,\ell_g) &= | a_e + \lambda b_e - a_g - \mu b_g |^2 \\
	&= |a_e-a_g|^2 + \lambda^2 + \mu^2 -2\lambda\mu (b_g\cdot b_e)+ 2\lambda(a_e-a_g)\cdot b_e 
- 2\mu(a_e-a_g)\cdot b_g ,
\end{align*}
and then require that $\alpha_e^-\le \lambda\le \alpha_e^+$ and $\alpha_g^-\le \mu\le \alpha_g^+$.
The desired values $\lambda$, $\mu$ then satisfy the equations 
(of the $\lambda$- and $\mu$-derivatives of the right-hand side of the above expression being $0$)
\begin{align*}
2\lambda - 2(b_g\cdot b_e)\mu + 2(a_e-a_g)\cdot b_e & = 0 \\
 2\mu - 2(b_g\cdot b_e)\lambda - 2(a_e-a_g)\cdot b_g & = 0 , \qquad\text{or}
\end{align*}
\begin{gather*} 
\lambda = \frac{(a_e-a_g)\cdot \left[ (b_g\cdot b_e)b_{g} - b_e \right]} {1 - (b_g\cdot b_e)^2} \\
\mu = \frac{(a_e-a_g)\cdot \left[ b_g - (b_g\cdot b_e)b_e \right]} {1 - (b_g\cdot b_e)^2} .
\end{gather*}

Then the condition for the closest pair to lie in the relative interiors of $e$ and $g$ is
\begin{gather*}
  \alpha_e^- < 
  \frac{(a_e-a_g)\cdot \left[ (b_g\cdot b_e)b_g - b_e \right]} {1 - (b_g\cdot b_e)^2} < 
  \alpha_e^+ \\
  \alpha_g^- < 
  \frac{(a_e-a_g)\cdot \left[ b_g - (b_g\cdot b_e)b_e \right]} {1 - (b_g\cdot b_e)^2} < 
  \alpha_g^+.
\end{gather*}

The denominator is always non-negative. 
It is $0$ when $\ell_e$ is parallel to $\ell_g$.
In such a case $\dist(\ell_g,\ell_e)$ is attained at any point on $\ell_g$ (and a corresponding point 
on $\ell_e$), and in particular at an endpoint of $g$. Therefore distances to these parallel lines 
will be considered when we handle the endpoints of $g$, using the structure $\Psi_1$.
We therefore only consider input segments for which the denominator is positive, and then
the above inequalities are the same as the conjunction of
\begin{gather} 
  \alpha_e^- \left(1 - (b_e\cdot b_g)^2\right) < 
  (a_e-a_g)\cdot \left[ (b_e\cdot b_g)b_g - b_e \right] \label{eq:in1}\\
  \alpha_e^+ \left(1 - (b_e\cdot b_g)^2\right) > 
(a_e-a_g)\cdot \left[ (b_e\cdot b_g)b_g - b_e \right] \label{eq:in2}\\
  \alpha_g^- \left(1 - (b_e\cdot b_g)^2\right) < 
  (a_{e}-a_g)\cdot \left[ b_g - (b_g\cdot b_e)b_e \right] \label{eq:in3}\\
\alpha_g^+ \left(1 - (b_e\cdot b_g)^2\right) > 
(a_{e}-a_g)\cdot \left[ b_g - (b_g\cdot b_e)b_e \right]
	. \label{eq:in4}
\end{gather}

As mentioned above, the parametric dimension of a segment in $\reals^3$ is $6$.
However, each inequality in \eqref{in1}--\eqref{in4} 
involves only at most $5$ of them (using only one of the parameters specifying the 
endpoints). Therefore, the \emph{reduced parametric dimension}
of both input and query segments in $\reals^3$ for our predicate is only $5$. 
Note that the reduced parametric dimensions of the input and the query objects are $5$ and $4$,
respectively, in (\ref{eq:in1}) and (\ref{eq:in2}), and they are
$4$ and $5$, respectively, in (\ref{eq:in3}) and (\ref{eq:in4}).

\paragraph{The data structure $\Psi_3$.}
We construct a five-level data structure. The first four levels of $\Psi_3$ are partition trees that,
for a query segment $g$, identify the subset $\E_g$ of segments $e$ for which \eqref{in1}--\eqref{in4} hold, 
namely, those for which the closest pair of points on $e$ and $g$ lie in their respective relative interiors. 
Each node of the first four levels 
is associated with a canonical subset of $\E$, and $\E_g$ is returned as the union of a few canonical subsets,
those associated with the nodes of the fourth-level partition tree that the query reaches.

The first level constructs a partition tree $\T^{(1)}$ on $\E$ using a $5$-dimensional semi-algebraic 
range-searching data structure for ranges whose parametric dimension is $4$
(see~\cite{AAEZ,MP}), such that for a query segment $g$, it filters out the 
subset $\E_g^{(1)}$ of $\E$, as the union of a few canonical subsets, associated with the nodes of
$\T^{(1)}$ that the query reaches, 
for which \eqref{in1} holds.
For each node $v$ of $\T^{(1)}$, we construct a second-level partition tree $\T_v^{(2)}$ on the associated
subset $\E_v$, again using a $5$-dimensional semi-algebraic range-searching data structure for ranges 
with parametric dimension $4$, that, for a query segment $g$, filters out the segments 
of $\E_v$ for which \eqref{in2} holds. Similarly, the next two levels 
of partition trees are constructed to filter out the input segments $e$ from the appropriate
canonical subset from the preceding level, for which \eqref{in3}, and then \eqref{in4}, holds.
We stress again that in (\ref{eq:in1}) and (\ref{eq:in2}) the input segments
have reduced parametric dimension $5$ and the query segments have dimension $4$, and in
(\ref{eq:in3}) and (\ref{eq:in4}) it is the other way around.

Finally, for each canonical subset $\E_w$ of the fourth-level partition tree, we preprocess $L_w$,
the set of lines containing the segments of $\E_w$, using \thmref{n4lines}, into a data structure
that, for a query line $\ell$, returns $\NN(\ell,L_w)$, with the
performance bounds stated in the theorem.

\paragraph{Query procedure.}
For a query segment $g$, we query the top-level partition tree 
$\T^{(1)}$ and compute the subset $\E_g^{(1)}$ of segments that satisfy \eqref{in1} with $g$, 
as the union of a family of canonical subsets associated with the set $\F^{(1)}$ of the
nodes in $\T^{(1)}$ that the query reaches.
For each node $v\in \F^{(1)}$, we query the second-level partition tree $\T_v^{(2)}$ to compute 
the subset $\E_{g,v}^{(2)}$ of segments that satisfy \eqref{in2} with $g$, 
again as the union of a small-size family of canonical subsets associated with a set $\F_v^{(2)}$ 
of nodes in $\T_v^{(2)}$. Note that $\bigcup_{v\in \F^{(1)}}\bigcup_{w\in\F_v^{(2)}} \E_w$ is the 
subset of segments of $\E$ that satisfy both \eqref{in1} and \eqref{in2}, i.e., their closest 
points to $g$ lie in their relative interiors. By continuing to query in the third- and fourth-level
partition trees, in a similar manner, we compute $\E_g$ as a family of canonical 
subsets associated with a suitable subset of the nodes of the fourth-level partition trees. 
For each such canonical subset $\E_w$, we query the fifth-level data structure, on the set
$L_w$ of the lines containing the segments of $\E_w$, compute 
$\NN(\ell_g,L_w)$, and return the nearest among the lines produced throughout the process. 

A more na\"ive analysis of the performance of the structure proceeds as follows.
Since the reduced parametric dimension of both query and input segments is (at most) $5$ at each 
of the first four levels of $\Psi_3$, and it is $4$ for the fifth level, 
Theorem A.2 in~\cite{AAEKS} implies that, for a storage parameter $s\in [n,n^5]$, 
we can construct $\Psi_3$ with size and expected preprocessing cost
$O^*(s)$, so that the overall query time is $O^*(n/s^{1/5})$.

However, a more careful analysis of the run time of the query procedure leads to a slightly 
improved bound on the query time. Specifically, for each $i \le 5$, let $t_o^i$ (resp., $t_q^i$) 
denote the reduced parametric dimension of the input (resp., query) segments of the polynomial 
inequality used in the $i$-th level of the data structure, and let $t_o = \max_i t_o^i$ and 
$t_q = \max_i t_q^i$, both of which are $5$. Again, proceeding na\"ively, the analysis 
in~\cite[Appendix]{AAEKS} shows that, for a storage parameter $s\in[n,n^{t_q}]$, the query time is 
$O^*((n/s^{1/t_q})^{\tfrac{1-1/t_o}{1-1/t_q}})$, 
which leads to the query time of $O^*(n/s^{1/5})$ as mentioned above. 
But a more refined solution of the recurrence for the query time, based on the technique
of~\cite{AAEKS} (see eq.~(A11) in Appendix A.2 of that paper) can be written in our context as 
\[ 
Q(n,s) = O^* \left ( \sum_{i=1}^5 \biggl ( \frac{n}{s^{1/t_q^i}}\biggr )^{\frac{1-1/t_o^i}{1-1/t_q^i}} \right ) ;
\]
which implies that the query time is $O^*\left( (n/s^{1/5})^{15/16}+(n/s^{1/4})^{16/15}\right)$. 
For $s\in[n,n^{31/19}]$, the second term dominates and thus the query time 
for these values of $s$ is $O^*((n/s^{1/4})^{16/15})$. 
For $s\in [n^{31/19},n^5]$, the first term dominates and thus the query time becomes 
$O^*((n/s^{1/5})^{15/16})$. (At the breakpoint value $s = n^{31/19}$, the query time is $O^*(n^{12/19})$.) 

If the input is a set of lines instead of segments, then the reduced parametric dimension of 
the input objects is $4$. Moreover, we do not have to construct the first two levels of the 
partition trees, i.e., $\Psi_3$ is only a three-level data structure.
In this case, the query time improves to $O^*((n/s^{1/5})^{15/16})$ for $s \in [n,n^5]$. 
(The improvement is for the range $s\in [n,n^{31/19}]$.)
Analogously, if the input is a set of segments but the query objects are lines, then 
the reduced parametric dimension of query objects reduces to $4$ and we do not have to 
construct the third and fourth levels of partition trees in $\Psi_3$. Again, by 
Theorem~A.2 in~\cite{AAEKS}, the query time becomes 
$O^*((n/s^{1/4})^{16/15})$ for $s \in [n,n^4]$. 
(Here the improvement is for the range $s\in [n^{31/19},n^4]$; note the reduced range of $s$.)
We thus obtain the following:

\begin{theorem} \label{thm:seg2segs}
(i) Let $\E$ be a set of $n$ segments in $\reals^3$, and let $s\in[n, n^5]$ be a storage parameter.
$\E$ can be preprocessed, in randomized expected time $O^*(s)$, into a data structure
of size $O^*(s)$, so that for a query segment $g$, $\NN(g,\E)$ can be computed 
in $O^*( (n/s^{1/5})^{15/16}+(n/s^{1/4})^{16/15})$ time.  
If $\E$ is a set of $n$ lines,  then the query time can be improved to 
$O^*((n/s^{1/5})^{15/16})$.

\smallskip\noindent
(ii) If the query objects are lines but the input is a set $\E$ of segments, then, for a storage 
parameter $s\in [n,n^4]$, $\E$ can be preprocessed, in randomized expected time $O^*(s)$, into 
a data structure of size $O^*(s)$, so that for a query line $\ell$, $\NN(\ell,\E)$ can be computed 
in $O^*((n/s^{1/4})^{16/15})$ time.
\end{theorem}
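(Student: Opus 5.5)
Theorem~\ref{thm:seg2segs} follows by assembling the three structures $\Psi_1,\Psi_2,\Psi_3$ and checking that each respects the claimed trade-off. By Lemma~\ref{lem:seg-CP}, for a query segment $g$ the nearest segment $e$ is realized by a closest pair $(p,q)$ of one of three types. If $p$ is an endpoint of $e$, then $\dist(g,e)=\dist(g,P)$ for the corresponding endpoint; this case is handled by querying $\Psi_2$ with $g$ (Theorem~\ref{thm:tr-pt-tradeoff}). If $q$ is an endpoint of $g$, we query $\Psi_1$ with the two endpoints of $g$ (Theorem~\ref{thm:NNPT}). Segments parallel to $g$ also fall into the second case. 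Otherwise $e\in\E_g$ and $\dist(g,e)=\dist(\ell_g,\ell_e)$, which is what $\Psi_3$ computes. Each candidate returned is a genuine upper bound on the true distance, so the minimum of the three answers is $\NN(g,\E)$.

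It remains to bound the costs. For the storage, we use the same parameter $s$ in all three structures, clamped to $n^3$ in $\Psi_1$ and to $n^4$ in $\Psi_2$. The query costs $O^*(n/s^{1/3})$ and $O^*((n/s^{1/4})^{2/3})$ are then dominated by $(n/s^{1/4})^{16/15}$ for $s\le n^4$. For larger $s$, the terms at $s=n^4$ are $O^*(1)$, which is also dominated. The routine check is that $(n/s^{1/3})\le (n/s^{1/4})^{16/15}$ for $s\le n^3$; the exponent comparison holds.

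For $\Psi_3$, the four filtering levels encode \eqref{in1}--\eqref{in4}, with reduced dimensions $(t_o,t_q)=(5,4)$ for the first two and $(4,5)$ for the last two. The fifth level is Theorem~\ref{thm:n4lines}, with pair $(4,4)$ and cost $n/s^{1/4}$. Plugging these into the multi-level summation formula (A11) of~\cite{AAEKS} gives one term per level:
\begin{itemize}
\item $(5,4)$ gives exponent $\frac{4/5}{3/4}=16/15$, i.e., $(n/s^{1/4})^{16/15}$;
\item $(4,5)$ gives $\frac{3/4}{4/5}=15/16$, i.e., $(n/s^{1/5})^{15/16}$;
\item $(4,4)$ gives $n/s^{1/4}$, which is dominated by the first term.
\end{itemize}

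For the line variants, the relevant levels drop out. For input lines, \eqref{in1} and \eqref{in2} are vacuous, so only the $(4,5)$ term survives. For query lines, \eqref{in3} and \eqref{in4} vanish, $s$ is restricted to $[n,n^4]$, and only the $(5,4)$ term survives; $\Psi_1$ and $\Psi_2$ are replaced by their line-query counterparts, or by $\Psi_1$ alone when input endpoints matter.

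The main obstacle I expect is justifying that the refined per-level sum (A11) applies when levels have different $(t_o,t_q)$ pairs and different admissible ranges of $s$. In particular, levels with $t_q=4$ cannot use storage beyond $n^4$. My first attempt would be to allocate storage $\min(s,n^{t_q^i})$ at level $i$ and argue that the excess is absorbed by the $O^*$ notation, since at level $i$ the query cost is already $O^*(1)$ once its storage cap is reached.
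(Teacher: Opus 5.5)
Your proposal follows the paper's proof essentially step for step: the same three structures justified by Lemma~\ref{lem:seg-CP}, the same four filtering levels with reduced dimensions $(5,4),(5,4),(4,5),(4,5)$ followed by the Theorem~\ref{thm:n4lines} level, and the same per-level sum from eq.~(A11) of~\cite{AAEKS}; the storage-cap subtlety you flag is one the paper also leaves to that reference. One small slip: in the line variants you have the endpoint roles reversed. For input lines only the query's endpoints matter, so you need $\Psi_1$ via Theorem~\ref{thm:NNPL}; for query lines only the input endpoints matter, so you need $\Psi_2$ via Theorem~\ref{thm:n4}. Both costs are dominated by the claimed bounds.
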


\subsection{The case of triangles}

Next, we consider the general case where the input is a set $\DDelta = \{\triangle_1, \ldots, \triangle_n\}$ 
of $n$ triangles in $\reals^3$, and the query objects are also triangles in $\reals^3$. Let $\E$ and 
$P$ be the set of edges and vertices, respectively, of the triangles in $\DDelta$. 

We use the following lemma, whose proof is straightforward.
\begin{lemma} \label{lem:tri-NN-cond}
Let $\triangle, \triangle'$ be two triangles in $\reals^3$. The closest pair of points in $\triangle\times\triangle'$ 
satisfy the following properties:
\begin{itemize}
\item[(i)] If $\triangle$ and $\triangle'$ intersect, the closest pair is $(p,p)$, where $p$ is an intersection 
point. Furthermore, in this case, either an edge of $\triangle$ intersects $\triangle'$, or 
an edge of $\triangle'$ intersects $\triangle$.

\item[(ii)] If $\triangle$ and $\triangle'$ do not intersect, let $(p,p') \in \triangle\times\triangle'$ be the 
closest pair of points. Then either (a) $p$ is a vertex of $\triangle$, 
or (b) $p'$ is a vertex of $\triangle'$, or (c) each of $p$ and $p'$ lies on the relative interior 
of respective edges of $\triangle$ and $\triangle'$. In cases (a) and (b) the other point $p$ or $p'$
can lie anywhere on its triangle.
\end{itemize}
\end{lemma}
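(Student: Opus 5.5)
The plan is to handle the two parts of \lemref{tri-NN-cond} separately. Each part reduces to an elementary argument about how a convex polygon meets a plane, or to the optimality conditions for minimizing distance over a product of two convex sets.

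\emph{Part (i).} Suppose $\triangle\cap\triangle'\ne\emptyset$. Then $\dist(\triangle,\triangle')=0$, and any common point $p$ yields the closest pair $(p,p)$. For the second claim we argue by contradiction: assume that no edge of $\triangle$ meets $\triangle'$ and no edge of $\triangle'$ meets $\triangle$.

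First consider the case where the planes $\pi,\pi'$ of the two triangles are not parallel. The set $\triangle\cap\pi'$ is a segment $s$ (possibly a single point) on the line $\lambda=\pi\cap\pi'$. Its endpoints lie on $\bd\triangle$, since $\triangle$ is convex and $s$ is a maximal chord of $\triangle$ along $\lambda$. Similarly, $\triangle'\cap\pi$ is a segment $s'\subset\lambda$ whose endpoints lie on $\bd\triangle'$.

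Now observe that $\triangle\cap\triangle'=s\cap s'$, and this set is nonempty. Two intervals on a line that intersect must satisfy one of two conditions: an endpoint of $s$ lies in $s'$, or an endpoint of $s'$ lies in $s$. In the first case, an edge of $\triangle$ meets $\triangle'$. In the second case, an edge of $\triangle'$ meets $\triangle$. Either way we reach a contradiction.

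Next consider the coplanar case. Two convex polygons in a common plane that intersect either have crossing boundaries, or one polygon contains the other. In both situations some edge meets the other triangle. (If the planes are parallel but distinct, the triangles are disjoint and this part does not apply.)

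\emph{Part (ii).} Suppose the triangles are disjoint, and let $(p,p')$ be the closest pair. We perform a case analysis on the faces that contain $p$ and $p'$ in their relative interiors. If $p$ is a vertex, we are in case (a). If $p'$ is a vertex, we are in case (b). Otherwise each of $p$ and $p'$ lies in the relative interior of either an edge or the whole triangle, and it remains to rule out that either of them is interior to its triangle.

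Suppose $p$ lies in the relative interior of $\triangle$. By first-order optimality, the vector $p'-p$ is orthogonal to $\pi$, because moving $p$ slightly within $\pi$ cannot decrease the distance. Then $p$ is the orthogonal projection of $p'$ onto $\pi$. The distance $\|p-p'\|$ equals the distance from $p'$ to $\pi$.

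Consider the function $\dist(\cdot,\pi)$, the distance to the plane $\pi$, restricted to $\triangle'$. This function is affine on each side of $\pi$, and $\triangle'$ does not cross $\pi$, for otherwise the intersection segment $\triangle'\cap\pi$ would have to avoid $\triangle$, and a more careful argument would be needed. The cleaner route is the following. The point $p'$ minimizes $\dist(\cdot,\triangle)$ over $\triangle'$, and near $p'$ this function coincides with the affine function $\dist(\cdot,\pi)$, since the projection of every nearby point lands in the interior of $\triangle$. The affine function $\dist(\cdot,\pi)$ restricted to the face of $\triangle'$ containing $p'$ in its relative interior is minimized at an interior point, so it is constant along that face. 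We can therefore slide $p'$ along that face (an edge or $\triangle'$ itself) until it reaches a vertex of $\triangle'$, or until its projection reaches $\bd\triangle$, without changing the distance. This produces another closest pair that falls into case (b) or (c).

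The same argument applies symmetrically to $p'$. This yields the lemma in the form ``some closest pair satisfies (a), (b), or (c).'' That form is what the data structure needs.

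The main obstacle is this degenerate situation. There, the closest pair need not be unique, so the statement must be read as asserting that some closest pair has the stated form. The step I expect to be hardest is making the sliding argument rigorous. It requires showing that, as $p'$ slides, the projection of $p'$ continues to determine the distance up to the moment it hits $\bd\triangle$ or $p'$ reaches a vertex. I would handle this by using the convexity of $\dist(\cdot,\triangle)$ over $\triangle'$, so that the set of minimizers is a convex set, and then taking an extreme point of that minimizer set. An extreme point lies on a lower-dimensional face, which gives (b) or (c), or (a) after applying the symmetric argument.
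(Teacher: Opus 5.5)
The paper gives no proof of this lemma (it only says the proof is straightforward), so there is no route to compare against. Your argument is the natural elementary one and is essentially correct, and part (i) is fine as written. In part (ii) you are right that the statement must be read existentially. For example, if $\triangle$ is horizontal and $\triangle'=\triangle+(0,0,1)$, every pair $(x,x+(0,0,1))$ with $x$ interior to $\triangle$ is a closest pair and violates (a)--(c). The existential form is all the query procedure in \secref{Triangle-NN} needs, since it takes a minimum over the three candidate types.

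Two steps need repair.

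\textbf{The sliding step.} It does not always end in (b) or (c). Suppose the face carrying $p'$ is $\triangle'$ itself and the projection of $p'$ hits the relative interior of an edge of $\triangle$. Then the new pair has $p$ on an edge and $p'$ still interior to $\triangle'$, which is none of (a)--(c). (If the projection hits a vertex of $\triangle$ you are in (a), not (b).) You must slide again, now moving $p$ along its edge, which the first-order conditions make parallel to $\pi'$. The process terminates because every slide either ends at a vertex or strictly lowers the sum of the dimensions of the faces carrying $p$ and $p'$.

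\textbf{The extreme-point version.} As you state it, it is false. Let $\triangle$ be a small triangle at height $1$ above the interior of a large horizontal $\triangle'$. The minimizers of $\dist(\cdot,\triangle)$ over $\triangle'$ form the vertical projection of $\triangle$, whose extreme points lie in the interior of $\triangle'$. The correct version takes an extreme point $(p,p')$ of the compact convex set of all closest pairs in $\triangle\times\triangle'$.
\begin{itemize}
\item Suppose $p$ is interior to $\triangle$ and $p'$ lies in the relative interior of a face $F'$ of positive dimension. Then $p'-p$ is a nonzero normal of $\pi$ that is orthogonal to $F'$, so $F'$ is parallel to $\pi$.
\item For a small vector $v$ parallel to $F'$, both $(p\pm v,p'\pm v)$ are then closest pairs, contradicting extremality. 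Hence $p'$ is a vertex.
\item Symmetrically, if $p'$ is interior to $\triangle'$, then $p$ is a vertex.
\end{itemize}
This yields (a), (b), or (c) directly, with no iteration.

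Finally, drop the aborted remark that $\triangle'$ does not cross $\pi$. It happens to be true here, but you give no justification, and the local argument you switch to does not need it.
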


\paragraph{Overall data structure.}
Using \lemref{tri-NN-cond}, we construct five separate data structures, one for each case of the lemma
(note that these structures are different from the previous structures that have been given the same names).

\begin{itemize}
\item[$\Psi_1$:] The goal here is to preprocess $\DDelta$ into a data 
structure that, for a query segment $e$, returns a triangle of $\DDelta$ intersecting $e$, if one exists. 
Using a $4$-dimensional semi-algebraic range searching data structure~\cite{AM93,AMS,AAEZ}, 
for a storage parameter $s\in [n,n^4]$, a query can be answered in $O^*(n/s^{1/4})$ time using 
$O^*(s)$ space and expected preprocessing cost. (Roughly, a segment $e$ intersects a triangle 
$\triangle$ iff (i) the endpoints of $e$ lie on different sides of the plane supporting $\triangle$,
and (ii) the oriented line supporting $e$ has the same (positive or negative) orientation
with respect to the three lines supporting the edges of $\triangle$ and cyclically oriented around $\triangle$.
Each of these constraints can be restated as a range searching task, where in (i) we face range searching
where both input and query objects have parametric dimensions $3$, and in (ii) these
dimensions are both $4$.)

\item[$\Psi_2$:] The goal here is to preprocess $\E$, the set of edges of the triangles of $\DDelta$, into a data 
structure that, for a query triangle $\tau$, returns a segment $e$ of $\E$ intersecting $\tau$,
if one exists. Again, using a $4$-dimensional multi-level semi-algebraic range searching 
data structure, with query objects of parametric dimension $4$~\cite{AM93,AMS,AAEZ}, 
for a storage parameter $s\in [n,n^4]$, a query can be answered in $O^*(n/s^{1/4})$ time using $O^*(s)$ space
and expected preprocessing time. (Symmetric but similar considerations to those used for $\Psi_1$ apply here too.)

\item[$\Psi_3$:] The goal here is to preprocess $\DDelta$ into a data structure that, for a query point $q$, returns 
$\NN(q,\DDelta)$. Using \thmref{NNPT}, for a storage parameter $s\in [n,n^3]$, 
such a structure, with $O^*(s)$ storage and expected preprocessing time, can be constructed so that
a query can be answered in $O^*(n/s^{1/3})$ time.

\item[$\Psi_4$:] The goal here is to preprocess $P$, the set of vertices of the triangles of $\DDelta$, 
into a data structure that, for a query triangle $\tau$, returns $\NN(\tau,P)$. Using \thmref{tr-pt},
for a storage parameter $s\in [n,n^4]$, 
such a structure, with $O^*(s)$ storage and expected preprocessing time, can be constructed so that
		a query can be answered in $O^*((n/s^{1/4})^{8/9})$ time.

\item[$\Psi_5$:] The goal here is to preprocess $\E$, the set of edges of the triangles of $\DDelta$, 
into a data structure that, for a query segment $g$, returns $\NN(g,\E)$. 
Using \thmref{seg2segs}, for a storage parameter $s\in [n,n^5]$, 
such a structure, with $O^*(s)$ storage and expected preprocessing time, can be constructed so that
a query can be answered in $O^*( (n/s^{1/5})^{15/16}+(n/s^{1/4})^{16/15})$ time.
\end{itemize}

We note that for $\Psi_1$, and $\Psi_2$, we can achieve better performance, using the mechanisms 
in~\cite{AAEKS} and~\cite{ES-22}, but these improvements do not affect the total query bound---see below.

\paragraph{Query procedure.} 
Let $\tau$ be a query triangle. First, by querying $\Psi_1$ with the edges 
of $\tau$, we determine whether an edge of $\tau$ intersects any triangle of $\DDelta$. If so, we 
return such a triangle and stop. Otherwise, we query $\Psi_2$ with $\tau$. If an edge of a 
triangle $\triangle\in \DDelta$ intersects $\tau$, we return $\triangle$ and stop. 

So assume that $\tau$ does not intersect any triangle of $\DDelta$. We first query $\Psi_3$ with the vertices of
$\tau$ and compute the triangle $\triangle_1$ that is closest to a vertex of $\tau$. Next, we query $\Psi_4$ with
$\tau$ and compute $p_\tau = \NN(\tau,P)$. Let $\triangle_2$ be the triangle of $\DDelta$ that contains $p_\tau$.
Finally, we query $\Psi_5$ with the edges of $\tau$ and compute the segment (triangle edge) $e_\tau$ 
of $\E$ that is nearest to an edge of $\tau$ (where the distance is attained at interior points of 
both segments). Let $\triangle_3\in\DDelta$ be the 
triangle that contains $e_\tau$. Among $\triangle_1, \triangle_2, \triangle_3$, we return the one that is closest to $\tau$. 
The overall query time is $O^*( (n/s^{1/5})^{15/16}+(n/s^{1/4})^{16/15})$; this is the cost of a 
query on $\Psi_5$, which dominates the cost of all the other queries, as is easily verified. 
Hence, we obtain the following:
\begin{theorem} 
	\label{thm:tri2tris}
Let $\DDelta$ be a set of $n$ triangles in $\reals^3$, and let $s\in[n, n^5]$ be a storage parameter.
$\DDelta$ can be preprocessed, in randomized expected time $O^*(s)$, into a data structure
of size $O^*(s)$, so that, for a query triangle $\tau$, $\NN(\tau,\DDelta)$ can be computed in 
$O^*( (n/s^{1/5})^{15/16}+(n/s^{1/4})^{16/15})$ time.
\end{theorem}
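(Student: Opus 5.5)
The plan is to assemble \thmref{tri2tris} from the five structures $\Psi_1,\ldots,\Psi_5$ just described. I would argue correctness of the query procedure via \lemref{tri-NN-cond}, and then bound storage and query cost by summing the bounds of the components.

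\emph{Correctness.} If $\tau$ meets some $\triangle\in\DDelta$, then by part (i) of \lemref{tri-NN-cond} either an edge of $\tau$ meets $\triangle$, which $\Psi_1$ detects when queried with the three edges of $\tau$, or an edge of $\triangle$ meets $\tau$, which $\Psi_2$ detects. In either case distance $0$ is attained and the returned triangle is a nearest neighbor.

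Otherwise, let $\triangle^*$ be the true nearest neighbor, with closest pair $(p',p)\in\tau\times\triangle^*$, and apply part (ii) of the lemma.
\begin{itemize}
\item[(a)] If $p'$ is a vertex of $\tau$, then $\dist(\tau,\triangle^*)=\dist(p',\triangle^*)\ge\dist(p',\triangle_1)\ge\dist(\tau,\triangle_1)$, where $\triangle_1$ is the triangle returned by $\Psi_3$.
\item[(b)] If $p$ is a vertex of $\triangle^*$, the symmetric argument applies with $\Psi_4$ and $\triangle_2$.
\item[(c)] If $p'$ and $p$ lie in the relative interiors of edges $g$ of $\tau$ and $e$ of $\triangle^*$, then $\dist(\tau,\triangle^*)=\dist(g,e)\ge\dist(g,e_\tau)\ge\dist(\tau,\triangle_3)$, since $\Psi_5$ returns the true nearest edge of $\E$ to $g$ by \thmref{seg2segs}.
\end{itemize}
Each returned candidate is a genuine triangle whose computed distance upper-bounds its true distance to $\tau$, so returning the minimum over $\triangle_1,\triangle_2,\triangle_3$ is correct.

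\emph{Complexity.} I would instantiate each $\Psi_i$ with storage parameter $\min(s,n^{k_i})$, where $n^{k_i}$ is that structure's maximal storage. Total storage and expected preprocessing are then $O^*(s)$. The query cost is a sum of five terms:
\begin{itemize}
\item $O^*(n/s^{1/4})$ for $\Psi_1$ and $\Psi_2$ (times three edge queries for $\Psi_1$);
\item $O^*(n/s^{1/3})$ for $\Psi_3$;
\item $O^*((n/s^{1/4})^{8/9})$ for $\Psi_4$;
\item the $\Psi_5$ bound, for three edge queries.
\end{itemize}
Once $s$ exceeds the range of a component, its cost is $O^*(1)$ and can be ignored.

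The main obstacle is checking that every term is dominated by $(n/s^{1/5})^{15/16}+(n/s^{1/4})^{16/15}$. For $s\le n^4$ we have $n/s^{1/4}\ge1$, so $(n/s^{1/4})^{8/9}\le n/s^{1/4}\le(n/s^{1/4})^{16/15}$. Also $n/s^{1/3}\le n/s^{1/4}$, which handles $\Psi_3$ on $[n,n^3]$, beyond which its cost is $O^*(1)$. For $s\in[n^4,n^5]$ the $\Psi_1$, $\Psi_2$ and $\Psi_4$ terms are $O^*(1)$ and are absorbed. So the $\Psi_5$ term dominates throughout, which gives the stated bound.
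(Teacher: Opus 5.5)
Your proposal is correct and matches the paper's proof: it uses the same five structures $\Psi_1,\dots,\Psi_5$ and the same query order justified by \lemref{tri-NN-cond}, and it rests on the observation that the $\Psi_5$ term dominates, which you verify explicitly where the paper only says it is easily checked. One small nit: in cases (a) and (c) the middle term should be the distance of the minimizing vertex--triangle (resp.\ edge--edge) pair, e.g.\ $\dist(v_1,\triangle_1)$ for the vertex $v_1$ at which $\triangle_1$ was found, rather than $\dist(p',\triangle_1)$; alternatively, keep all per-vertex and per-edge answers as candidates, and the conclusion is unaffected either way.
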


\section{Bichromatic Closest Pair between Points and Lines/Triangles}
\label{sec:ann}

Finally, we consider the following offline variant of NN searching in $\reals^3$. 
Let $L$ be a set of $n$ lines and $P$ a set of $m$ points in $\reals^3$.
We consider the problem of computing the closest line-point pair in $L\times P$, i.e., the pair $(\ell,p)$ 
that minimizes $\dist(\ell,p)$. We denote this minimum distance as $\dist(L,P)$.
Obviously, we can use \thmref{n4} to solve this problem: choose an appropriate value of the storage parameter,
preprocess $P$ into a data structure, using which we can answer NN queries, compute $\NN(\ell,P)$ for all 
lines $\ell \in L$, and return the closest pair among them. 
The total expected time of this algorithm can easily be shown to be $O^*(m^{4/7}n^{6/7}+m+n)$.
However, we can do better, by combining the point-enclosure technique reviewed in \secref{prelim} 
with the techniques in \secref{plnn} (rather than the machinery of \secref{Fast_NN}). 
Furthermore, our technique works, within the same asymptotic bound, also when we replace lines 
with segments or pairwise-disjoint triangles. For simplicity, we describe the algorithm for lines.

First, we note that it suffices to describe an algorithm for the decision problem, namely, given a 
parameter $r_0$, determine whether $\dist(P,L) \le r_0$. We can then plug the decision algorithm into
Chan's technique~\cite{Chan99} to compute the closest pair within the same asymptotic runtime. 
Without loss of generality assume, for simplicity, that $r_0=1$. Let $S$ be the set of balls 
of radius $r_0$, centered at 
the points of $P$. We wish to determine whether $S\times L$ contains an intersecting ball-line pair. 
We describe two algorithms. The first one is efficient for $m\le n^{1/3}$, and the second one,
which uses the first algorithm as a subroutine, is efficient for all values of $m$ and $n$.

\paragraph{The unbalanced case.}
Here we use the techniques from \secref{plnn}. First assume that $S$ and $L$ satisfy conditions 
(C1) and (C$2^+$), as stated in \secref{plnn} (handling the case of (C$2^-$) is done in a fully 
symmetric manner), namely, for every pair $(\sigma,\ell) \in S\times L$ their 
$xy$-projections $\sigma^\downarrow$ and $\ell^\downarrow$ 
intersect and $\ell$ lies below $\sigma$, in the sense defined in \secref{plnn}. As described there, 
each line $\ell\in L$ can be mapped to a point 
$\ell^* = (\ell_a,\ell_b,\ell_c,\ell_d) \in\Lspace$ and each ball $\sigma\in S$ can be mapped to 
a trivariate function $f_\sigma(a,b,c)$ such that $\ell$ intersects $\sigma$ if and only if 
$\ell_d \ge f_\sigma(\ell_a,\ell_b,\ell_c)$. 
Set $\F = \{f_\sigma\mid \sigma \in S\}$; set $\F^+$ to be the set of regions $f_\sigma^+$,
each consisting of the points on or above the graph of the corresponding function $f_\sigma$;
set $L^* = \{ \ell^* \mid \ell \in L\}$. 
$S\times L$ contains an intersecting pair iff there is a point $\ell^* \in L^*$ that lies above the 
lower envelope $\E_\F$ of $\F$. Applying \lemref{enc}(ii) to the set $L^*$ of $n$ points and the set
$\F^+$ of $m$ regions in $\reals^4$, the latter condition can be detected in 
$O^*(m^3+n)$ expected time. Using a standard batching technique (of partitioning $S$ into subsets, 
each of size at most $n^{1/3}$), the expected run time can be improved to $O^*(mn^{2/3}+n)$. 

Next, assume that $S$ and $L$ satisfy (C1). We describe a cutting-based 
divide-and-conquer procedure for detecting an intersection between a ball of $S$ and a line of $L$.
For a ball $\sigma\in S$, let $h_\sigma$ be the plane in $\reals^3$ dual to the center 
of $\sigma$ (which is a point of $P$).\footnote{%
  Recall that, in a standard duality that preserves the above / below relation, 
  the dual of a point $p=(p_1,\ldots,p_d) \in \reals^d$ is the hyperplane 
  $p^*: x_d=p_1x_1+\cdots+p_{d-1}x_{d-1}-p_d$ and vice versa; see~\cite{dBCKO-08}.}
Let $H(S) = \{ h_\sigma \mid \sigma \in S\}$. For a line $\ell\in L$, let $\pi_\ell$ be, as defined in \secref{plnn}, the 
plane containing $\ell$ and orthogonal to the vertical plane passing through $\ell$,
and let $\pi_\ell^*\in\reals^3$ be the point dual to $\pi_\ell$. By duality, 
the center of $\sigma$ lies above (resp., below) $\pi_\ell$ 
(i.e., $\ell$ lies below (resp., above) $\sigma$) if and only if 
$\pi_\ell^*$ lies in the halfspace $h_\sigma^+$ (resp., $h_\sigma^-$), the 
halfspace lying above (resp., below) $h_\sigma$.
We choose a sufficiently large constant $t>1$ and compute a $(1/t)$-cutting $\Delta$ of 
size of $O(t^3)$ of $H(S)$. For each cell 
$\triangle\in\Delta$, let $S_\triangle = \{ \sigma\in S \mid h_\sigma\cap\triangle \ne \emptyset\}$,
$S^+_\triangle = \{ \sigma\in S \mid \triangle \subset h_\sigma^+\}$, 
$S^-_\triangle = \{ \sigma\in S \mid \triangle \subset h_\sigma^-\}$, 
and $L_\triangle = \{ \ell\in L \mid \pi_\ell^* \in \triangle \}$. 
Set $m_\triangle = |S_\triangle|$ and $n_\triangle = |L_\triangle|$.
By construction, $m_\triangle \le m/t$ for each $\triangle$, and $\sum_\triangle n_\triangle  = n$.

For each cell $\triangle\in\Delta$, we solve the subproblem $(S_\triangle,L_\triangle)$ recursively.
Every pair in $S^+_\triangle\times L_\triangle$ satisfies (C1) and (C$2^+$), so we detect intersection 
between $S^+_\triangle$ and $L_\triangle$, using the procedure mentioned above, in 
$O^*(n_\triangle^{2/3} m_\triangle+n_\triangle)$ expected time. 
Similarly, every pair in $S^-_\triangle\times L_\triangle$ satisfies (C1) and (C$2^-$), so 
we detect an intersection between $S^-_\triangle$ and $L_\triangle$, using a symmetric 
procedure, in $O^*(n_\triangle^{2/3} m_\triangle +n_\triangle)$ expected time. Summing over 
all cells $\triangle$ of $\DDelta$, using the above bounds on $n_\triangle$ and $m_\triangle$
combined with H\"older's inequality, the total expected time spent in solving these secondary 
subproblems is $O^*(n^{2/3}m+n)$, as is easily verified.

The recursion terminates at subproblems where $m_\triangle = O(1)$, and the problem is then solved,
in brute force, in $O(n_\triangle)$ time.
Let $T(m,n)$ be the maximum expected time taken by the procedure with $m$ input balls and $n$ 
input lines. Then we obtain the following recurrence:
\[ 
	T(m,n) \le \sum_{\triangle\in\Delta} T(m_\triangle, n_\triangle) + O^*(n^{2/3}m+n) ,
\]
where $m_\triangle \le m/t$ for all $\triangle \in \Delta$ and $\sum_\triangle n_\triangle \le n$. 
As in the earlier contexts, using induction, H\"older's inequality, and the fact that
the number of cells at each recursive instance is $O(t^3)$, it can be verified that the solution 
of the above recurrence is $O^*(n^{2/3}m+n)$.

We now let $S$ and $L$ be arbitrary 
sets of balls of radius $r_0$ and of lines in $\reals^3$ (i.e., they may not satisfy (C1) or (C2)).
We again use a cutting-based divide-and-conquer approach (but in two dimensions this time, again
using a sufficiently large parameter for the cutting),
where at each recursive step we generate $O(1)$ subproblems that satisfy (C1), and we use 
the above algorithm to solve each of them.
Let $S^\downarrow=\{\sigma^\downarrow \mid \sigma\in S\}$ 
and $L^\downarrow = \{\ell^\downarrow \mid \ell \in L\}$ be the 
$xy$-projections of  $S$ and $L$, respectively. For a line 
$\ell^\downarrow\in L^\downarrow$, let $\left(\ell^\downarrow\right)^*\in\reals^2$ be the point dual to 
$\ell^\downarrow$, and for a unit disk 
$\sigma^\downarrow \in S^\downarrow$, let $\left(\sigma^\downarrow\right)^* \subset\reals^2$ be the dual of 
$\sigma^\downarrow$, which is a region lying between two hyperbolic arcs, such 
that a line $\lambda$ in the plane intersects $\sigma^\downarrow$ if 
and only if $\lambda^*\in\left(\sigma^\downarrow\right)^*$. Set 
$S^* = \{\left(\sigma^\downarrow\right)^* \mid \sigma \in S\}$.

The algorithm proceeds as follows: We choose a sufficiently large constant $r>1$, 
compute a $(1/r)$-cutting 
$\Xi$ of $S^*$ of size $O(r^2)$~\cite{dBS-95}. For each cell $\cell\in\Xi$,
let $S_\cell = \{\sigma \in S \mid \bd\left(\sigma^\downarrow\right)^* \cap \cell \ne \emptyset\}$, 
$\Sigma_\cell = \{ \sigma \in S \mid \cell \subseteq \left(\sigma^\downarrow\right)^*\}$, and 
$L_\cell = \{ \ell \in L \mid \left(\ell^\downarrow\right)^* \in \cell\}$; put $n_\cell = |L_\cell|$. 
By construction, $|S_\cell| \le m/r$ and $\sum_\cell n_\cell = \sum_\cell |L_\cell| = n$. For each cell 
$\cell\in\Xi$, we solve the (top-level) subproblem for 
$(S_\cell, L_\cell)$ recursively. Since $\Sigma_\cell$ and $L_\cell$ satisfy (C1), we use the aforementioned 
divide-and-conquer procedure to detect an intersection between 
$\Sigma_\cell$ and $L_\cell$ in expected time $O^*(n^{2/3}_\cell m+n_\cell)$. Note that 
no ball in $S \setminus (S_\cell \cup \Sigma_\cell)$ can 
intersect a line of $L_\cell$ (as their $xy$-projections do not intersect), so the 
above two subproblems (over all cells $\cell$ of $\Xi$) detect 
all potential intersections between $S$ and $L$.

The total expected time spent in solving the secondary subproblems is, arguing as above, $O^*(n^{2/3}m+n)$. 
Hence, following a similar analysis as above, one can conclude that the total expected time 
of the overall algorithm is also $O^*(n^{2/3}m+n)$.

\begin{lemma} 
\label{lem:cpx}
The closest pair between a set of $m$ points in $\reals^3$ and a set of $n$ lines in $\reals^3$ 
	can be computed in $O^*(mn^{2/3} + n)$ randomized expected time.
\end{lemma}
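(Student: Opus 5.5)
The plan is to assemble Lemma~\ref{lem:cpx} from the decision procedure just described, together with Chan's randomized optimization technique~\cite{Chan99}. First, I will argue the decision version. Given $r_0$ (rescaled to $1$), let $S$ be the balls of radius $r_0$ centered at the points of $P$. We must decide whether some line of $L$ meets some ball of $S$.

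The decision algorithm is built in three layers.

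\emph{Innermost layer.} For a family satisfying (C1) and (C$2^+$), Lemma~\ref{lem:lower} shows that $\ell$ meets $\sigma$ iff $\ell^*$ lies in the region $f_\sigma^+$ above the graph of a constant-complexity trivariate function. Lemma~\ref{lem:enc}(ii) builds, in $O^*(m^3)$ expected time, a structure that reports $\EE_\F(\ell_a,\ell_b,\ell_c)$ in $O(\log m)$ time. Querying each $\ell^*$ against it gives $O^*(m^3+n)$. Splitting $S$ into $\lceil m/n^{1/3}\rceil$ groups of size at most $n^{1/3}$ then yields $O^*(mn^{2/3}+n)$. The (C$2^-$) case is symmetric.

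\emph{Middle layer.} Under (C1) alone, I recurse on a constant-size $(1/t)$-cutting of the dual planes $h_\sigma$. Each cell $\triangle$ yields:
\begin{itemize}
\item two subproblems of the previous type, on $S^\pm_\triangle\times L_\triangle$;
\item one recursive subproblem with $m_\triangle\le m/t$ and $\sum_\triangle n_\triangle=n$.
\end{itemize}

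\emph{Outer layer.} For general $S$ and $L$, a $(1/r)$-cutting of the dual hyperbolic regions $(\sigma^\downarrow)^*$ in the plane splits the problem the same way. Balls whose dual region fully contains the cell satisfy (C1) with $L_\cell$, and are handed to the middle layer. Balls crossing the cell are recursed on. All other balls cannot meet $L_\cell$, since their $xy$-projections miss the projections of the lines in $L_\cell$.

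The cost analysis handles both recurrences the same way, with the form $T(m,n)\le\sum_\triangle T(m/t,n_\triangle)+O^*(mn^{2/3}+n)$ and $O(t^3)$ (resp.\ $O(r^2)$) cells. The key step is bounding the secondary cost at a cell by $(m/t)\,n_\triangle^{2/3}+n_\triangle$ in the inner recurrence. By H\"older, $\sum_\triangle n_\triangle^{2/3}\le O(t^3)^{1/3}n^{2/3}=O(t)\,n^{2/3}$, so each level costs $O(mn^{2/3}+n)$. The recursion has $O(\log m)$ levels, which contributes only a polylogarithmic factor, absorbed by $O^*$. I must also check that the additive $n$ terms sum to $n$ per level.

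In the outer recurrence, $\Sigma_\cell$ may have size $m$, not $m/r$. The cost per cell is then $m\,n_\cell^{2/3}$, and $\sum_\cell m\,n_\cell^{2/3}\le m\,O(r^2)^{1/3}n^{2/3}$. With $r$ constant this is still $O(mn^{2/3})$ per level. Over $O(\log m)$ levels the factors multiply to $r^{O(\log m)}=m^{O(\log r/\log r)}$-type growth. This is only subpolynomial if I instead argue via the standard induction hypothesis $T(m,n)\le C\,m^{1+\eps}n^{2/3}$, where the recursive terms shrink by $t^{-(1+\eps)}$ against an $O(t^{3-2})$ blowup. I expect this bookkeeping, ensuring the per-level constant does not compound, to be the main obstacle. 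I would resolve it by the $m^{\eps}$-slack induction and by taking $r,t$ large relative to $1/\eps$.

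Finally, to pass from decision to optimization I apply Chan's technique~\cite{Chan99}. The decision problem decomposes into a constant number of subproblems of geometrically smaller size, namely the cutting-induced partition of the pairs. Chan's randomized framework then computes $\dist(L,P)$ in the same expected time as the decision procedure, since the bound $mn^{2/3}+n$ is superlinear and the subproblem sizes shrink by a constant factor. This gives $O^*(mn^{2/3}+n)$ expected time. The $O(m)$ term needed to read the points is dominated because $m\le mn^{2/3}$.
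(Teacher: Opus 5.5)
Your decision procedure is the paper's: the batched lower-envelope test under (C1) and (C$2^\pm$), the constant-size cutting of the dual planes $h_\sigma$, and the planar cutting of the regions $(\sigma^\downarrow)^*$. The $m^{1+\eps}$-slack induction you end with is also the right way to close the recurrences. Two points in your accounting are off, though neither breaks the bound.

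\emph{Secondary sets.} In the middle layer the sets $S^\pm_\triangle$ can contain $\Theta(m)$ balls; only the crossing set $S_\triangle$ is bounded by $m/t$. The paper's sketch also writes $m_\triangle$ here. This is harmless, since H\"older still gives $O(t)\,mn^{2/3}+O(n)$ per node.

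\emph{Where compounding happens.} The growth you attribute to the outer recursion actually occurs in the middle one. At depth $j$ there are up to $(ct^3)^j$ nodes, each with at most $m/t^j$ balls, so that level costs about $c^{j/3}mn^{2/3}$. Summed over the levels this is an $m^{O(1/\log t)}$ factor, not a polylogarithmic one. In the outer recursion, $|\Sigma_\cell|$ is bounded by the current node's ball count $m/r^j$, not by the original $m$. H\"older over $O(r^2)$ cells costs only $r^{2/3}$ against a shrinkage of $r$, so those level costs decay geometrically.

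The genuine error is in the passage to optimization: the decomposition you hand to Chan's technique is not a valid one. The outer cutting is built from the disks of radius $r_0$, so it depends on the threshold. It also discards the pairs in $(S\setminus(S_\cell\cup\Sigma_\cell))\times L_\cell$, and these can realize $\dist(P,L)$ when that distance exceeds $r_0$. Chan's framework needs a fixed, threshold-independent family of subproblems whose optimal values have minimum exactly $\dist(P,L)$, each smaller by a constant factor. Your family fails both requirements: besides the threshold dependence, a subproblem $(\Sigma_\cell,L_\cell)$ can keep all $m$ balls and all $n$ lines.

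Use the trivial split instead. Partition $P$ and $L$ into $k$ groups each, so that $\dist(P,L)=\min_{i,j}\dist(P_i,L_j)$. With $D(m,n)=mn^{2/3}+n$, each of the $k^2$ subproblems has decision cost $D(m/k,n/k)\le D(m,n)/k$. Processing them in random order, only $O(\log k)$ are expected to be solved recursively. Hence, for a sufficiently large constant $k$, the optimization runs in $O(D(m,n))$ expected time, up to the same subpolynomial factors.
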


Following the approach outlined in \secref{plnn}, the above algorithm can be extended to the cases 
where the set $L$ of lines is replaced with a set of segments or a set of pairwise disjoint 
triangles in $\reals^3$. The only difference is that one has to perform a few more levels of 
divide-and-conquer, to reduce the case of segments to the case of lines, and reduce the case
of triangles to the case of segments. We omit the straightforward, albeit somewhat tedious, 
details from here and conclude the following:

\begin{lemma} 
\label{lem:bcp-tri}
The closest pair between a set of $m$ points in $\reals^3$ and a set of $n$ segments or triangles 
in $\reals^3$ can be computed in $O^*(mn^{2/3} + n)$ randomized expected time.
\end{lemma}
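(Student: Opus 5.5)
We reduce the segment and triangle cases to the line case of \lemref{cpx} with a few extra levels of cutting-based divide-and-conquer, one for each additional constraint; each level costs only an $O^*(1)$ factor. As there, it suffices to solve the decision problem: given $r_0=1$, decide whether some unit ball centered at a point of $P$ meets some segment or triangle. Chan's technique~\cite{Chan99} then turns the decision procedure into an optimization procedure with the same expected time.

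\emph{Segments.} For a segment $e$ with supporting line $\ell_e$, let $W_e$ be the slab bounded by the two planes through its endpoints orthogonal to $e$. We have $\dist(p,e)\le 1$ iff either (a) some endpoint of $e$ lies within distance $1$ of $p$, or (b) $p\in W_e$ and $\dist(p,\ell_e)\le 1$.

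Case (a) is a point--point bichromatic closest-pair problem in $\reals^3$. It can be solved in $O^*(m^{2/3}n^{2/3}+m+n)$ expected time, and this is at most $O^*(mn^{2/3}+n)$.

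For case (b), the condition $p\in W_e$ is the conjunction of two halfspace constraints. For each, $p$ is a point and the bounding plane is dual to a point, and each is a linear condition in $p$. We impose them with two levels of cutting-based divide-and-conquer in the dual space of these planes, exactly as the $(1/t)$-cutting of $H(S)$ was used in the proof of \lemref{cpx}. We cut the arrangement of the dual planes of the $m$ points of $P$ into $O(t^3)$ cells. Each segment goes to the cell containing its dual point. For the points whose planes avoid a cell entirely, the constraint is resolved in the same way for all segments of that cell, so they pass to the next level; the points whose planes cross the cell are handled recursively with $m/t$ points. At the bottom, all surviving pairs satisfy $p\in W_e$, and we run the line algorithm of \lemref{cpx} on the lines $\ell_e$. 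The recurrence has the same form as before: $T(m,n)\le\sum_\triangle T(m/t,n_\triangle)+O^*(n^{2/3}m+n)$ with $\sum n_\triangle\le n$. By H\"older's inequality ($\sum_\triangle n_\triangle^{2/3}\le O(t^{3})^{1/3}n^{2/3}$) it solves to $O^*(mn^{2/3}+n)$, with the constant $t^{O(1)}$ absorbed into $O^*$.

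\emph{Triangles.} By the analogue of \lemref{tri-NN-cond} for a point, $\dist(p,\triangle)\le1$ iff either (a) some edge of $\triangle$ is within distance $1$ of $p$, or (b) $p$ lies in the prism $W_\triangle$ and $\dist(p,\pi_\triangle)\le1$. Case (a) is the segment case just handled, applied to the $3n$ edges. In case (b) the prism condition is three halfspace constraints in $p$, which we peel off with three levels of the same cutting-based divide-and-conquer. After that we must decide whether some point of $P$ is within distance $1$ of some plane $\pi_\triangle$, i.e.\ lies in a slab of width $2$. This is again a halfspace (point/dual-point) problem, solvable within the bound, for instance in $O^*(m^{3/4}n^{3/4}+m+n)$ time or more simply by two more halfspace levels.

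The hypothesis that the triangles are pairwise disjoint is not really needed for the decision problem. It is used only to make the reduction to edges clean and consistent with the segment case; we will note where it is invoked.

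The main obstacle is bookkeeping. We must check that each added halfspace level, whose subproblems have sizes $(m/t,n_\triangle)$, preserves the form $O^*(mn^{2/3}+n)$: the term is linear in $m$ and concave in $n$, and only $t^{O(1)}$ constants accumulate, once per level, over $O(\log m)$ depths. We must also check that the secondary line or plane subproblems can be handed the correct subset of balls.
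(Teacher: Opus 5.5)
Your proposal is correct and follows the route the paper only sketches: a constant number of extra cutting-based divide-and-conquer levels impose the slab/prism halfspace constraints, with the endpoint/edge cases handled separately. These bottom out in the line algorithm of \lemref{cpx} for segments and in a point--slab test for triangles, with costs recombined by H\"older's inequality as in \lemref{cpx}. One small detail to add: before each halfspace level, split the objects according to whether their halfspace lies above or below its bounding plane, since only then does a cell avoided by the dual plane $h_p$ resolve the constraint the same way for all objects assigned to that cell.
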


\paragraph{The general case.}
The bound in the lemma is optimal when $m\le n^{1/3}$. We next consider the case of larger 
values of $m$. Let $r>1$ be a parameter (not a 
constant anymore) to be chosen later. For a line $\ell \in L$, let $C_\ell$ be the cylinder of 
radius $1$
with $\ell$ as its axis, and set $\C=\{C_\ell \mid \ell \in L\}$. We compute 
a $(1/r)$-cutting $\Xi$ of the complement of the union of $\C$,
of size $O^*(r^2)$ (see \cite{VD1} and \cite{AS-00, Ezra-11}), along with the conflict list of each cell, 
using \lemref{cuttings}. For each point $p \in P$, we also compute the cell of $\Xi$ containing $p$. 
If $p$ does not lie in any cell of $\Xi$, then we conclude that $p\in\U(\C)$ (in fact, $p$ lies inside 
the union of the sample of $\C$ that we uses to construct $\Xi$), i.e.,
$\dist(p,L) < 1$, and we stop the procedure. So assume that every point $p\in P$ lies in some cell of $\Xi$. 
For each cell $\cell\in\Xi$, let $L_\cell = \{ \ell \mid \bd C_\ell \cap \cell \ne\emptyset\}$ 
be the set of lines whose cylinder boundaries cross $\cell$, and let
$P_\cell = P \cap \cell$; set $n_\cell = |L_\cell|$ and $m_\cell = |P_\cell|$. 
The total expected time spent in this step is $O^*(nr+m)$; see \lemref{cuttings}.
For each cell $\cell\in\Xi$, we test, in 
$O^*(m_\cell n_\cell^{2/3} + n_\cell) = O(m_\cell(n/r)^{2/3} + n/r)$ expected time,  
using the above algorithm, whether $\dist(P_\cell,L_\cell) \le 1$. If any of the subproblems 
returns yes, we conclude that $\dist(P,L) \le 1$; otherwise 
$\dist(P,L) > 1$. The total expected time of the algorithm is 
\[ 
\sum_{\cell\in\Xi} O^* \left (m_\cell \left(\frac{n}{r}\right)^{2/3} + 
\frac{n}{r} \right ) = O \left (m \left(\frac{n}{r}\right )^{2/3} + nr \right ) .
\]
To balance the terms, we choose $r=\ceil{m^{3/5}/n^{1/5}}$. This choice makes sense when
$1\le r\le n$, that is, when $n^{1/3}\le m\le n^2$. When $m < n^{1/3}$, we use the weaker
algorithm of Lemma~\ref{lem:cpx}, and the overall bound is $O^*(n)$. When $m > n^2$, we 
replace the cutting by the entire arrangement, and apply the same procedure, which is now trivial
within each cell of the arrangement, and the running time is $O^*(n^2+m) = O^*(m)$.
For $m$ in the middle range, the bound is $O^*(m^{3/5}n^{4/5})$. Overall,
the expected runtime is $O^*(m^{3/5}n^{4/5} + m + n)$. Hence, we conclude the following:

\begin{theorem} \label{thm:bcp}
The closest pair between a set of $m$ points in $\reals^3$ and a set of $n$ lines in $\reals^3$ 
can be computed in $O^*(m^{3/5}n^{4/5} + m + n)$ randomized expected time.
\end{theorem}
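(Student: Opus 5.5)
The plan is to assemble the pieces already developed in this section. First, a decision procedure; then Chan's randomized optimization technique~\cite{Chan99} turns it into an algorithm for the closest pair with the same expected running time. Since the decision problem depends only on the ratio between the distances and $r_0$, we may rescale so that $r_0=1$. The decision question then becomes whether some point of $P$ lies in the union of the unit-radius cylinders $\C=\{C_\ell\mid \ell\in L\}$. Chan's technique applies because the decision problem decomposes into subproblems of geometrically shrinking size; the cutting-based recursion below provides such a decomposition.

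\textbf{Decision procedure.} Fix a parameter $r\in[1,n]$. Using \lemref{cuttings}, together with the near-quadratic bound $\psi(r)=O^*(r^2)$ on the union complexity of $r$ cylinders~\cite{AS-00,Ezra-11}, compute a $(1/r)$-cutting $\Xi$ of $\C(\C)$. This cutting has size $O^*(r^2)$, comes with its conflict lists, and costs $O^*(nr)$ expected time. Locate every point of $P$ in $\Xi$ via point location in the vertical decomposition, in $O^*(m)$ total time. If some point falls outside all cells, it lies in the union of the sample cylinders, and we answer \textsc{yes}.

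Otherwise each cell $\cell$ has a subproblem $(P_\cell, L_\cell)$ with $|L_\cell|\le n/r$, and these subproblems capture every candidate pair. A cylinder not crossing $\cell$ either contains $\cell$, which is impossible since $\cell$ lies in the complement of the sample union, or is disjoint from $\cell$. So only cylinders crossing $\cell$ matter, and these form the conflict list $L_\cell$. Solve each subproblem with the unbalanced algorithm of \lemref{cpx}, in $O^*(m_\cell (n/r)^{2/3}+n/r)$ expected time. Summing over the $O^*(r^2)$ cells, using $\sum m_\cell = m$, gives
\[
O^*\bigl(m(n/r)^{2/3} + nr + m\bigr).
\]
The cutting-size factor $r^2\cdot (n/r)=nr$ matches the construction cost, so it is absorbed.

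\textbf{Balancing and extreme ranges.} Balance $m(n/r)^{2/3}=nr$, i.e.\ $r^{5/3}=m/n^{1/3}$, which gives $r=m^{3/5}/n^{1/5}$ and a running time of $O^*(m^{3/5}n^{4/5})$. This choice is valid precisely when $1\le r\le n$, that is, when $n^{1/3}\le m\le n^2$.
\begin{itemize}
\item[(i)] For $m<n^{1/3}$, apply \lemref{cpx} directly, which gives $O^*(mn^{2/3}+n)=O^*(n)$.
\item[(ii)] For $m>n^2$, take the full vertical decomposition of $\C(\C)$, of size $O^*(n^2)$ by \lemref{union}(i) (equivalently, use the point-enclosure structure of \lemref{enc}(i)). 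Locate each point in $O(\log n)$ time, for a total of $O^*(n^2+m)=O^*(m)$.
\end{itemize}
Together these cases give $O^*(m^{3/5}n^{4/5}+m+n)$.

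\textbf{Main obstacle.} The only delicate step is justifying \lemref{cpx} itself, i.e.\ the $O^*(mn^{2/3}+n)$ bound in the unbalanced regime. That argument maps lines to points of $\Lspace$ and balls to trivariate functions. It then uses the lower-envelope point-enclosure result \lemref{enc}(ii), with batching of the balls into groups of size $n^{1/3}$, inside the two-stage cutting recursion enforcing (C1) and (C$2^\pm$). The recurrences there must be checked with H\"older's inequality so that the $O^*(\cdot)$ overhead does not compound over the $O(\log n)$ recursion depth. Since that lemma is already established above, the remaining work for the theorem is bookkeeping plus verifying that Chan's technique applies.
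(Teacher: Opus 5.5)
Your outline follows the paper's proof essentially step for step. The shared steps are: the decision version with $r_0=1$ fed to Chan's technique, a $(1/r)$-cutting of the complement of the union of the unit cylinders via \lemref{cuttings}, \lemref{cpx} inside each cell, the choice $r=m^{3/5}/n^{1/5}$ for $n^{1/3}\le m\le n^2$, and the two extreme ranges. Using \lemref{enc}(i) for $m>n^2$ is a clean way to handle the last range.

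However, the correctness argument you add for the cell subproblems is false. You claim that a cylinder not crossing a cell $\tau$ cannot contain it ``since $\tau$ lies in the complement of the sample union.'' But $\tau$ only avoids the \emph{sampled} cylinders, and a non-sampled cylinder $C_\ell$ can contain a small bounded cell $\tau$. If $P_\tau\ne\emptyset$, every point of $P_\tau$ is then within distance $1$ of $\ell$. Yet $\ell\notin L_\tau$ when $L_\tau$ lists only cylinders whose boundaries cross $\tau$, so the procedure as you describe it can incorrectly return \textsc{no}. (The paper defines $L_\tau$ the same way and is silent on this point.)

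The fix is to take $L_\tau$ to be all lines whose cylinders \emph{meet} $\tau$. For a cell of the vertical decomposition of the complement of the union of a random sample, a cylinder containing $\tau$ conflicts with $\tau$ just as a crossing one does: adding it to the sample destroys $\tau$. So the random-sampling argument underlying \lemref{cuttings} bounds this larger list by $O(n/r)$ as well, and the rest of your analysis goes through unchanged.

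A smaller point concerns Chan's technique. The cutting-based subproblems cannot serve as its decomposition, since they depend on $r_0$ and need not preserve the optimum when $\dist(P,L)>r_0$. What Chan's technique needs is a threshold-independent split. For example, partition $P$ and $L$ into $k$ groups each and take the $k^2$ resulting bipartite subproblems. Each has decision cost at most a $1/k$ fraction of the original, and only $O(\log k)$ of them are solved recursively in expectation, so a suitable constant $k$ suffices.
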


Finally, for a segment $e$ in $\reals^3$, let $\cigar_{r_0}(e)$ be the Minkowski sum of $e$ with a 
ball of radius $r_0$ in $\reals^3$, which, as earlier, we refer to as a cigar. It is known that 
the complexity of the union of $n$ cigars of the same radius is $O^*(n^2)$~\cite{AS-00,Ezra-11}. 
Replacing cylinders with cigars and following the above algorithm, we can compute the closest pair 
between a set of $m$ points and $n$ line segments in expected time $O^*(m^{3/5}n^{4/5} + m + n)$. 
Similarly, if $\T$ is a set of $n$ pairwise-disjoint triangles, then the complexity of the union 
of the Minkowski sum of these triangles with a unit ball is also $O^*(n^2)$,\footnote{%
  If the triangles are allowed to intersect, then the complexity of the union can be cubic in the worst case.} 
so the algorithm again extends to this case, and we thus obtain the following:

\begin{theorem} \label{thm:bcp-segtri}
The closest pair between a set of $m$ points in $\reals^3$ and a set of $n$ segments or pairwise-disjoint 
triangles in $\reals^3$ can be computed in $O^*(m^{3/5}n^{4/5} + m + n)$ randomized expected time.
\end{theorem}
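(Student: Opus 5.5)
The plan is to run the algorithm behind \thmref{bcp} essentially verbatim, replacing each cylinder $C_\ell$ by the ``fattened'' input object. As there, we first reduce to the decision problem: for a given $r_0$ (normalized to $1$), decide whether $\dist(P,\T)\le 1$. Chan's randomized optimization technique~\cite{Chan99} then turns the decision procedure into an algorithm computing the closest pair, with the same asymptotic expected running time. For each input segment $e$, let $K_e=\cigar_1(e)$. For each input triangle $\triangle$, let $K_\triangle=\triangle\oplus B_1$. A point $p$ is within distance $1$ of an object exactly when $p$ lies in its fattened region. Set $\C=\{K_\xi\}$.

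The first key step is the union-complexity bound $\psi(n)=O^*(n^2)$ for $\C$. For cigars of equal radius this is the known result~\cite{AS-00,Ezra-11}. For Minkowski sums of pairwise-disjoint triangles with a unit ball, we take the asserted $O^*(n^2)$ bound as given. If it had to be justified, I would decompose each $K_\triangle$ into two slab-like prisms, three cigars around the edges, and three balls. I would then charge union vertices as in the fat-object/good-direction analysis of \lemref{good_directions}, using disjointness to rule out the grid-like cubic constructions mentioned in the Remark after \lemref{cyl-test-set}. I expect this step to be the main obstacle, since disjointness is exactly what prevents cubic complexity. With this bound in hand, \lemref{cuttings} yields a $(1/r)$-cutting $\Xi$ of $\C(\C)$ of size $O^*(r^2)$, together with its conflict lists, in $O^*(nr)$ expected time.

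Next we locate every $p\in P$ in $\Xi$, using $O^*(m)$ total time for constant-degree point location in the cutting. If some $p$ lies in no cell, it lies in the union of the sample, so the answer is {\sc Yes}. Otherwise, for each cell $\cell$ we solve the subproblem $(P\cap\cell,\T_\cell)$, where $\T_\cell$ consists of the objects whose regions' boundaries cross $\cell$. We do this with the unbalanced algorithm of \lemref{bcp-tri}, in $O^*(m_\cell (n/r)^{2/3}+n/r)$ expected time per cell. Correctness follows because an object whose region boundary does not cross $\cell$ either contains $\cell$, which is impossible since $\cell\subseteq\C(\C)$, or is disjoint from it.

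Summing over the $O^*(r^2)$ cells gives $O^*(m(n/r)^{2/3}+nr+m)$. Choosing $r=\ceil{m^{3/5}/n^{1/5}}$ balances the terms to $O^*(m^{3/5}n^{4/5})$ in the range $n^{1/3}\le m\le n^2$. For $m<n^{1/3}$ we apply \lemref{bcp-tri} directly, at cost $O^*(n)$. For $m>n^2$ we use the full decomposition of the complement of the union, at cost $O^*(n^2+m)=O^*(m)$. Together these give the claimed $O^*(m^{3/5}n^{4/5}+m+n)$ bound.
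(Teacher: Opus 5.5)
Your proposal is essentially the paper's argument: replace the cylinders $C_\ell$ by cigars (resp.\ by Minkowski sums of the pairwise-disjoint triangles with a unit ball), invoke the $O^*(n^2)$ union bound, and build the cutting of \lemref{cuttings}. Then run \lemref{bcp-tri} in each cell and balance with $r=\ceil{m^{3/5}/n^{1/5}}$. The paper likewise simply asserts the union bound for the triangle case.

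One justification needs fixing. A sampling-based cutting of the complement of the union has cells contained only in the complement of the union of the \emph{sample} $R$, not in $\C(\C)$; the paper's definition of a cutting explicitly allows cells to extend beyond the covered region. So an object outside $R$ can contain a cell $\cell$ whose boundary it does not cross. With $\T_\cell$ defined by boundary crossings, a point of $P\cap\cell$ inside such an object would be missed. The fix is to let $\T_\cell$ be all objects whose fattened regions meet $\cell$, whether crossing or containing it. The same random-sampling argument still gives $|\T_\cell|\le n/r$, since no object of $R$ meets the interior of $\cell$. With this change your correctness argument goes through. The paper's definition of $L_\cell$ glosses over the same point.
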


\section{Conclusion}

In this paper we presented data structures for answering
a variety of nearest-neighbor queries involving points,
lines, segments, and  
triangles in $\reals^3$
that achieve considerably better performance bounds 
than the previously best known ones. 
Our data structures exploit several geometric 
properties of the underlying problems 
and rely on both combinatorial and algorithmic results on arrangement of surfaces, especially our 
recent results in \cite{VD1} on the complexity 
of the vertical decomposition of several substructures of 
arrangements in three and four dimensions.

We conclude by mentioning a few open problems.
No non-trivial lower bounds are known for nearest-neighbor data structures
involving lines in $\reals^3$, beyond those that follow 
from the known lower bounds on semi-algebraic range searching. For instance, it seems unlikely that the bound in
\thmref{NNLP} can be improved, but the remaining bounds are not known to be optimal.
This raises the question of whether
they
are (almost) best possible, or whether one can obtain data structures with significantly faster performance.
\begin{enumerate}[(i)]
	\item Can the nearest neighbor of a line amid a set of points in $\reals^3$ be computed in $O(\log n)$ time 
		using a data structure with $O^*(n^3)$ storage?  Alternatively, can one prove a nearly $\Omega(n^4)$ lower bound 
		on the storage complexity (for such fast queries), perhaps by adapting the approach in~\cite{AC23}?  
	\item Can the nearest neighbor of a query point in a set 
		of lines in $\reals^3$ be computed in $O^*(n^{1/2})$ time using a linear-size data structure, 
		or in $O(\log n)$ time using a 
		data structure of $O^*(n^2)$ size?
	\item Can the nearest neighbor of a segment amid $n$ segments in $\reals^3$ be computed in $O^*(n^{3/4})$ 
		time using $O^*(n)$ space? 
\end{enumerate}



\end{document}